\documentclass[12pt,english]{article}
\usepackage{amssymb,amsmath,amsthm}
\usepackage{amsfonts}
\usepackage{setspace,bbm}
\usepackage[longnamesfirst,authoryear]{natbib}
\usepackage{appendix}
\usepackage{enumerate}

\setcounter{MaxMatrixCols}{10}

\onehalfspacing
\newtheorem{theorem}{Theorem}

\newtheorem{axiom}[theorem]{Axiom}

\newtheorem{conjecture}[theorem]{Conjecture}
\newtheorem{corollary}[theorem]{Corollary}

\newtheorem{definition}{Definition}
\newtheorem*{definition*}{Definition}
\newtheorem{example}{Example}
\newtheorem*{example*}{Example}
\newtheorem{exercise}[theorem]{Exercise}
\newtheorem{lemma}[theorem]{Lemma}

\newtheorem{proposition}[theorem]{Proposition}
\newtheorem{remark}{Remark}

\newcommand*\bTextWidth{\adjustbox{width=\textwidth,center}\bgroup}
\newcommand*\eTextWidth{\egroup}

\newcommand{\abs}[1]{\left\vert #1\right\vert}

\newcommand{\supp}[1]{\text{supp}\left(#1\right)}

\RequirePackage{graphicx}
\RequirePackage[normalem]{ulem}
\RequirePackage{xcolor}
\RequirePackage{amssymb}
\RequirePackage{amsmath}
\RequirePackage{amsfonts}
\RequirePackage{geometry}
\RequirePackage{xspace} %
\RequirePackage{verbatim}
\RequirePackage{environ}
\RequirePackage{bbm}
\RequirePackage{hyperref}

\newcommand*{\suppX}{\mathcal{S}_X}

    \geometry{nohead,margin=1.25in}

\makeatletter
\renewcommand\subparagraph{\@startsection{subparagraph}{5}{\z@}%
                                     {-3.25ex\@plus -1ex \@minus -.2ex}%
                                     {0.0001pt \@plus .2ex}%
                                     {\normalfont\normalsize\bfseries}}
\makeatother

  \hypersetup{colorlinks=true,anchorcolor=blue,citecolor=black,linkcolor=[rgb]{0,0.0,0.75}} %

\makeatletter

\DeclareFontFamily{U}{mathc}{}
\DeclareFontShape{U}{mathc}{m}{it}%
{<->s*[1.03] mathc10}{}

\DeclareMathAlphabet{\mathscr}{U}{mathc}{m}{it}

\newcommand*\mmm{\mathscr{m}}

\input{named_thm_preamble}
\theoremstyle{namedAssumption}
\newtheorem{assumption}{Assumption}
\numberwithin{assumption}{section}
\numberwithin{example}{section}

\sloppy

\typeout{TCILATEX Macros for Scientific Word and Scientific WorkPlace 5.5 <06 Oct 2005>.}
\typeout{NOTICE:  This macro file is NOT proprietary and may be 
freely copied and distributed.}
\makeatletter

\ifx\pdfoutput\relax\let\pdfoutput=\undefined\fi
\newcount\msipdfoutput
\ifx\pdfoutput\undefined
\else
 \ifcase\pdfoutput
 \else 
    \msipdfoutput=1
    \ifx\paperwidth\undefined
    \else
      \ifdim\paperheight=0pt\relax
      \else
        \pdfpageheight\paperheight
      \fi
      \ifdim\paperwidth=0pt\relax
      \else
        \pdfpagewidth\paperwidth
      \fi
    \fi
  \fi  
\fi

%

%
\newcount\@hour\newcount\@minute\chardef\@x10\chardef\@xv60
\def\tcitime{
\def\@time{%
  \@minute\time\@hour\@minute\divide\@hour\@xv
  \ifnum\@hour<\@x 0\fi\the\@hour:%
  \multiply\@hour\@xv\advance\@minute-\@hour
  \ifnum\@minute<\@x 0\fi\the\@minute
  }}%


\def\x@hyperref#1#2#3{%
   \catcode`\~ = 12
   \catcode`\$ = 12
   \catcode`\_ = 12
   \catcode`\# = 12
   \catcode`\& = 12
   \catcode`\% = 12
   \y@hyperref{#1}{#2}{#3}%
}

\def\y@hyperref#1#2#3#4{%
   #2\ref{#4}#3
   \catcode`\~ = 13
   \catcode`\$ = 3
   \catcode`\_ = 8
   \catcode`\# = 6
   \catcode`\& = 4
   \catcode`\% = 14
}

\@ifundefined{hyperref}{\let\hyperref\x@hyperref}{}
\@ifundefined{msihyperref}{\let\msihyperref\x@hyperref}{}

\@ifundefined{qExtProgCall}{\def\qExtProgCall#1#2#3#4#5#6{\relax}}{}
%
%
%
%
\def\QCTOpt[#1]#2{%
  \def\QCTOptB{#1}
  \def\QCTOptA{#2}
}
\def\QCTNOpt#1{%
  \def\QCTOptA{#1}
  \let\QCTOptB\empty
}
\def\Qct{%
  \@ifnextchar[{%
    \QCTOpt}{\QCTNOpt}
}
\def\QCBOpt[#1]#2{%
  \def\QCBOptB{#1}%
  \def\QCBOptA{#2}%
}
\def\QCBNOpt#1{%
  \def\QCBOptA{#1}%
  \let\QCBOptB\empty
}
\def\Qcb{%
  \@ifnextchar[{%
    \QCBOpt}{\QCBNOpt}%
}
\def\PrepCapArgs{%
  \ifx\QCBOptA\empty
    \ifx\QCTOptA\empty
      {}%
    \else
      \ifx\QCTOptB\empty
        {\QCTOptA}%
      \else
        [\QCTOptB]{\QCTOptA}%
      \fi
    \fi
  \else
    \ifx\QCBOptA\empty
      {}%
    \else
      \ifx\QCBOptB\empty
        {\QCBOptA}%
      \else
        [\QCBOptB]{\QCBOptA}%
      \fi
    \fi
  \fi
}
\newcount\GRAPHICSTYPE
\GRAPHICSTYPE=\z@
\def\GRAPHICSPS#1{%
 \ifcase\GRAPHICSTYPE
   \special{ps: #1}%
 \or
   \special{language "PS", include "#1"}%
 \fi
}%
%
%
%

\def\graffile#1#2#3#4{%
    \bgroup
	   \@inlabelfalse
       \leavevmode
       \@ifundefined{bbl@deactivate}{\def~{\string~}}{\activesoff}%
        \raise -#4 \BOXTHEFRAME{%
           \hbox to #2{\raise #3\hbox to #2{\null #1\hfil}}}%
    \egroup
}%
%
\def\draftbox#1#2#3#4{%
 \leavevmode\raise -#4 \hbox{%
  \frame{\rlap{\protect\tiny #1}\hbox to #2%
   {\vrule height#3 width\z@ depth\z@\hfil}%
  }%
 }%
}%
\newcount\@msidraft
\@msidraft=\z@
\let\nographics=\@msidraft
\newif\ifwasdraft
\wasdraftfalse

\def\GRAPHIC#1#2#3#4#5{%
   \ifnum\@msidraft=\@ne\draftbox{#2}{#3}{#4}{#5}%
   \else\graffile{#1}{#3}{#4}{#5}%
   \fi
}
\def\addtoLaTeXparams#1{%
    \edef\LaTeXparams{\LaTeXparams #1}}%
%

\newif\ifBoxFrame \BoxFramefalse
\newif\ifOverFrame \OverFramefalse
\newif\ifUnderFrame \UnderFramefalse

\def\BOXTHEFRAME#1{%
   \hbox{%
      \ifBoxFrame
         \frame{#1}%
      \else
         {#1}%
      \fi
   }%
}

\def\doFRAMEparams#1{\BoxFramefalse\OverFramefalse\UnderFramefalse\readFRAMEparams#1\end}%
\def\readFRAMEparams#1{%
 \ifx#1\end%
  \let\next=\relax
  \else
  \ifx#1i\dispkind=\z@\fi
  \ifx#1d\dispkind=\@ne\fi
  \ifx#1f\dispkind=\tw@\fi
  \ifx#1t\addtoLaTeXparams{t}\fi
  \ifx#1b\addtoLaTeXparams{b}\fi
  \ifx#1p\addtoLaTeXparams{p}\fi
  \ifx#1h\addtoLaTeXparams{h}\fi
  \ifx#1X\BoxFrametrue\fi
  \ifx#1O\OverFrametrue\fi
  \ifx#1U\UnderFrametrue\fi
  \ifx#1w
    \ifnum\@msidraft=1\wasdrafttrue\else\wasdraftfalse\fi
    \@msidraft=\@ne
  \fi
  \let\next=\readFRAMEparams
  \fi
 \next
 }%
%

\def\IFRAME#1#2#3#4#5#6{%
      \bgroup
      \let\QCTOptA\empty
      \let\QCTOptB\empty
      \let\QCBOptA\empty
      \let\QCBOptB\empty
      #6%
      \parindent=0pt
      \leftskip=0pt
      \rightskip=0pt
      \setbox0=\hbox{\QCBOptA}%
      \@tempdima=#1\relax
      \ifOverFrame
          \typeout{This is not implemented yet}%
          \show\HELP
      \else
         \ifdim\wd0>\@tempdima
            \advance\@tempdima by \@tempdima
            \ifdim\wd0 >\@tempdima
               \setbox1 =\vbox{%
                  \unskip\hbox to \@tempdima{\hfill\GRAPHIC{#5}{#4}{#1}{#2}{#3}\hfill}%
                  \unskip\hbox to \@tempdima{\parbox[b]{\@tempdima}{\QCBOptA}}%
               }%
               \wd1=\@tempdima
            \else
               \textwidth=\wd0
               \setbox1 =\vbox{%
                 \noindent\hbox to \wd0{\hfill\GRAPHIC{#5}{#4}{#1}{#2}{#3}\hfill}\\%
                 \noindent\hbox{\QCBOptA}%
               }%
               \wd1=\wd0
            \fi
         \else
            \ifdim\wd0>0pt
              \hsize=\@tempdima
              \setbox1=\vbox{%
                \unskip\GRAPHIC{#5}{#4}{#1}{#2}{0pt}%
                \break
                \unskip\hbox to \@tempdima{\hfill \QCBOptA\hfill}%
              }%
              \wd1=\@tempdima
           \else
              \hsize=\@tempdima
              \setbox1=\vbox{%
                \unskip\GRAPHIC{#5}{#4}{#1}{#2}{0pt}%
              }%
              \wd1=\@tempdima
           \fi
         \fi
         \@tempdimb=\ht1
         \advance\@tempdimb by -#2
         \advance\@tempdimb by #3
         \leavevmode
         \raise -\@tempdimb \hbox{\box1}%
      \fi
      \egroup%
}%
%
\def\DFRAME#1#2#3#4#5{%
  \vspace\topsep
  \hfil\break
  \bgroup
     \leftskip\@flushglue
	 \rightskip\@flushglue
	 \parindent\z@
	 \parfillskip\z@skip
     \let\QCTOptA\empty
     \let\QCTOptB\empty
     \let\QCBOptA\empty
     \let\QCBOptB\empty
	 \vbox\bgroup
        \ifOverFrame 
           #5\QCTOptA\par
        \fi
        \GRAPHIC{#4}{#3}{#1}{#2}{\z@}%
        \ifUnderFrame 
           \break#5\QCBOptA
        \fi
	 \egroup
  \egroup
  \vspace\topsep
  \break
}%
%
\def\FFRAME#1#2#3#4#5#6#7{%
  \@ifundefined{floatstyle}
    {
     \begin{figure}[#1]%
    }
    {
	 \ifx#1h
      \begin{figure}[H]%
	 \else
      \begin{figure}[#1]%
	 \fi
	}
  \let\QCTOptA\empty
  \let\QCTOptB\empty
  \let\QCBOptA\empty
  \let\QCBOptB\empty
  \ifOverFrame
    #4
    \ifx\QCTOptA\empty
    \else
      \ifx\QCTOptB\empty
        \caption{\QCTOptA}%
      \else
        \caption[\QCTOptB]{\QCTOptA}%
      \fi
    \fi
    \ifUnderFrame\else
      \label{#5}%
    \fi
  \else
    \UnderFrametrue%
  \fi
  \begin{center}\GRAPHIC{#7}{#6}{#2}{#3}{\z@}\end{center}%
  \ifUnderFrame
    #4
    \ifx\QCBOptA\empty
      \caption{}%
    \else
      \ifx\QCBOptB\empty
        \caption{\QCBOptA}%
      \else
        \caption[\QCBOptB]{\QCBOptA}%
      \fi
    \fi
    \label{#5}%
  \fi
  \end{figure}%
 }%
%
%
%
%
%
\newcount\dispkind%

\def\makeactives{
  \catcode`\"=\active
  \catcode`\;=\active
  \catcode`\:=\active
  \catcode`\'=\active
  \catcode`\~=\active
}
\bgroup
   \makeactives
   \gdef\activesoff{%
      \def"{\string"}%
      \def;{\string;}%
      \def:{\string:}%
      \def'{\string'}%
      \def~{\string~}%
    }
\egroup

\def\FRAME#1#2#3#4#5#6#7#8{%
 \bgroup
 \ifnum\@msidraft=\@ne
   \wasdrafttrue
 \else
   \wasdraftfalse%
 \fi
 \def\LaTeXparams{}%
 \dispkind=\z@
 \def\LaTeXparams{}%
 \doFRAMEparams{#1}%
 \ifnum\dispkind=\z@\IFRAME{#2}{#3}{#4}{#7}{#8}{#5}\else
  \ifnum\dispkind=\@ne\DFRAME{#2}{#3}{#7}{#8}{#5}\else
   \ifnum\dispkind=\tw@
    \edef\@tempa{\noexpand\FFRAME{\LaTeXparams}}%
    \@tempa{#2}{#3}{#5}{#6}{#7}{#8}%
    \fi
   \fi
  \fi
  \ifwasdraft\@msidraft=1\else\@msidraft=0\fi{}%
  \egroup
 }%
%

\def\TEXUX#1{"texux"}

%
%
%
%
%
%
%
%
%
%

%
\long\def\QQQ#1#2{%
     \long\expandafter\def\csname#1\endcsname{#2}}%
\@ifundefined{QTP}{\def\QTP#1{}}{}
\@ifundefined{QEXCLUDE}{\def\QEXCLUDE#1{}}{}
\@ifundefined{Qlb}{}{}
\@ifundefined{Qlt}{}{}
\long\def\QQA#1#2{}%
\def\QTR#1#2{{\csname#1\endcsname {#2}}}%

%
%
\def\EXPAND#1[#2]#3{}%
\def\NOEXPAND#1[#2]#3{}%
\def\LaTeXparent#1{}%
\def\ChildStyles#1{}%
\def\ChildDefaults#1{}%
\def\QTagDef#1#2#3{}%

\@ifundefined{correctchoice}{}{}
\@ifundefined{HTML}{\def\HTML#1{\relax}}{}
\@ifundefined{TCIIcon}{\def\TCIIcon#1#2#3#4{\relax}}{}
\if@compatibility
  \typeout{Not defining UNICODE  U or CustomNote commands for LaTeX 2.09.}
\else
  \providecommand{\UNICODE}[2][]{\protect\rule{.1in}{.1in}}
  \providecommand{\U}[1]{\protect\rule{.1in}{.1in}}
  
\fi

\@ifundefined{lambdabar}{
      
   }{}

%
\@ifundefined{StyleEditBeginDoc}{}{}
%
\def\QQfnmark#1{\footnotemark}

%
%
\@ifundefined{TCIMAKEINDEX}{}{\makeindex}%
%
\@ifundefined{abstract}{%
 \def\abstract{%
  \if@twocolumn
   \section*{Abstract (Not appropriate in this style!)}%
   \else \small 
   \begin{center}{\bf Abstract\vspace{-.5em}\vspace{\z@}}\end{center}%
   \quotation 
   \fi
  }%
 }{%
 }%
\@ifundefined{endabstract}{\def\endabstract
  {\if@twocolumn\else\endquotation\fi}}{}%
\@ifundefined{maketitle}{\def\maketitle#1{}}{}%
\@ifundefined{affiliation}{\def\affiliation#1{}}{}%
\@ifundefined{proof}{}{}%
\@ifundefined{endproof}{}{}%
\@ifundefined{newfield}{\def\newfield#1#2{}}{}%
\@ifundefined{chapter}{\def\chapter#1{\par(Chapter head:)#1\par }%
 \newcount\c@chapter}{}%
\@ifundefined{part}{\def\part#1{\par(Part head:)#1\par }}{}%
\@ifundefined{section}{\def\section#1{\par(Section head:)#1\par }}{}%
\@ifundefined{subsection}{\def\subsection#1%
 {\par(Subsection head:)#1\par }}{}%
\@ifundefined{subsubsection}{\def\subsubsection#1%
 {\par(Subsubsection head:)#1\par }}{}%
\@ifundefined{paragraph}{\def\paragraph#1%
 {\par(Subsubsubsection head:)#1\par }}{}%
\@ifundefined{subparagraph}{\def\subparagraph#1%
 {\par(Subsubsubsubsection head:)#1\par }}{}%
\@ifundefined{therefore}{}{}%
\@ifundefined{backepsilon}{}{}%
\@ifundefined{yen}{}{}%
\@ifundefined{registered}{%
   \def\registered{\relax\ifmmode{}\r@gistered
                    \else$\m@th\r@gistered$\fi}%
 \def\r@gistered{^{\ooalign
  {\hfil\raise.07ex\hbox{$\scriptstyle\rm\text{R}$}\hfil\crcr
  \mathhexbox20D}}}}{}%
\@ifundefined{Eth}{}{}%
\@ifundefined{eth}{}{}%
\@ifundefined{Thorn}{}{}%
\@ifundefined{thorn}{}{}%
%
\@ifundefined{degree}{}{}%
%
\newdimen\theight
\@ifundefined{Column}{\def\Column{%
 \vadjust{\setbox\z@=\hbox{\scriptsize\quad\quad tcol}%
  \theight=\ht\z@\advance\theight by \dp\z@\advance\theight by \lineskip
  \kern -\theight \vbox to \theight{%
   \rightline{\rlap{\box\z@}}%
   \vss
   }%
  }%
 }}{}%
\@ifundefined{qed}{\def\qed{%
 \ifhmode\unskip\nobreak\fi\ifmmode\ifinner\else\hskip5\p@\fi\fi
 \hbox{\hskip5\p@\vrule width4\p@ height6\p@ depth1.5\p@\hskip\p@}%
 }}{}%
\@ifundefined{cents}{}{}%
\@ifundefined{tciLaplace}{}{}%
\@ifundefined{tciFourier}{}{}%
\@ifundefined{textcurrency}{}{}%
\@ifundefined{texteuro}{}{}%
\@ifundefined{euro}{}{}%
\@ifundefined{textfranc}{}{}%
\@ifundefined{textlira}{}{}%
\@ifundefined{textpeseta}{}{}%
\@ifundefined{miss}{\def\miss{\hbox{\vrule height2\p@ width 2\p@ depth\z@}}}{}%
\@ifundefined{vvert}{}{}
\@ifundefined{tcol}{\def\tcol#1{{\baselineskip=6\p@ \vcenter{#1}} \Column}}{}%
\@ifundefined{dB}{}{}
\@ifundefined{mB}{}{}
\@ifundefined{nB}{}{}
\@ifundefined{note}{}{}%
\def\newfmtname{LaTeX2e}
%
\ifx\fmtname\newfmtname
  \DeclareOldFontCommand{\rm}{\normalfont\rmfamily}{\mathrm}
  \DeclareOldFontCommand{\sf}{\normalfont\sffamily}{\mathsf}
  \DeclareOldFontCommand{\tt}{\normalfont\ttfamily}{\mathtt}
  \DeclareOldFontCommand{\bf}{\normalfont\bfseries}{\mathbf}
  \DeclareOldFontCommand{\it}{\normalfont\itshape}{\mathit}
  \DeclareOldFontCommand{\sl}{\normalfont\slshape}{\@nomath\sl}
  \DeclareOldFontCommand{\sc}{\normalfont\scshape}{\@nomath\sc}
\fi

%

\def\alpha{{\Greekmath 010B}}%
\def\beta{{\Greekmath 010C}}%
\def\gamma{{\Greekmath 010D}}%
\def\delta{{\Greekmath 010E}}%
\def\epsilon{{\Greekmath 010F}}%
\def\zeta{{\Greekmath 0110}}%
\def\eta{{\Greekmath 0111}}%
\def\theta{{\Greekmath 0112}}%
\def\iota{{\Greekmath 0113}}%
\def\kappa{{\Greekmath 0114}}%
\def\lambda{{\Greekmath 0115}}%
\def\mu{{\Greekmath 0116}}%
\def\nu{{\Greekmath 0117}}%
\def\xi{{\Greekmath 0118}}%
\def\pi{{\Greekmath 0119}}%
\def\rho{{\Greekmath 011A}}%
\def\sigma{{\Greekmath 011B}}%
\def\tau{{\Greekmath 011C}}%
\def\upsilon{{\Greekmath 011D}}%
\def\phi{{\Greekmath 011E}}%
\def\chi{{\Greekmath 011F}}%
\def\psi{{\Greekmath 0120}}%
\def\omega{{\Greekmath 0121}}%
\def\varepsilon{{\Greekmath 0122}}%
\def\vartheta{{\Greekmath 0123}}%
\def\varpi{{\Greekmath 0124}}%
\def\varrho{{\Greekmath 0125}}%
\def\varsigma{{\Greekmath 0126}}%
\def\varphi{{\Greekmath 0127}}%

\def\nabla{{\Greekmath 0272}}
\def\FindBoldGroup{%
   {\setbox0=\hbox{$\mathbf{x\global\edef\theboldgroup{\the\mathgroup}}$}}%
}

\def\Greekmath#1#2#3#4{%
    \if@compatibility
        \ifnum\mathgroup=\symbold
           \mathchoice{\mbox{\boldmath$\displaystyle\mathchar"#1#2#3#4$}}%
                      {\mbox{\boldmath$\textstyle\mathchar"#1#2#3#4$}}%
                      {\mbox{\boldmath$\scriptstyle\mathchar"#1#2#3#4$}}%
                      {\mbox{\boldmath$\scriptscriptstyle\mathchar"#1#2#3#4$}}%
        \else
           \mathchar"#1#2#3#4%
        \fi 
    \else 
        \FindBoldGroup
        \ifnum\mathgroup=\theboldgroup 
           \mathchoice{\mbox{\boldmath$\displaystyle\mathchar"#1#2#3#4$}}%
                      {\mbox{\boldmath$\textstyle\mathchar"#1#2#3#4$}}%
                      {\mbox{\boldmath$\scriptstyle\mathchar"#1#2#3#4$}}%
                      {\mbox{\boldmath$\scriptscriptstyle\mathchar"#1#2#3#4$}}%
        \else
           \mathchar"#1#2#3#4%
        \fi     	    
	  \fi}

\newif\ifGreekBold  \GreekBoldfalse
\let\SAVEPBF=\pbf
\def\pbf{\GreekBoldtrue\SAVEPBF}%

\@ifundefined{theorem}{\newtheorem{theorem}{Theorem}}{}
\@ifundefined{lemma}{\newtheorem{lemma}[theorem]{Lemma}}{}
\@ifundefined{corollary}{\newtheorem{corollary}[theorem]{Corollary}}{}
\@ifundefined{conjecture}{}{}
\@ifundefined{proposition}{}{}
\@ifundefined{axiom}{}{}
\@ifundefined{remark}{\newtheorem{remark}{Remark}}{}
\@ifundefined{example}{}{}
\@ifundefined{exercise}{}{}
\@ifundefined{definition}{}{}

\@ifundefined{mathletters}{%
  \newcounter{equationnumber}  
  \def\mathletters{%
     \addtocounter{equation}{1}
     \edef\@currentlabel{\theequation}%
     \setcounter{equationnumber}{\c@equation}
     \setcounter{equation}{0}%
     \edef\theequation{\@currentlabel\noexpand\alph{equation}}%
  }
  
}{}

\@ifundefined{BibTeX}{%
    \def\BibTeX{{\rm B\kern-.05em{\sc i\kern-.025em b}\kern-.08em
                 T\kern-.1667em\lower.7ex\hbox{E}\kern-.125emX}}}{}%
\@ifundefined{AmS}%
    {\def\AmS{{\protect\usefont{OMS}{cmsy}{m}{n}%
                A\kern-.1667em\lower.5ex\hbox{M}\kern-.125emS}}}{}%
\@ifundefined{AmSTeX}{}{}%
%

\def\@@eqncr{\let\@tempa\relax
    \ifcase\@eqcnt \def\@tempa{& & &}\or \def\@tempa{& &}%
      \else \def\@tempa{&}\fi
     \@tempa
     \if@eqnsw
        \iftag@
           \@taggnum
        \else
           \@eqnnum\stepcounter{equation}%
        \fi
     \fi
     \global\tag@false
     \global\@eqnswtrue
     \global\@eqcnt\z@\cr}

\def\TCItag{\@ifnextchar*{\@TCItagstar}{\@TCItag}}
\def\@TCItag#1{%
    \global\tag@true
    \global\def\@taggnum{(#1)}%
    \global\def\@currentlabel{#1}}
\def\@TCItagstar*#1{%
    \global\tag@true
    \global\def\@taggnum{#1}%
    \global\def\@currentlabel{#1}}
%
%
%
%
%
%
%
%
%
%
%
%
%
%
%
%
%
%
%

\def\tint{\msi@int\textstyle\int}%
\def\tiint{\msi@int\textstyle\iint}%
\def\tiiint{\msi@int\textstyle\iiint}%
\def\tiiiint{\msi@int\textstyle\iiiint}%
\def\tidotsint{\msi@int\textstyle\idotsint}%
\def\toint{\msi@int\textstyle\oint}%

%
%
%
%
%
%
%
%
%
%
%
%
%
%
%

\newtoks\temptoksa
\newtoks\temptoksb
\newtoks\temptoksc

\def\msi@int#1#2{%
 \def\@temp{{#1#2\the\temptoksc_{\the\temptoksa}^{\the\temptoksb}}}%
 \futurelet\@nextcs
 \@int
}

\def\@int{%
   \ifx\@nextcs\limits
      \typeout{Found limits}%
      \temptoksc={\limits}%
	  \let\@next\@intgobble%
   \else\ifx\@nextcs\nolimits
      \typeout{Found nolimits}%
      \temptoksc={\nolimits}%
	  \let\@next\@intgobble%
   \else
      \typeout{Did not find limits or no limits}%
      \temptoksc={}%
      \let\@next\msi@limits%
   \fi\fi
   \@next   
}%

\def\@intgobble#1{%
   \typeout{arg is #1}%
   \msi@limits
}

\def\msi@limits{%
   \temptoksa={}%
   \temptoksb={}%
   \@ifnextchar_{\@limitsa}{\@limitsb}%
}

\def\@limitsa_#1{%
   \temptoksa={#1}%
   \@ifnextchar^{\@limitsc}{\@temp}%
}

\def\@limitsb{%
   \@ifnextchar^{\@limitsc}{\@temp}%
}

\def\@limitsc^#1{%
   \temptoksb={#1}%
   \@ifnextchar_{\@limitsd}{\@temp}%
}

\def\@limitsd_#1{%
   \temptoksa={#1}%
   \@temp
}

\def\dint{\msi@int\displaystyle\int}%
\def\diint{\msi@int\displaystyle\iint}%
\def\diiint{\msi@int\displaystyle\iiint}%
\def\diiiint{\msi@int\displaystyle\iiiint}%
\def\didotsint{\msi@int\displaystyle\idotsint}%
\def\doint{\msi@int\displaystyle\oint}%

\if@compatibility\else
  \RequirePackage{amsmath}
\fi

\def\ExitTCILatex{\makeatother }

\bgroup
\ifx\ds@amstex\relax
   \message{amstex already loaded}\aftergroup\ExitTCILatex
\else
   \@ifpackageloaded{amsmath}%
      {\if@compatibility\message{amsmath already loaded}\fi\aftergroup\ExitTCILatex}
      {}
   \@ifpackageloaded{amstex}%
      {\if@compatibility\message{amstex already loaded}\fi\aftergroup\ExitTCILatex}
      {}
   \@ifpackageloaded{amsgen}%
      {\if@compatibility\message{amsgen already loaded}\fi\aftergroup\ExitTCILatex}
      {}
\fi
\egroup


\typeout{TCILATEX defining AMS-like constructs in LaTeX 2.09 COMPATIBILITY MODE}
%
%
\let\DOTSI\relax
\def\RIfM@{\relax\ifmmode}%
\def\FN@{\futurelet\next}%
\newcount\intno@
\def\iint{\DOTSI\intno@\tw@\FN@\ints@}%
\def\iiint{\DOTSI\intno@\thr@@\FN@\ints@}%
\def\iiiint{\DOTSI\intno@4 \FN@\ints@}%
\def\idotsint{\DOTSI\intno@\z@\FN@\ints@}%
\def\ints@{\findlimits@\ints@@}%
\newif\iflimtoken@
\newif\iflimits@
\def\findlimits@{\limtoken@true\ifx\next\limits\limits@true
 \else\ifx\next\nolimits\limits@false\else
 \limtoken@false\ifx\ilimits@\nolimits\limits@false\else
 \ifinner\limits@false\else\limits@true\fi\fi\fi\fi}%
\def\multint@{\int\ifnum\intno@=\z@\intdots@                          
 \else\intkern@\fi                                                    
 \ifnum\intno@>\tw@\int\intkern@\fi                                   
 \ifnum\intno@>\thr@@\int\intkern@\fi                                 
 \int}
\def\multintlimits@{\intop\ifnum\intno@=\z@\intdots@\else\intkern@\fi
 \ifnum\intno@>\tw@\intop\intkern@\fi
 \ifnum\intno@>\thr@@\intop\intkern@\fi\intop}%
\def\intic@{%
    \mathchoice{\hskip.5em}{\hskip.4em}{\hskip.4em}{\hskip.4em}}%
\def\negintic@{\mathchoice
 {\hskip-.5em}{\hskip-.4em}{\hskip-.4em}{\hskip-.4em}}%
\def\ints@@{\iflimtoken@                                              
 \def\ints@@@{\iflimits@\negintic@
   \mathop{\intic@\multintlimits@}\limits                             
  \else\multint@\nolimits\fi                                          
  \eat@}
 \else                                                                
 \def\ints@@@{\iflimits@\negintic@
  \mathop{\intic@\multintlimits@}\limits\else
  \multint@\nolimits\fi}\fi\ints@@@}%
\def\intkern@{\mathchoice{\!\!\!}{\!\!}{\!\!}{\!\!}}%
\def\plaincdots@{\mathinner{\cdotp\cdotp\cdotp}}%
\def\intdots@{\mathchoice{\plaincdots@}%
 {{\cdotp}\mkern1.5mu{\cdotp}\mkern1.5mu{\cdotp}}%
 {{\cdotp}\mkern1mu{\cdotp}\mkern1mu{\cdotp}}%
 {{\cdotp}\mkern1mu{\cdotp}\mkern1mu{\cdotp}}}%
%
%
%
\def\RIfM@{\relax\protect\ifmmode}
\def\text{\RIfM@\expandafter\text@\else\expandafter\mbox\fi}
\let\nfss@text\text
\def\text@#1{\mathchoice
   {\textdef@\displaystyle\f@size{#1}}%
   {\textdef@\textstyle\tf@size{\firstchoice@false #1}}%
   {\textdef@\textstyle\sf@size{\firstchoice@false #1}}%
   {\textdef@\textstyle \ssf@size{\firstchoice@false #1}}%
   \glb@settings}

\def\textdef@#1#2#3{\hbox{{%
                    \everymath{#1}%
                    \let\f@size#2\selectfont
                    #3}}}
\newif\iffirstchoice@
\firstchoice@true
%
%
\def\Let@{\relax\iffalse{\fi\let\\=\cr\iffalse}\fi}%
\def\vspace@{\def\vspace##1{\crcr\noalign{\vskip##1\relax}}}%
\def\multilimits@{\bgroup\vspace@\Let@
 \baselineskip\fontdimen10 \scriptfont\tw@
 \advance\baselineskip\fontdimen12 \scriptfont\tw@
 \lineskip\thr@@\fontdimen8 \scriptfont\thr@@
 \lineskiplimit\lineskip
 \vbox\bgroup\ialign\bgroup\hfil$\m@th\scriptstyle{##}$\hfil\crcr}%
\def\Sb{_\multilimits@}%
\def\endSb{\crcr\egroup\egroup\egroup}%
\def\Sp{^\multilimits@}%

%
%
%
\newdimen\ex@
\ex@.2326ex
\def\rightarrowfill@#1{$#1\m@th\mathord-\mkern-6mu\cleaders
 \hbox{$#1\mkern-2mu\mathord-\mkern-2mu$}\hfill
 \mkern-6mu\mathord\rightarrow$}%
\def\leftarrowfill@#1{$#1\m@th\mathord\leftarrow\mkern-6mu\cleaders
 \hbox{$#1\mkern-2mu\mathord-\mkern-2mu$}\hfill\mkern-6mu\mathord-$}%
\def\leftrightarrowfill@#1{$#1\m@th\mathord\leftarrow
\mkern-6mu\cleaders
 \hbox{$#1\mkern-2mu\mathord-\mkern-2mu$}\hfill
 \mkern-6mu\mathord\rightarrow$}%
\def\overrightarrow{\mathpalette\overrightarrow@}%
\def\overrightarrow@#1#2{\vbox{\ialign{##\crcr\rightarrowfill@#1\crcr
 \noalign{\kern-\ex@\nointerlineskip}$\m@th\hfil#1#2\hfil$\crcr}}}%

\def\overleftarrow{\mathpalette\overleftarrow@}%
\def\overleftarrow@#1#2{\vbox{\ialign{##\crcr\leftarrowfill@#1\crcr
 \noalign{\kern-\ex@\nointerlineskip}$\m@th\hfil#1#2\hfil$\crcr}}}%
\def\overleftrightarrow{\mathpalette\overleftrightarrow@}%
\def\overleftrightarrow@#1#2{\vbox{\ialign{##\crcr
   \leftrightarrowfill@#1\crcr
 \noalign{\kern-\ex@\nointerlineskip}$\m@th\hfil#1#2\hfil$\crcr}}}%
\def\underrightarrow{\mathpalette\underrightarrow@}%
\def\underrightarrow@#1#2{\vtop{\ialign{##\crcr$\m@th\hfil#1#2\hfil
  $\crcr\noalign{\nointerlineskip}\rightarrowfill@#1\crcr}}}%

\def\underleftarrow{\mathpalette\underleftarrow@}%
\def\underleftarrow@#1#2{\vtop{\ialign{##\crcr$\m@th\hfil#1#2\hfil
  $\crcr\noalign{\nointerlineskip}\leftarrowfill@#1\crcr}}}%
\def\underleftrightarrow{\mathpalette\underleftrightarrow@}%
\def\underleftrightarrow@#1#2{\vtop{\ialign{##\crcr$\m@th
  \hfil#1#2\hfil$\crcr
 \noalign{\nointerlineskip}\leftrightarrowfill@#1\crcr}}}%

\def\qopnamewl@#1{\mathop{\operator@font#1}\nlimits@}
\let\nlimits@\displaylimits
\def\setboxz@h{\setbox\z@\hbox}

\def\varlim@#1#2{\mathop{\vtop{\ialign{##\crcr
 \hfil$#1\m@th\operator@font lim$\hfil\crcr
 \noalign{\nointerlineskip}#2#1\crcr
 \noalign{\nointerlineskip\kern-\ex@}\crcr}}}}

 \def\rightarrowfill@#1{\m@th\setboxz@h{$#1-$}\ht\z@\z@
  $#1\copy\z@\mkern-6mu\cleaders
  \hbox{$#1\mkern-2mu\box\z@\mkern-2mu$}\hfill
  \mkern-6mu\mathord\rightarrow$}
\def\leftarrowfill@#1{\m@th\setboxz@h{$#1-$}\ht\z@\z@
  $#1\mathord\leftarrow\mkern-6mu\cleaders
  \hbox{$#1\mkern-2mu\copy\z@\mkern-2mu$}\hfill
  \mkern-6mu\box\z@$}

\def\projlim{\qopnamewl@{proj\,lim}}
\def\injlim{\qopnamewl@{inj\,lim}}
\def\varinjlim{\mathpalette\varlim@\rightarrowfill@}
\def\varprojlim{\mathpalette\varlim@\leftarrowfill@}
\def\varliminf{\mathpalette\varliminf@{}}
\def\varliminf@#1{\mathop{\underline{\vrule\@depth.2\ex@\@width\z@
   \hbox{$#1\m@th\operator@font lim$}}}}
\def\varlimsup{\mathpalette\varlimsup@{}}
\def\varlimsup@#1{\mathop{\overline
  {\hbox{$#1\m@th\operator@font lim$}}}}

%
%
%
%
%
%
\begingroup \catcode `|=0 \catcode `[= 1
\catcode`]=2 \catcode `\{=12 \catcode `\}=12
\catcode`\\=12 
|gdef|@alignverbatim#1\end{align}[#1|end[align]]
|gdef|@salignverbatim#1\end{align*}[#1|end[align*]]

|gdef|@alignatverbatim#1\end{alignat}[#1|end[alignat]]
|gdef|@salignatverbatim#1\end{alignat*}[#1|end[alignat*]]

|gdef|@xalignatverbatim#1\end{xalignat}[#1|end[xalignat]]
|gdef|@sxalignatverbatim#1\end{xalignat*}[#1|end[xalignat*]]

|gdef|@gatherverbatim#1\end{gather}[#1|end[gather]]
|gdef|@sgatherverbatim#1\end{gather*}[#1|end[gather*]]

|gdef|@gatherverbatim#1\end{gather}[#1|end[gather]]
|gdef|@sgatherverbatim#1\end{gather*}[#1|end[gather*]]

|gdef|@multilineverbatim#1\end{multiline}[#1|end[multiline]]
|gdef|@smultilineverbatim#1\end{multiline*}[#1|end[multiline*]]

|gdef|@arraxverbatim#1\end{arrax}[#1|end[arrax]]
|gdef|@sarraxverbatim#1\end{arrax*}[#1|end[arrax*]]

|gdef|@tabulaxverbatim#1\end{tabulax}[#1|end[tabulax]]
|gdef|@stabulaxverbatim#1\end{tabulax*}[#1|end[tabulax*]]

|endgroup

\def\align{\@verbatim \frenchspacing\@vobeyspaces \@alignverbatim
You are using the "align" environment in a style in which it is not defined.}

\@namedef{align*}{\@verbatim\@salignverbatim
You are using the "align*" environment in a style in which it is not defined.}
\expandafter\let\csname endalign*\endcsname =\endtrivlist

\def\alignat{\@verbatim \frenchspacing\@vobeyspaces \@alignatverbatim
You are using the "alignat" environment in a style in which it is not defined.}

\@namedef{alignat*}{\@verbatim\@salignatverbatim
You are using the "alignat*" environment in a style in which it is not defined.}
\expandafter\let\csname endalignat*\endcsname =\endtrivlist

\def\xalignat{\@verbatim \frenchspacing\@vobeyspaces \@xalignatverbatim
You are using the "xalignat" environment in a style in which it is not defined.}

\@namedef{xalignat*}{\@verbatim\@sxalignatverbatim
You are using the "xalignat*" environment in a style in which it is not defined.}
\expandafter\let\csname endxalignat*\endcsname =\endtrivlist

\def\gather{\@verbatim \frenchspacing\@vobeyspaces \@gatherverbatim
You are using the "gather" environment in a style in which it is not defined.}

\@namedef{gather*}{\@verbatim\@sgatherverbatim
You are using the "gather*" environment in a style in which it is not defined.}
\expandafter\let\csname endgather*\endcsname =\endtrivlist

\def\multiline{\@verbatim \frenchspacing\@vobeyspaces \@multilineverbatim
You are using the "multiline" environment in a style in which it is not defined.}

\@namedef{multiline*}{\@verbatim\@smultilineverbatim
You are using the "multiline*" environment in a style in which it is not defined.}
\expandafter\let\csname endmultiline*\endcsname =\endtrivlist

\def\arrax{\@verbatim \frenchspacing\@vobeyspaces \@arraxverbatim
You are using a type of "array" construct that is only allowed in AmS-LaTeX.}

\def\tabulax{\@verbatim \frenchspacing\@vobeyspaces \@tabulaxverbatim
You are using a type of "tabular" construct that is only allowed in AmS-LaTeX.}

\@namedef{arrax*}{\@verbatim\@sarraxverbatim
You are using a type of "array*" construct that is only allowed in AmS-LaTeX.}
\expandafter\let\csname endarrax*\endcsname =\endtrivlist

\@namedef{tabulax*}{\@verbatim\@stabulaxverbatim
You are using a type of "tabular*" construct that is only allowed in AmS-LaTeX.}
\expandafter\let\csname endtabulax*\endcsname =\endtrivlist


 \def\endequation{%
     \ifmmode\ifinner 
      \iftag@
        \addtocounter{equation}{-1} 
        $\hfil
           \displaywidth\linewidth\@taggnum\egroup \endtrivlist
        \global\tag@false
        \global\@ignoretrue   
      \else
        $\hfil
           \displaywidth\linewidth\@eqnnum\egroup \endtrivlist
        \global\tag@false
        \global\@ignoretrue 
      \fi
     \else   
      \iftag@
        \addtocounter{equation}{-1} 
        \eqno \hbox{\@taggnum}
        \global\tag@false%
        $$\global\@ignoretrue
      \else
        \eqno \hbox{\@eqnnum}
        $$\global\@ignoretrue
      \fi
     \fi\fi
 } 

 \newif\iftag@ \tag@false
 
 \def\TCItag{\@ifnextchar*{\@TCItagstar}{\@TCItag}}
 \def\@TCItag#1{%
     \global\tag@true
     \global\def\@taggnum{(#1)}%
     \global\def\@currentlabel{#1}}
 \def\@TCItagstar*#1{%
     \global\tag@true
     \global\def\@taggnum{#1}%
     \global\def\@currentlabel{#1}}

  \@ifundefined{tag}{
     \def\tag{\@ifnextchar*{\@tagstar}{\@tag}}
     \def\@tag#1{%
         \global\tag@true
         \global\def\@taggnum{(#1)}}
     \def\@tagstar*#1{%
         \global\tag@true
         \global\def\@taggnum{#1}}
  }{}

\def\tfrac#1#2{{\textstyle {#1 \over #2}}}%
%
%
%
%

\makeatother

\begin{document}

\title{\textsc{Nonparametric Identification and Estimation with
Non-Classical Errors-in-Variables}}
\author{Kirill S. \textsc{Evdokimov}\thanks{%
Universitat Pompeu Fabra and Barcelona School of Economics: \textsf{%
kirill.evdokimov@upf.edu}.} \and Andrei \textsc{Zeleneev}\thanks{%
University College London: \textsf{a.zeleneev@ucl.ac.uk}.} \thanks{%
Evdokimov gratefully acknowledges the support from the Spanish MCIN/AEI via
grants RYC2020-030623-I, PID2019-107352GB-I00, and Severo Ochoa Programme
CEX2019-000915-S.}}
\date{This version: December, 2023}%
\maketitle

\begin{abstract}
This paper considers nonparametric identification and estimation of the
regression function when a covariate is mismeasured. The measurement error
need not be classical. Employing the small measurement error approximation,
we establish nonparametric identification under weak and easy-to-interpret
conditions on the instrumental variable. The paper also provides
nonparametric estimators of the regression function and derives their rates
of convergence.

\end{abstract}

\newpage

\section{Introduction}

Regression is a fundamental tool for empirical analysis. Errors-in-Variables
(EIV) are a widespread problem in empirical applications. Mismeasurement of
a covariate, when not accounted for, may lead to biased estimates and
invalid inferences.

The goal of this paper is to study the nonparametric identification and
estimation of the regression function when a covariate is mismeasured.
Importantly, the measurement error need not be classical and can be
correlated with the mismeasured covariate. In this paper, we adopt the small
measurement error approximation, which allows us to provide a simple
nonparametric characterization of the problem. Then we provide transparent
and constructive identification analysis under weak and easy-to-interpret
conditions on the instrumental variable.

First, we focus on the Weakly Classical Measurement Error (WCME) model,
where the measurement error is uncorrelated with the true covariate but
generally is not independent from it. We show that the skedastic function of
the measurement errors, together with its derivative, plays a key role in
determining the bias of the naive regression estimator. We also show how the
EIV skedastic function can be recovered from the distribution of the
observables using a (possibly discrete) instrument, and how one can
construct a bias-corrected estimator of the regression function. We derive
its rate of convergence and provide conditions under which the approximation
error becomes negligible compared to the errors arising from the
nonparametric estimation of unknown functions in large samples.

Next, we consider the general Non-Classical Measurement Error (NCME) model,
which allows for a very broad form of EIV. In particular, the measurement
error can be correlated with the true covariate. Even though the NCME model
is much more general than the WCME model, we demonstrate how the results
from our analysis of the WCME model can be utilized to establish
identification of the general NCME model.

Importantly, our approach only requires an instrumental variable that can be
discrete. This allows for a broader range of applications compared to the
methods that require a continuously distributed instrument or the
availability of repeated (multiple) measurements of the true covariate.%
{}
In Section~\ref{sec:WCME}, we discuss in detail the exclusion and relevance
conditions that the instrumental variable needs to satisfy, and consider
some examples.

Our paper contributes to the large literature studying models with
mismeasured data. \cite{CarrollEtAl2006Book-ME,ChenHongNekipelov2011JEL};
and \cite{Schennach2020HB-ME,Schennach2022JEcLit} provide excellent
literature overviews.

Our main focus is on the settings where the distribution of measurement
error is unknown. Nonparametric analysis of the EIV problem in such settings
requires additional information to separate the true covariate from the
measurement error. In Economics, most commonly instrumental variables are
used for this purpose (%
\citealp{HINP1991JoE,HausmanNeweyPowell1995JoE,Newey2001REStat,Schennach2007Ecta,Hu2008JoE,HuSchennach2008Ecta,Wilhelm2019WP-TestingForME}%
, among others). Repeated measurements can also be utilized (%
\citealp{HINP1991JoE,LiVuong1998JoE,Schennach2004Ecta}, among others) but
are less frequently available. Note that repeated measurements can serve as
valid instruments in our analysis.

Small measurement error (SME) approximation has been widely employed in
Statistics and Econometrics to study the effect of EIV on various estimators
and to bias-correct them (e.g., %
\citealp{WolterFuller1982AS,CarrollStefanski1990JASA,Chesher1991Biomet,Chesher2000WP,CarrollEtAl2006Book-ME,ChesherSchluter2002ReStud}%
, among others). \cite{BoundBrownMathiowetz2001HBoE} document that the EIV
in economic applications are typically relatively small although are often
non-classical, which suggests that our analysis should be useful in many
applied settings. This paper differs from the previous literature in two
ways. First, it appears to be the first paper to study the nonparametric SME
approximation with non-classical EIV. Second, previously developed SME bias
reduction techniques usually assume that the EIV variance is either known or
can be directly estimated from an available dataset, e.g., using repeated
measurements. In contrast, this paper demonstrates how the whole EIV
skedastic function can be identified and estimated using only a (possibly
discrete) instrumental variable.

The analysis of this paper complements the existing \textquotedblleft
large\textquotedblright\ measurement error literature. By focusing on a
narrower range of settings, the paper provides simpler characterizations of
the problem and estimators, which are valid under very weak and
easy-to-interpret conditions on the instrumental variable. In particular,
our identification results do not rely on the completeness conditions, and
estimation does not involve solving ill-posed inverse problems or
deconvolution.

The rest of this paper is organized as follows. Section~\ref{sec:WCME}
studies the Weakly Classical Measurement Error (WCME) model. Section~\ref%
{sec:NCME} considers the general Non-Classical Measurement Error (NCME)
model. The proofs are collected in the Appendix.

\bigskip

\section{Weakly Classical Measurement Errors\label{sec:WCME}}

We consider the regression model%
\begin{equation}
\rho \left( x\right) \equiv E\left[ Y_{i}|X_{i}^{\ast }=x\right] ,
\label{eq:reg rho x def}
\end{equation}%
where $Y_{i}\in \mathbb{R}$ is the outcome variable, and $X_{i}^{\ast }\in 
\mathbb{R}$ is the true value of the covariate for individual $i$. The
researcher observed a mismeasured version of $X_{i}^{\ast }$:%
\begin{equation*}
X_{i}=X_{i}^{\ast }+\varepsilon _{i}.
\end{equation*}%
where $\varepsilon _{i}$ is the measurement error. The researcher has a
random sample of $\left( Y_{i},X_{i},Z_{i}\right) $, where $Z_{i}$ are
instrumental variables that are used to identify the model and will be
discussed later. It is straightforward to also include correctly measured
covariates into the model, see Remark~\ref{rem:covars W} for details.

In this section we consider the Weakly Classical Measurement Error (WCME)\
model:

\begin{assumption}[WCME]
\namedlabel{ass:WCME}{WCME} $X_{i}=X_{i}^{\ast }+\varepsilon _{i}$ and $E%
\left[ \varepsilon _{i}|X_{i}^{\ast }\right] =0$.
\end{assumption}

The measurement error $\varepsilon _{i}$ is uncorrelated with the true
covariate $X_{i}^{\ast }$. Assumption~\ref{ass:WCME} is significantly weaker
than the (Strongly) Classical Measurement Error (CME) assumption, since $%
\varepsilon _{i}$ need not be independent from $X_{i}^{\ast }$. For example,
the measurement error can be conditionally heteroskedastic, i.e., its
conditional variance 
\begin{equation*}
v\left( x \right) \equiv V\left[ \varepsilon _{i}|X_{i}^{\ast }=x\right] .
\end{equation*}
may depend on $x$. Function $v\left( x^{\ast }\right) $ is usually unknown.

\begin{example*}[WCME-LIN-RC]
Suppose $X_{i}=\psi _{i1}+\psi _{i2}X_{i}^{\ast }$, where $\left( \psi
_{i1},\psi _{i2}\right) \perp X_{i}^{\ast }$. Assumption~\ref{ass:WCME} is
satisfied if $E\left[ \left( \psi _{i1},\psi _{i2}\right) \right] =\left(
0,1\right) $. Here $v(x) =\sigma _{\psi _{1}}^{2}+\sigma _{\psi _{2}}^{2}
x^2 +2\sigma _{\psi _{1}\psi _{2}}x$.%
\end{example*}

In this paper we use the Small Measurement Error (SME)\ approximation (e.g., %
\citealp{WolterFuller1982AS}) for the analysis, i.e., we will consider the
approximations of the model when $v(x^*)$ and the higher conditional moments
of $\varepsilon_i$ are small.

Specifically, we model the measurement error as $\varepsilon _{i}=\tau \xi
_{i}$ where the distribution of $\xi _{i}$ is fixed, and $\tau $ is a
non-stochastic parameter. Assumption~\ref{ass:WCME} requires $E[\xi
_{i}|X_{i}^{\ast }]=0$. The conditional variance of $\varepsilon _{i}$ is
given by $v(x)=\tau ^{2}V[\xi _{i}|X_{i}^{\ast }=x]=O(\tau ^{2})$.

We study the properties of the model when $\tau \rightarrow 0$. Under some
smoothness conditions, 
\begin{equation*}
E\left[ Y_{i}|X_{i}=x\right] =\rho \left( x\right) +O\left( \tau ^{2}\right)
.
\end{equation*}%
Thus, a naive regression estimator of $\rho $ that ignores the presence of
the measurement errors in $X_{i}$ has a bias of order $O\left( \tau
^{2}\right) $, e.g., see \cite{Chesher1991Biomet}.

The goal of the small measurement error analysis is to provide a function $%
\widetilde{\rho }\left( x\right) $ that {}has a smaller bias, i.e., satisfies%
\begin{equation}
\widetilde{\rho }\left( x\right) =\rho \left( x\right) +O\left( \tau
^{p}\right) ,  \label{eq:intro:rho-tilde goal}
\end{equation}%
for some $p\geq 3$.

To identify the model we will rely on an observed instrumental variable
(instrument) $Z_{i}$ that satisfies the following exogeneity assumption.

\setcounter{assumption}{0}

\begin{assumption}
\label{ass:NPID:exclusion} $E\left[ Y_{i}|X_{i}^{\ast },Z_{i}\right] =E\left[
Y_{i}|X_{i}^{\ast }\right] $.
\end{assumption}

This assumption states that $Z_{i}$ is an \textquotedblleft
excluded\textquotedblright\ variable: given $X_{i}^{\ast }$, instrument $%
Z_{i}$ has no effect on the conditional mean of $Y_{i}$. Without loss of
generality, we can assume that $Z_{i}$ is discrete. (The instrument also
needs to satisfy a \textquotedblleft relevance\textquotedblright condition:
it needs to affect the conditional distribution $f_{X^{\ast }|Z}\left(
x|z\right) $. This condition will appear in Theorem~\ref{thm:NPID-non-cl:IV}%
.)

\begin{assumption}
\label{ass:NPID:Nondiffer ME} $E\left[ Y_{i}|X_{i}^{\ast },Z_{i},X_{i}\right]
=E\left[ Y_{i}|X_{i}^{\ast },Z_{i}\right] $.
\end{assumption}

Assumption~\ref{ass:NPID:Nondiffer ME} says that the measurement error $%
\varepsilon _{i}$ is nondifferential: conditional on $\left( X_{i}^{\ast
},Z_{i}\right) $, $X_{i}$ provides no additional information about (the
conditional mean of) $Y_{i}$. This assumption can be equivalently stated as $%
E\left[ Y_{i}|X_{i}^{\ast },Z_{i},\varepsilon _{i}\right] =E\left[
Y_{i}|X_{i}^{\ast },Z_{i}\right] =E\left[ Y_{i}|X_{i}^{\ast }\right] $. 

\begin{assumption}
\label{ass:NCME} $\varepsilon _{i}=\tau \xi _{i}$, where $\tau \geq 0$ is
non-random, $E[\xi _{i}|X_{i}^{\ast }]=0$, and $f_{\xi |X^{\ast }Z}\left(
u|x,z\right) =f_{\xi |X^{\ast }}\left( u|x\right) $ $\forall u,x,z$. 
{}
\end{assumption}

Assumption~\ref{ass:NCME} and the smoothness conditions below are stated
using the auxiliary variable $\xi _{i}$, whose variance does not shrink.
This allows formulating the smoothness conditions in the conventional form.
For example, we will assume that the density $f_{\xi |X^{\ast }}\left(
u|x\right) $ is bounded. In contrast, the density of $\varepsilon _{i}=\tau
\xi _{i}$ is $f_{\varepsilon |X^{\ast }}\left( e|x\right) =\frac{1}{\tau }%
f_{\xi |X^{\ast }}\left( \left. \frac{e}{\tau }\right\vert x\right) $ and is
not bounded as $\tau \rightarrow 0$. {} Assumption~\ref{ass:NCME} also implies that $\varepsilon
_{i}\perp Z_{i}|X_{i}^{\ast }$.

Finally, the following two assumptions are smoothness conditions.

\begin{assumption}
\label{ass:NPID:smoothness} Function $\rho \left( x\right) $ and the
conditional densities $f_{X^{\ast }|Z}(x|z)$ and $f_{\xi |X^{\ast }}\left( u
|x\right) $ are bounded functions with $m\geq p$ bounded derivatives with
respect to $x$, for some integer $p\geq 3$.
\end{assumption}

\begin{assumption}
\label{ass:NPID:dominance} $\int |u|^{m}\sup_{\tilde{x}\in \suppX}\left\vert
\nabla_{x}^{\ell }f_{\xi |X^{\ast }}(u|\tilde{x})\right\vert du< \infty$ for 
$\ell \in \{0,\ldots,m\}$ for some closed convex set $\suppX \subseteq 
\mathbb{R}$ containing the supports of $X_i^*$ and $X_i$.
\end{assumption}

Assumption \ref{ass:NPID:dominance} is a weak restriction imposed on the
conditional moments of $\xi _{i}$. 
Appendix \ref{sec: verification of dominance} provides a set of primitive
conditions that guarantee that Assumption \ref{ass:NPID:dominance} holds.
Also notice that Assumption \ref{ass:NPID:dominance} would automatically
hold if %
the support of $\xi _{i}$ is bounded, since $f_{\xi |X^{\ast }}(u |x )$ and
its derivatives are uniformly bounded under Assumption \ref%
{ass:NPID:smoothness}.

We can now state the first main result of the paper. Let%
\begin{eqnarray}
q\left( x,z\right) &\equiv &E\left[ Y_{i}|X_{i}=x,Z_{i}=z\right] ,\qquad
q\left( x\right) \equiv E\left[ Y_{i}|X_{i}=x\right] ,  \notag \\
s_{X|Z}\left( x|z\right) &\equiv &\frac{f_{X|Z}^{\prime }\left( x|z\right) }{%
f_{X|Z}\left( x|z\right) },\qquad s_{X^{\ast }|Z}\left( x|z\right) \equiv 
\frac{f_{X^{\ast }|Z}^{\prime }\left( x|z\right) }{f_{X^{\ast }|Z}\left(
x|z\right) },  \notag \\
\widetilde{v}\left( x\right) &\equiv &\frac{q\left( x,z_{1}\right) -q\left(
x,z_{2}\right) }{q^{\prime }\left( x\right) \left[ s_{X|Z}\left(
x|z_{1}\right) -s_{X|Z}\left( x|z_{2}\right) \right] },  \label{eq: v tilde}
\\
\widetilde{\rho }\left( x,z\right) &\equiv &q\left( x,z\right) -\widetilde{v}%
\left( x\right) \left[ q^{\prime }\left( x\right) s_{X|Z}\left( x|z\right) +%
\tfrac{1}{2}q^{\prime \prime }\left( x\right) \right] -q^{\prime }\left(
x\right) \widetilde{v}^{\prime }\left( x\right) .  \label{eq: rho tilde}
\end{eqnarray}%
Let $\mathcal{S}_{X^{\ast }}(z)$ denote the conditional support of $%
X_{i}^{\ast }|Z_{i}=z$. Consider any two values $z_{1}$ and $z_{2}$ the
instrument can take.

\begin{theorem}
\label{thm:NPID-non-cl:IV}Suppose that Assumptions~\ref{ass:WCME} and \ref%
{ass:NPID:exclusion}-\ref{ass:NPID:dominance} are satisfied. Suppose either
(i) $p=3$, or (ii) $E\left[ \xi _{i}^{3}|X_{i}^{\ast }\right] =0$ and $p=4$.
Consider any point $x\in \mathcal{S}_{X^{\ast }}(z_{1})\cap \mathcal{S}%
_{X^{\ast }}(z_{2})$ such that 
\begin{equation}
\rho ^{\prime }\left( x\right) \left[ s_{X^{\ast }|Z}\left( x|z_{1}\right)
-s_{X^{\ast }|Z}\left( x|z_{2}\right) \right] \neq 0.
\label{eq:thm:rank cond}
\end{equation}%
Then, as $\tau \rightarrow 0$,%
\begin{eqnarray*}
\widetilde{v}\left( x\right) &=&v\left( x\right) +O\left( \tau ^{p}\right)
,\quad \text{and} \\
\widetilde{\rho }\left( x,z_{1}\right) &=&\rho \left( x\right) +O\left( \tau
^{p}\right) .
\end{eqnarray*}
\end{theorem}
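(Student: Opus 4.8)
The plan is to derive a second-order small-measurement-error expansion of the observable $q(x,z)$ and of the conditional score $s_{X|Z}(x|z)$ in powers of $\tau$, and then to check that the formulas \eqref{eq: v tilde}--\eqref{eq: rho tilde} are built precisely to recover $v(x)$ and to cancel the $O(\tau^{2})$ bias of the naive regression. First I would write the observables as convolutions. Using $E[\xi_i|X_i^{\ast}]=0$ together with $f_{\xi|X^{\ast}Z}=f_{\xi|X^{\ast}}$ (Assumption~\ref{ass:NCME}), and using Assumptions~\ref{ass:NPID:exclusion} and \ref{ass:NPID:Nondiffer ME} to obtain $E[Y_i|X_i^{\ast},Z_i,\varepsilon_i]=\rho(X_i^{\ast})$, the change of variables $u=(x-x^{\ast})/\tau$ gives $f_{X|Z}(x|z)=\int f_{X^{\ast}|Z}(x-\tau u|z)\,f_{\xi|X^{\ast}}(u|x-\tau u)\,du$ and the analogous representation of $q(x,z)f_{X|Z}(x|z)$ with the extra factor $\rho(x-\tau u)$ in the integrand.

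Next I would Taylor-expand the integrands in $\tau$ to order $m$ and integrate term by term, the integrability coming from Assumption~\ref{ass:NPID:dominance} and the differentiability from Assumption~\ref{ass:NPID:smoothness}. Writing $\mu_k(x)=E[\xi_i^{k}|X_i^{\ast}=x]$, the moment conditions $\mu_0\equiv1$, $\mu_1\equiv0$ (and, in case~(ii), $\mu_3\equiv0$) annihilate the odd terms, while $\tau^{2}\mu_2(x)=v(x)$ supplies the leading correction. This yields $f_{X|Z}(x|z)=f_{X^{\ast}|Z}(x|z)+\tfrac12[v\,f_{X^{\ast}|Z}(\cdot|z)]''(x)+O(\tau^{p})$ and, after dividing the two expansions and expanding the reciprocal (legitimate because the correction is $O(\tau^{2})$), the master expansion
\begin{equation*}
q(x,z)=\rho(x)+\tfrac12\rho''(x)v(x)+\rho'(x)v'(x)+\rho'(x)v(x)\,s_{X^{\ast}|Z}(x|z)+O(\tau^{p}).
\end{equation*}
The same computation without conditioning on $Z$ gives $q(x)=\rho(x)+O(\tau^{2})$, hence $q'=\rho'+O(\tau^{2})$ and $q''=\rho''+O(\tau^{2})$, while comparing the two density expansions gives $s_{X|Z}(x|z)=s_{X^{\ast}|Z}(x|z)+O(\tau^{2})$.

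The two conclusions then follow by bookkeeping. In the master expansion only the last summand depends on $z$, so differencing over $z_1,z_2$ removes the $z$-free $O(\tau^{2})$ bias terms \emph{exactly} and leaves $q(x,z_1)-q(x,z_2)=\rho'(x)v(x)\,\Delta(x)+O(\tau^{p})$, where $\Delta(x)\equiv s_{X^{\ast}|Z}(x|z_1)-s_{X^{\ast}|Z}(x|z_2)$; the denominator of \eqref{eq: v tilde} equals $\rho'(x)\Delta(x)+O(\tau^{2})$, which is nonzero by the rank condition \eqref{eq:thm:rank cond}. Since the numerator is $O(\tau^{2})$ and the denominator is bounded away from zero, a ratio expansion gives $\widetilde v(x)=v(x)+O(\tau^{p})$ (the orders close at $p=3$ in case~(i) and $p=4$ in case~(ii) precisely because $v=O(\tau^{2})$). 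Substituting $\widetilde v=v+O(\tau^{p})$, $q'=\rho'+O(\tau^{2})$, $q''=\rho''+O(\tau^{2})$ and $s_{X|Z}=s_{X^{\ast}|Z}+O(\tau^{2})$ into \eqref{eq: rho tilde}, and using $v=O(\tau^{2})$ to absorb the cross terms into $O(\tau^{p})$, the three subtracted quantities reproduce exactly $\tfrac12\rho''v$, $\rho'v\,s_{X^{\ast}|Z}(x|z_1)$ and $\rho'v'$ — the three $O(\tau^{2})$ bias terms of $q(x,z_1)$ — so they cancel and leave $\widetilde\rho(x,z_1)=\rho(x)+O(\tau^{p})$.

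I expect the main obstacle to be remainder control rather than the algebra. Two points need care. First, the $O(\tau^{p})$ remainders must hold uniformly on a neighborhood of $x$ and must survive differentiation, since \eqref{eq: rho tilde} uses $\widetilde v'(x)$ and the denominator of \eqref{eq: v tilde} uses $q'(x)$: establishing that the Taylor remainders (of the form $\int u^{m+1}\nabla_x^{m+1}[\cdots]\,du$) and their $x$-derivatives are genuinely $O(\tau^{p})$ is exactly the role of Assumptions~\ref{ass:NPID:smoothness} and \ref{ass:NPID:dominance}. Second, the order-tracking in the ratio defining $\widetilde v$: because the numerator and denominator sit at different $\tau$-orders, one must use $v=O(\tau^{2})$ and the nonvanishing of $\rho'(x)\Delta(x)$ to confirm that the relative error $O(\tau^{p-2})$ (valid since $p\le4$) multiplies a quantity of size $O(\tau^{2})$ and therefore lands at $O(\tau^{p})$.
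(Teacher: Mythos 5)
Your derivation of the master expansion and of the first conclusion $\widetilde{v}(x)=v(x)+O(\tau^{p})$ follows the same route as the paper (Lemma~\ref{lem:q(x,z) expansion} and the opening of the proof of Theorem~\ref{thm:NPID-non-cl:IV}), and your order bookkeeping there is exactly right, including the observation that the relative error $O(\tau^{p-2})$ multiplies $v=O(\tau^{2})$. The gap is in the second conclusion. The estimator \eqref{eq: rho tilde} contains the term $q'(x)\,\widetilde{v}'(x)$, so you must show $\widetilde{v}'(x)=v'(x)+O(\tau^{p})$; the pointwise statement $\widetilde{v}(x)=v(x)+O(\tau^{p})$ that you substitute into \eqref{eq: rho tilde} says nothing about the derivative of the approximation error. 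You flag the issue (``the remainders must survive differentiation'') but then assert that Assumptions~\ref{ass:NPID:smoothness} and \ref{ass:NPID:dominance} make it routine. They do not, for two reasons. First, your remainders are in Lagrange form, evaluated at a mean-value point $x-\tilde{\tau}(u)u$ whose location itself depends on $x$ non-smoothly, so they cannot be differentiated in $x$; one must pass to the integral form of the Taylor remainder. Second, and more fundamentally, the theorem allows $m=p$: an order-$p$ remainder already involves $\nabla_{x}^{p}$ of the densities (indeed, as written yours uses $\nabla_{x}^{m+1}$, which already exceeds the smoothness budget), so differentiating it would require $\nabla_{x}^{p+1}$ --- one derivative more than Assumption~\ref{ass:NPID:smoothness} provides. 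Term-by-term differentiation of your expansion of $q(x,z)$ is therefore not available under the stated hypotheses.

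This is precisely why the paper proves a separate, substantially harder result (Lemma~\ref{lem:diff q}): the difference $q(x,z_{1})-q(x,z_{2})$ is written over a common denominator as a double integral whose integrand carries the factor $\rho(x-\tau u_{1})-\rho(x-\tau u_{2})=O(\tau)$. This built-in cancellation buys one power of $\tau$ at the cost of only one derivative of $\rho$, so the densities need only be expanded to lower order, and the integral-form remainders can then be differentiated once more in $x$ while staying within the $m$-derivative budget. The output is $q(x,z_{1})-q(x,z_{2})=\rho'(x)v(x)\left[ s_{X^{\ast}|Z}(x|z_{1})-s_{X^{\ast}|Z}(x|z_{2})\right] +O_{x}(\tau^{p})$, i.e.\ a remainder that is small together with its $x$-derivative, which --- combined with the $O_{x}(\tau^{p-2})$ versions of $q'-\rho'$ and $s_{X|Z}-s_{X^{\ast}|Z}$ from Lemma~\ref{lem:smooth q prime and s prime} and the quotient rule --- delivers $\widetilde{v}'(x)=v'(x)+O(\tau^{p})$ and hence the final cancellation. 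Without an argument of this kind (or an explicit strengthening of the hypotheses to $m\geq p+1$ together with integral-form remainders), your concluding step that $\widetilde{\rho}(x,z_{1})=\rho(x)+O(\tau^{p})$ is unproven.
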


Theorem \ref{thm:NPID-non-cl:IV} demonstrates that $\widetilde \rho (x,z_1)$
identifies $\rho(x)$ up to an error of order $O(\tau^p)$ when $\tau
\rightarrow 0$. This is a substantial improvement over naive regression $%
q(x) $ which has a bias of order $O(\tau^2)$. The improvement in the
magnitude of the approximation error (from $O(\tau^2)$ to $O(\tau^4)$) is
especially noticeable when $E[\xi_i^3|X_i^*] = 0$, e.g., when the
measurement error is symmetric.

To establish the desired result, we first characterize the bias of $q(x,z)$
up to an error of order $O(\tau^p)$. The bias of $q(x,z)$ is of order $%
O(\tau^2)$ and determined by the conditional variance of the measurement
error $v(x)$ and its derivative $v^{\prime }(x)$, which are unknown. Then,
we show that $\widetilde v (x)$ identifies $v(x)$ up to an error of order $%
O(\tau^p)$.\footnote{%
We also demonstrate that $\widetilde v^{\prime }(x) = v^{\prime }(x) +
O(\tau^p)$. This is an important step of the proof.} This allows us to
approximate the bias of $q(x,z)$ with a sufficient precising using $%
\widetilde v(x)$ in place of $v(x)$. Finally, we construct $\widetilde \rho
(x,z_1)$ by bias correcting $q(x,z_1)$ and demonstrate that it approximates
the true regression function $\rho(x)$ up to an error of order $O(\tau^p)$.

The idea behind nonparametric identification is that although function $E%
\left[ Y_{i}|X_{i}^{\ast }=x,Z_{i}=z\right] $ does not depend on $z$,
function $q\left( x,z\right) \equiv E\left[ Y_{i}|X_{i}=x,Z_{i}=z\right] $
does vary with $z$. The theorem shows how this variation allows recovering $%
v\left( x\right) $. Specifically, the proof of the Theorem shows that%
\begin{equation}
q\left( x,z\right) =\rho \left( x\right) +v\left( x\right) \rho ^{\prime
}\left( x\right) s_{X^{\ast }|Z}\left( x|z\right) +\frac{1}{2}v\left(
x\right) \rho ^{\prime \prime }\left( x\right) +\rho ^{\prime }\left(
x\right) v^{\prime }\left( x\right) +O\left( \tau ^{p}\right) .
\label{eq:NPID:non-cl:key expr for q(x,z)}
\end{equation}%
Then it is shown that replacing the derivatives of $\rho $ with those of $q$
and $s_{X^{\ast }|Z}$ with $s_{X|Z}$ on the right-hand side in the above
equation does not increase the magnitude of the approximation error, i.e.,
that%
\begin{equation}
q\left( x,z\right) =\rho \left( x\right) +v\left( x\right) q^{\prime }\left(
x\right) s_{X|Z}\left( x|z\right) +\frac{1}{2}v\left( x\right) q^{\prime
\prime }\left( x\right) +q^{\prime }\left( x\right) v^{\prime }\left(
x\right) +O\left( \tau ^{p}\right) .
\label{eq:NPID:non-cl:key expr for q(x,z) - feasible}
\end{equation}

Note that only the second term on the right-hand side depends on $z$. Since $%
q$, $q^{\prime }$, and $s_{X|Z}$ are directly identified from the joint
distribution of the observables, considering the differences $q\left(
x,z_{1}\right) -q\left( x,z_{2}\right) $ then allows identification of $%
v\left( x\right) $ by $\widetilde{v}\left( x\right) $ up to an error of order%
$\ O\left( \tau ^{p}\right) $. This identification approach requires the
rank condition that $s_{X|Z}\left( x|z\right) $ depends on $z$, which is
ensured by equation~(\ref{eq:thm:rank cond}). In addition, it is necessary
that $q^{\prime }\left( x\right) \neq 0$, which is also ensured by equation~(%
\ref{eq:thm:rank cond}). The latter condition is weak: $\rho ^{\prime
}\left( x\right) =0$ for all $x$ only if $\rho \left( x\right) $ is a
constant.\footnote{%
For an analysis of the role of the conditions such as $\rho ^{\prime }\left(
x\right) \neq 0$ in the measurement error literature see \cite%
{EvdokimovZeleneev2018WP-Inference}.} {}

\begin{remark}
It is easy to check that $\widetilde{\rho }\left( x,z_{1}\right) =\widetilde{%
\rho }\left( x,z_{2}\right) $.{}
\end{remark}

\begin{corollary}[Classical Measurement Error]
\label{cor:NPID-CME}Suppose the hypotheses of Theorem~\ref%
{thm:NPID-non-cl:IV} hold, and the measurement error is classical, i.e., $%
\varepsilon _{i}\perp \left( X_{i}^{\ast },Z_{i},Y_{i}\right) $. Suppose
condition~(\ref{eq:thm:rank cond}) holds for some point $\dot{x}$. Then for
all $x$ and $z$ such that $x \in \mathcal{S}_{\mathcal{X}^*}(z)$, as $\tau
\rightarrow 0$, 
\begin{equation*}
\widetilde{\rho }_{\text{CME}}\left( x,z\right) =\rho \left( x\right)
+O\left( \tau ^{p}\right),
\end{equation*}%
where%
\begin{equation*}
\widetilde{\rho }_{\text{CME}}\left( x,z\right) \equiv q\left( x,z\right) -%
\widetilde{v}\left( \dot{x}\right) \left[ q^{\prime }\left( x\right)
s_{X|Z}\left( x|z\right) +\tfrac{1}{2}q^{\prime \prime }\left( x\right) %
\right] .
\end{equation*}
\end{corollary}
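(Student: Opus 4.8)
The plan is to exploit the fact that classical measurement error forces the skedastic function $v(\cdot)$ to be \emph{constant}, so that evaluating the identified object $\widetilde{v}$ at the single point $\dot{x}$ is enough to bias-correct $q(x,z)$ at every $x$. First I would note that independence $\varepsilon_i \perp (X_i^{\ast},Z_i,Y_i)$ implies $v(x)=V[\varepsilon_i\mid X_i^{\ast}=x]=V[\varepsilon_i]\equiv v$ does not depend on $x$, so that $v'(x)=0$ everywhere. This is the structural fact that the whole argument rests on.

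Next I would invoke the expansion \eqref{eq:NPID:non-cl:key expr for q(x,z) - feasible} established in the proof of Theorem~\ref{thm:NPID-non-cl:IV}, which holds at every $x\in\mathcal{S}_{X^{\ast}}(z)$ under Assumptions~\ref{ass:NPID:smoothness}--\ref{ass:NPID:dominance} and does not itself require the rank condition. Because $v'(x)=0$ its final term vanishes and, factoring out the constant $v$, it collapses to
\[
q(x,z)=\rho(x)+v\left[q'(x)\,s_{X|Z}(x|z)+\tfrac12 q''(x)\right]+O(\tau^p).
\]
Since condition~\eqref{eq:thm:rank cond} holds at $\dot{x}$, Theorem~\ref{thm:NPID-non-cl:IV} delivers $\widetilde{v}(\dot{x})=v(\dot{x})+O(\tau^p)=v+O(\tau^p)$, where the last step uses constancy of $v$. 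Substituting the display above together with this estimate into the definition of $\widetilde{\rho}_{\text{CME}}(x,z)$ and cancelling $q(x,z)$ gives
\[
\widetilde{\rho}_{\text{CME}}(x,z)-\rho(x)=\left(v-\widetilde{v}(\dot{x})\right)\left[q'(x)\,s_{X|Z}(x|z)+\tfrac12 q''(x)\right]+O(\tau^p).
\]
As $v-\widetilde{v}(\dot{x})=O(\tau^p)$ and the bracketed factor is $O(1)$, the product is $O(\tau^p)$, which is exactly the claim.

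The only genuine care needed — and the main, though minor, obstacle — is order bookkeeping: one must confirm that the bracket $q'(x)\,s_{X|Z}(x|z)+\tfrac12 q''(x)$ stays bounded as $\tau\to0$, so that multiplying it by the $O(\tau^p)$ error in $\widetilde{v}(\dot{x})$ does not inflate the order. This holds because, under Assumption~\ref{ass:NPID:smoothness}, $q$ and its derivatives converge to $\rho$ and its derivatives (the bracket tending to the finite limit $\rho'(x)\,s_{X^{\ast}|Z}(x|z)+\tfrac12\rho''(x)$) and $s_{X|Z}\to s_{X^{\ast}|Z}$, all finite. Conceptually, the corollary goes through precisely because classical error needs $v$ pinned down at only one point $\dot{x}$, in contrast to Theorem~\ref{thm:NPID-non-cl:IV}, where $v(x)$ varies with $x$ and the rank condition~\eqref{eq:thm:rank cond} must therefore be verified at each $x$ of interest; this is also why the $-q'(x)\widetilde{v}'(x)$ correction present in $\widetilde{\rho}(x,z)$ is absent from $\widetilde{\rho}_{\text{CME}}(x,z)$.
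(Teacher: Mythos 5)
Your proof is correct and follows essentially the same route as the paper, which justifies this corollary (without a separate formal proof) by exactly the argument you give: classical error makes $v(x)$ constant, so the $v'$ term in the feasible expansion \eqref{eq:NPID:non-cl:key expr for q(x,z) - feasible} vanishes and Theorem~\ref{thm:NPID-non-cl:IV} applied at the single point $\dot{x}$ pins down the constant $v$ up to $O(\tau^p)$. Your added bookkeeping that the bracket $q'(x)s_{X|Z}(x|z)+\tfrac12 q''(x)$ stays $O(1)$ is a sound (and slightly more careful) rendering of the same argument.
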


When the measurement error is classical, $v\left( x\right) $ is constant,
and hence the term containing $v^{\prime }\left( x\right) $ is absent from
equation~(\ref{eq:NPID:non-cl:key expr for q(x,z)}). In addition, Corollary~%
\ref{cor:NPID-CME} requires only a single point $\dot{x}$ satisfying the
rank condition (\ref{eq:thm:rank cond}) for identification of $\rho \left(
x\right) $ for all $x$, since $v\left( x\right) =v\left( \dot{x}\right) $
for all $x$.\footnote{%
For the result of Corollary \ref{cor:NPID-CME} to hold, it is sufficient to
require $v(x)$ to be constant, i.e., to assume that $\varepsilon_i$ is
homoskedastic, instead of requiring $\varepsilon _{i}\perp \left(
X_{i}^{\ast },Z_{i},Y_{i}\right)$.} In contrast, in the general case of
Theorem~\ref{thm:NPID-non-cl:IV}, nonparametric identification of $v\left(
x\right) $ and $\rho \left( x\right) $ for a given $x$ requires condition~(%
\ref{eq:thm:rank cond}) to hold at that point $x$. Identification of the
Classical Measurement Error model has been previously established in \cite%
{EvdokimovZeleneev-Estimation-WP-2022}.{}

\begin{remark}[Examples of IVs]
{} First, variable $X_{i}^{\ast }$ can be
caused by $Z_{i}$; for example, $X_{i}^{\ast }=q\left( Z_{i},\eta
_{i}\right) $ for some unobserved (vector) $\eta _{i}$ and function $q$.
Assumptions \ref{ass:NPID:exclusion} and \ref{ass:NPID:Nondiffer ME} will be
satisfied if $E\left[ U_{i}|Z_{i},\eta _{i},\varepsilon _{i}\right] =0$.

Second, variable $Z_{i}$ can be caused by $X_{i}^{\ast }$, for example be a
second measurement or proxy for $X_{i}^{\ast }$: $Z_{i}=\chi \left(
X_{i}^{\ast },\nu _{i}\right) $. For example, $Z_{i}$ can be a second
measurement: $Z_{i}=\alpha _{1}+\alpha _{2}X_{i}^{\ast }+\nu _{i}$.
Assumptions \ref{ass:NPID:exclusion} and \ref{ass:NPID:Nondiffer ME} will be
satisfied if $E\left[ U_{i}|X_{i}^{\ast },\nu _{i},\varepsilon _{i}\right]
=0 $.{}
\end{remark}

\begin{remark}
If the skedastic function $v\left( x\right) $ is known, there is no need in
having the instrumental variable $Z_{i}$. In this case one can use $%
\widetilde{\rho }\left( x\right) $ from equation~(\ref{eq: rho tilde}) with $%
q\left( x,z\right) $ and $\widetilde{v}\left( x\right) $ replaced by $%
q\left( x\right) $ and $v\left( x\right) $, and the conclusion of Theorem~%
\ref{thm:NPID-non-cl:IV} will continue to hold, i.e., $\widetilde{\rho }%
\left( x\right) =\rho \left( x\right) +O\left( \tau ^{p}\right) $.
\end{remark}

\begin{remark}
\label{rem:covars W}It is straightforward to include additional correctly
measured covariates $W_{i}$ into the model, and to consider regression
function $\rho \left( x,w\right) \equiv E\left[ Y_{i}|X_{i}^{\ast }=x,W_{i}=w%
\right] $. The correctly measured covariates $W_{i}$ play no special role,
and all of the analysis can be thought of as applying conditionally on $%
W_{i}=w$ for any given $w$, i.e., for the stratum with $W_{i}=w$. Thus, we
omit $W_{i}$ for simplicity of exposition.
\end{remark}

\bigskip

\paragraph{Nonparametric Estimation}

Theorem \ref{thm:NPID-non-cl:IV} suggests an analogue estimator of $%
\widetilde{\rho }$ by replacing functions that appear in equations~(\ref{eq:
v tilde})-(\ref{eq: rho tilde}) with their standard nonparametric estimators
(e.g., kernel or sieve), with optimally chosen tuning parameters. Let $%
\widehat{\rho }\left( x\right) $ denote this estimator. As an alternative to 
$\widehat{\rho }\left( x\right)$, we also consider $\widehat{\rho }^{\text{%
Naive}}\left( x\right)$, a naive nonparametric estimator of $\rho(x)$
ignoring the presence of the measurement error.

To approximate the finite sample properties of the studied estimators when $%
\tau$ is small, we consider a triangular asymptotic framework with drifting $%
\tau = \tau_n$ converging to zero as the sample size $n \rightarrow \infty$.

\begin{lemma}
\label{lem:NCME:IV:rates}Suppose the hypotheses of Theorem~\ref%
{thm:NPID-non-cl:IV} hold and $Z_{i}$ is discrete. Also, suppose $\tau
_{n}=o(1)$, then%
\begin{eqnarray*}
\widehat{\rho }\left( x\right) -\rho \left( x\right) &=&O_{p}\left( n^{-%
\frac{m-1}{2m+1}}+\tau _{n}^{p}\right) , \\
\widehat{\rho }^{\text{Naive}}\left( x\right) -\rho \left( x\right)
&=&O_{p}\left( n^{-\frac{m}{2m+1}}+\tau _{n}^{2}\right) .
\end{eqnarray*}
\end{lemma}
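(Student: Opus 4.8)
The plan is to split each estimator's error into a deterministic approximation (bias) part and a stochastic estimation part,
\[
\widehat{\rho}(x)-\rho(x)=\big[\widehat{\rho}(x)-\widetilde{\rho}(x,z_1)\big]+\big[\widetilde{\rho}(x,z_1)-\rho(x)\big],
\]
and analogously for the naive estimator around $q(x)$. Theorem~\ref{thm:NPID-non-cl:IV} already controls the approximation part of $\widehat{\rho}$ at $O(\tau_n^{p})$, and the discussion preceding the theorem gives $q(x)-\rho(x)=O(\tau_n^{2})$ for the naive case. It then remains to bound the estimation parts using standard pointwise rates for kernel or sieve estimators: under Assumption~\ref{ass:NPID:smoothness}, with a tuning parameter of order $n^{-1/(2m+1)}$, the $\ell$-th derivative of any of the $m$-times differentiable objects $q(\cdot,z)$, $q(\cdot)$, and $f_{X|Z}(\cdot|z)$ is estimated at the pointwise rate $n^{-(m-\ell)/(2m+1)}$. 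These feed into $s_{X|Z}$, $\widetilde{v}$, and $\widetilde{v}'$ through smooth ratio and product maps whose denominators are bounded away from zero near $x$ by the rank condition~(\ref{eq:thm:rank cond}).

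The naive case is immediate: $\widehat{\rho}^{\text{Naive}}$ is the nonparametric regression of $Y_i$ on $X_i$, so $\widehat{\rho}^{\text{Naive}}(x)-q(x)=O_p(n^{-m/(2m+1)})$, and adding the $O(\tau_n^{2})$ bias yields the stated rate.

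For $\widehat{\rho}$ the key structural observation is that every term of $\widetilde{\rho}(x,z_1)$ beyond the leading $q(x,z_1)$ carries a factor $\widetilde{v}(x)$ or $\widetilde{v}'(x)$, both of which are $O(\tau_n^{2})$, because the numerator $N(x)\equiv q(x,z_1)-q(x,z_2)$ of $\widetilde{v}$ is $O(\tau_n^{2})$ while its denominator is $O(1)$. Linearizing the ratio and the products (delta-method expansions), the numerator error of $\widetilde{v}$ enters at the function-value rate while its denominator error enters only $\tau_n^{2}$-weighted, so $\widehat{\widetilde{v}}(x)-\widetilde{v}(x)=O_p(n^{-m/(2m+1)})$. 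The only estimation error that is both non-negligible and not shielded by a $\tau_n^{2}$ factor arises in the term $q'(x)\widetilde{v}'(x)$: differentiating $\widetilde{v}=N/D$ produces the piece $q'(x)(\widehat{N'}-N')/D$, where $N'(x)=q'(x,z_1)-q'(x,z_2)$ is estimated at the first-derivative rate $n^{-(m-1)/(2m+1)}$. This error does not shrink with $\tau_n$ even though the true $N'$ is $O(\tau_n^{2})$, and it is the slowest unweighted contribution, pinning down the $n^{-(m-1)/(2m+1)}$ term.

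It then remains to check that every other contribution is dominated by $n^{-(m-1)/(2m+1)}+\tau_n^{p}$. The heaviest involve second derivatives ($q''$ directly, and $f''$ through $s_{X|Z}'$ inside $\widetilde{v}'$), but each appears multiplied by $\widetilde{v}$ or $N$, i.e.\ $\tau_n^{2}$-weighted, and so contributes at most $\tau_n^{2}n^{-(m-2)/(2m+1)}$. The elementary inequality $\tau_n^{2}n^{-(m-2)/(2m+1)}=O\big(n^{-(m-1)/(2m+1)}+\tau_n^{p}\big)$ closes the argument: if $\tau_n^{2}n^{1/(2m+1)}\le 1$ the left-hand side is at most $n^{-(m-1)/(2m+1)}$; if instead $\tau_n^{2}n^{1/(2m+1)}>1$, then $n^{-(m-2)/(2m+1)}<\tau_n^{2(m-2)}$, so the left-hand side is at most $\tau_n^{2m-2}\le\tau_n^{p}$, using $m\ge p\ge 3$ (hence $2m-2\ge p$) and $\tau_n\le 1$ for $n$ large. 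The main obstacle is precisely this bookkeeping: one must verify that the second-derivative estimation errors are shielded by the $O(\tau_n^{2})$ prefactors, so that the bottleneck is first- rather than second-derivative estimation; without this, one would obtain only the slower $n^{-(m-2)/(2m+1)}$ rate.
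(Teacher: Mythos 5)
Your proposal is correct and follows essentially the same route as the paper's proof: the same decomposition into estimation error plus the $O(\tau_n^p)$ approximation error from Theorem~\ref{thm:NPID-non-cl:IV}, the same use of standard pointwise rates for $q$, $q'$, $q''$, $s_{X|Z}$, $s_{X|Z}'$, the same key observation that all second-derivative estimation errors enter $\tau_n^2$-weighted (through $\widetilde v$ or $\widetilde v'$) so the unshielded bottleneck is the first-derivative term $q'(x,z_1)-q'(x,z_2)$, and the same regime-splitting inequality to absorb $\tau_n^2 n^{-(m-2)/(2m+1)}$ into $n^{-(m-1)/(2m+1)}+\tau_n^p$. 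The only differences are cosmetic: your stated rate for $\widehat v$ drops the $\tau_n^2 n^{-(m-1)/(2m+1)}$ denominator contribution (harmless, as it is $o(n^{-(m-1)/(2m+1)})$), and you assert rather than verify that $q(\cdot,z)$, $q(\cdot)$, and $f_{X|Z}(\cdot|z)$ inherit $m$ bounded derivatives, which the paper checks by differentiating under the integral using Assumptions~\ref{ass:NPID:smoothness} and \ref{ass:NPID:dominance}.
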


Lemma \ref{lem:NCME:IV:rates} establishes the rates of convergence for the
proposed and naive estimators. For each estimator, the rate of convergence
is determined by two components: the standard nonparametric learning rate
and the EIV (errors-in-variables) bias due to the presence of the
measurement error.

The nonparametric learning rate for $\widehat \rho (x)$ is slower than for
the naive estimator because it involves nonparametric estimation of
derivatives such as $q^{\prime }(x)$ and $s_{X|Z}(x|z)$. However, as Theorem %
\ref{thm:NPID-non-cl:IV} suggests, the EIV bias of the proposed estimator $%
\widehat \rho (x)$ is of order $O(\tau_n^p)$, whereas the naive estimator
has a much larger bias of order $O(\tau_n^2)$. Thus, despite the slower
nonparametric learning rate, $\widehat \rho (x)$ has a faster rate of
convergence than $\widehat{\rho }^{\text{Naive}}\left( x\right)$ unless $%
\tau_n$ is very small, i.e., the measurement error is negligible.

To illustrate this result, suppose the conditions of Theorem~\ref%
{thm:NPID-non-cl:IV}(ii) hold, $m=p=4$, and $\tau _{n}={}O\left( n^{-\frac{1%
}{12}}\right) $. Then $\widehat{\rho }\left( x\right) -\rho \left( x\right)
=O_{p}\left( n^{-\frac{1}{3}}\right) $, but the naive estimator has a much
slower rate of convergence: $\widehat{\rho }^{\text{Naive}}\left( x\right)
-\rho \left( x\right) =O_{p}\left( n^{-\frac{1}{6}}\right) $, because of the
EIV bias.

\bigskip

\bigskip

\section{General Non-Classical Measurement Error\label{sec:NCME}}

In this section, we will use notation $\mathcal{X}_{i}^{\ast }$ for the true
mismeasured covariate and consider the general measurement model 
\begin{equation}
X_{i}= \mmm \left( \mathcal{X}_{i}^{\ast },\psi _{i}\right) ,
\label{eq:NC ME def 1}
\end{equation}
where $\mmm$ is an unknown function and $\psi _{i}$ is a random vector
independent from $\left( Y_{i},\mathcal{X}_{i}^{\ast },Z_{i}\right) $.
Function $\mmm$ need not be monotone in any of the arguments. The
measurement error is non-classical: $X_{i}- \mathcal{X}_{i}^{\ast }$ and $%
\mathcal{X}_{i}^{\ast }$ are generally correlated.

As before, we want to identify and estimate the regression function 
\begin{equation}
\rho _{\mathcal{X}^{\ast }}\left( \varkappa \right) \equiv E\left[ Y_{i}|%
\mathcal{X}_{i}^{\ast }=\varkappa \right] .  \label{eq: NC ME rho def}
\end{equation}

We will assume that the measurement is sufficiently informative about the
true $\mathcal{X}_{i}^{\ast }$. Define%
\begin{equation*}
\mu \left( \varkappa ^{\ast }\right) \equiv E\left[ \left. X_{i}\right\vert 
\mathcal{X}_{i}^{\ast }=\varkappa ^{\ast }\right].
\end{equation*}

\begin{assumption}[MONOT-MEAS]
\namedlabel{ass:MONOT-MEAS}{MONOT-MEAS} $\mu \left( \varkappa ^{\ast
}\right) $ is a strictly increasing function.
\end{assumption}

\begin{example*}[NCME-LIN-RC, continued]
For $X_{i}=\psi _{i1}+\psi _{i2}\mathcal{X}_{i}^{\ast }$, we have $\mu
\left( \varkappa ^{\ast }\right) =c_{\psi 1}+c_{\psi 2}\varkappa ^{\ast }$,
and Assumption~\ref{ass:MONOT-MEAS} is satisfied if $c_{\psi 2}>0$. Note
that $\psi _{i2}$ is allowed to take negative values, which makes $\mmm$ a
decreasing function of $\mathcal{X}_{i}^{\ast }$ for such observations.
\end{example*}

Note that since functions $\mmm$ and $\rho _{\mathcal{X}^{\ast }}$ are
unrestricted, the model \eqref{eq:NC ME def 1}-\eqref{eq: NC ME rho def}
cannot be identified without some normalization or additional information.
Specifically, for any strictly increasing function $\lambda $, we can define
an observationally equivalent model with $\mathcal{\tilde{X}}_{i}^{\ast
}\equiv \lambda \left( \mathcal{X}_{i}^{\ast }\right) $, $\rho _{\mathcal{%
\tilde{X}}^{\ast }}\left( \tilde{\varkappa}^{\ast }\right) \equiv \rho _{%
\mathcal{X}^{\ast }}\left( \lambda ^{-1}\left( \tilde{\varkappa}^{\ast
}\right) \right) $ and $\mmm_{\mathcal{\tilde{X}}_{i}^{\ast }}\left( \tilde{%
\varkappa}^{\ast },\psi \right) \equiv \mmm\left( \lambda ^{-1}\left( \tilde{%
\varkappa}^{\ast }\right) ,\psi \right) $. {}

Let us \emph{define} random variable 
\begin{equation}
X_{i}^{\ast }\equiv \mu \left( \mathcal{X}_{i}^{\ast }\right) .
\label{eq:def:Xis}
\end{equation}%
Then,%
\begin{equation*}
E\left[ X_{i}|X_{i}^{\ast }\right] =E\left[ X_{i}|\mu \left( \mathcal{X}%
_{i}^{\ast }\right) \right] =E\left[ X_{i}|\mathcal{X}_{i}^{\ast }\right]
=\mu \left( \mathcal{X}_{i}^{\ast }\right) =X_{i}^{\ast },
\end{equation*}%
where the first equality follows by~(\ref{eq:def:Xis}), the second equality
follows from the strict monotonicity of $\mu (\cdot)$, the third equality is
the definition of $\mu \left( \cdot \right) $, and the last equality follows
from~(\ref{eq:def:Xis}).

Thus, for the general measurement error model we can consider an
observationally equivalent model that defines $X_{i}^{\ast }$ as in
equation~(\ref{eq:def:Xis}):%
\begin{eqnarray*}
\rho _{X^{\ast }}\left( x\right) &\equiv &E\left[ Y_{i}|X_{i}^{\ast }=x%
\right] , \\
X_{i} &=&X_{i}^{\ast }+\varepsilon _{i},\qquad E[\varepsilon
_{i}|X_{i}^{\ast }]=0.
\end{eqnarray*}%
Since in this model Assumption~\ref{ass:WCME} holds, we can apply the result
of Theorem~\ref{thm:NPID-non-cl:IV} to identify $\rho _{X^{\ast }}\left(
x\right) $ and $v\left( x\right) \equiv E\left[ \varepsilon
_{i}^{2}|X_{i}^{\ast }=x\right] $ (up to an error of order $O(\tau ^{p})$)
using $\widetilde{v}(x)$ and $\widetilde{\rho }(x)$ defined in equations~%
\eqref{eq: v tilde} and \eqref{eq: rho tilde}, respectively. Notice that $%
\rho _{\mathcal{X}^{\ast }}\left( \varkappa \right) =\rho _{X^{\ast }}\left(
\mu \left( \varkappa \right) \right) $. However, since $\mathcal{X}%
_{i}^{\ast }$ is not observed, one cannot identify $\mu \left( \varkappa
\right) $ and $\rho _{\mathcal{X}^{\ast }}(\varkappa )$ without some
sadditional information.

Suppose for a moment that the marginal distribution $F_{\mathcal{X}^{\ast }}$
of $\mathcal{X}_{i}^{\ast }$ is known (for example, from a separate dataset,
e.g., administrative records). In this case, we can identify $\rho _{%
\mathcal{X}^{\ast }}(\varkappa )$ up to an error of order $O(\tau ^{p})$
using 
\begin{equation}
\widetilde{\rho }_{\mathcal{X}^{\ast }}(\varkappa ,z)\equiv \widetilde{\rho }%
_{X^{\ast }}\left( \widetilde{Q}_{X^{\ast }}(F_{\mathcal{X}^{\ast
}}(\varkappa )),z\right) ,  \label{eq:def:rho tilde NC}
\end{equation}%
where 
\begin{equation}
\widetilde{Q}_{X^{\ast }}\left( s\right) \equiv Q_{X}\left( s\right) +\frac{1%
}{2}\left\{ s_{X}\left( Q_{X}\left( s\right) \right) \widetilde{v}\left(
Q_{X}\left( s\right) \right) +\nabla _{x}\widetilde{v}\left( Q_{X}\left(
s\right) \right) \right\} .  \label{eq:def:Q tilde NC}
\end{equation}%
Here $F_{\mathcal{X}^{\ast }}(\cdot )$ denotes the CDF of $\mathcal{X}%
_{i}^{\ast }$, and $Q_{X}(\cdot )$ denotes the quantile function (QF) of $%
X_{i}$. Note that all functions on the right-hand side of equation~(\ref%
{eq:def:Q tilde NC}) are identified directly from the observed data.

First, we demonstrate that $\widetilde Q_{X^*} (s) = Q_{X^*}(s) + O(\tau^p)$%
, where $Q_{X^*}(\cdot)$ denotes the quantile function of $X_i^*$. Combining
this with the result of Theorem \ref{thm:NPID-non-cl:IV} allows us to
establish the desired result formalized by the theorem below.

\setcounter{assumption}{0}

\begin{assumption}
\label{ass:NPID:integrable derivatives} $\int \left \vert \nabla_x^\ell
f_{X^*}(x) \right \vert dx < \infty$ for $\ell \in \{1, \ldots, p\}$.
\end{assumption}

\begin{theorem}
\label{thm:NCME rho tilde} Suppose that the hypotheses of Theorem \ref%
{thm:NPID-non-cl:IV} are satisfied for $x = Q_{X^* } \left(F_{\mathcal{X}^*}
(\varkappa)\right)$. Also, suppose Assumptions \ref{ass:MONOT-MEAS} and \ref%
{ass:NPID:integrable derivatives} hold. Then, as $\tau \rightarrow 0$, 
\begin{align*}
\widetilde \rho_{\mathcal{X}^*} (\varkappa, z_1) = \rho_{\mathcal{X}%
^*}(\varkappa) + O(\tau^p).
\end{align*}
\end{theorem}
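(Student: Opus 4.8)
The plan is to reduce the theorem to the WCME analysis of Theorem~\ref{thm:NPID-non-cl:IV} by expressing $\rho_{\mathcal{X}^*}$ through the already-identified objects $\rho_{X^*}$ and $Q_{X^*}$, and then to supply the one genuinely new ingredient: a small-error expansion of the quantile function of $X_i$ around that of $X_i^*$. First I would record the exact identity underlying the construction. Since $\mu$ is strictly increasing (Assumption~\ref{ass:MONOT-MEAS}) and $X_i^* \equiv \mu(\mathcal{X}_i^*)$, the events $\{X_i^* \le \mu(\varkappa)\}$ and $\{\mathcal{X}_i^* \le \varkappa\}$ coincide, so $F_{X^*}(\mu(\varkappa)) = F_{\mathcal{X}^*}(\varkappa)$ and hence $\mu(\varkappa) = Q_{X^*}(F_{\mathcal{X}^*}(\varkappa))$. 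Combining this with $\rho_{\mathcal{X}^*}(\varkappa) = \rho_{X^*}(\mu(\varkappa))$ gives the target representation $\rho_{\mathcal{X}^*}(\varkappa) = \rho_{X^*}(Q_{X^*}(F_{\mathcal{X}^*}(\varkappa)))$, which is exactly the object that $\widetilde\rho_{\mathcal{X}^*}(\varkappa,z_1)$ in \eqref{eq:def:rho tilde NC} mimics with $\widetilde\rho_{X^*}$ and $\widetilde Q_{X^*}$ in place of $\rho_{X^*}$ and $Q_{X^*}$.

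The heart of the argument, and the step I expect to be the main obstacle, is establishing $\widetilde Q_{X^*}(s) = Q_{X^*}(s) + O(\tau^p)$. I would obtain this from a small-error expansion of the law of $X_i = X_i^* + \tau\xi_i$. Taylor-expanding a smooth test function and taking conditional expectations given $X_i^*$ (the first-order term drops because $E[\xi_i|X_i^*]=0$), then integrating by parts — with boundary terms and remainders controlled by Assumptions~\ref{ass:NPID:dominance} and \ref{ass:NPID:integrable derivatives} — yields the density expansion $f_X(x) = f_{X^*}(x) + \tfrac12 \nabla_x^2[v(x) f_{X^*}(x)] + O(\tau^p)$, where the $O(\tau^3)$ correction vanishes precisely in case (ii) because $E[\xi_i^3|X_i^*]=0$. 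Integrating once gives the CDF version $F_X(x) = F_{X^*}(x) + \tfrac12 \nabla_x[v(x) f_{X^*}(x)] + O(\tau^p)$, uniformly in $x$.

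I would then invert this relation at a fixed quantile level $s$. Writing $\delta \equiv Q_{X^*}(s) - Q_X(s)$ and evaluating the CDF expansion at $x = Q_X(s)$ (where $F_X = s$), a one-term Taylor expansion of $F_{X^*}$ around $Q_X(s)$ gives $f_{X^*}(Q_X(s))\,\delta = \tfrac12 \nabla_x[v\,f_{X^*}](Q_X(s)) + O(\delta^2) + O(\tau^p)$, so that $\delta = \tfrac12\{v'(Q_X(s)) + v(Q_X(s)) s_{X^*}(Q_X(s))\} + O(\tau^p)$, where $s_{X^*}$ is the score of $f_{X^*}$. The care here is bookkeeping of orders: $\delta = O(\tau^2)$, so $\delta^2 = O(\tau^4)$, which is absorbed into $O(\tau^p)$ in both cases; replacing $s_{X^*}$ by the observed score $s_X$ (they differ by $O(\tau^2)$, multiplied by $v = O(\tau^2)$) and $v,v'$ by $\widetilde v, \widetilde v'$ (equal up to $O(\tau^p)$, as established in the proof of Theorem~\ref{thm:NPID-non-cl:IV}) perturbs $\delta$ only by $O(\tau^p)$. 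This reproduces the definition \eqref{eq:def:Q tilde NC} and gives $\widetilde Q_{X^*}(s) = Q_{X^*}(s) + O(\tau^p)$.

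Finally I would propagate the two approximation errors through the composition. Set $s = F_{\mathcal{X}^*}(\varkappa)$ and $x = Q_{X^*}(s) = \mu(\varkappa)$, the point at which the hypotheses of Theorem~\ref{thm:NPID-non-cl:IV} are assumed. Since $\widetilde Q_{X^*}(s) = x + O(\tau^p)$ lies in a shrinking neighborhood of $x$, applying Theorem~\ref{thm:NPID-non-cl:IV} there gives $\widetilde\rho_{X^*}(\widetilde Q_{X^*}(s), z_1) = \rho_{X^*}(\widetilde Q_{X^*}(s)) + O(\tau^p)$, and a one-term Taylor expansion of the bounded-derivative function $\rho_{X^*}$ together with $\widetilde Q_{X^*}(s) - x = O(\tau^p)$ gives $\rho_{X^*}(\widetilde Q_{X^*}(s)) = \rho_{X^*}(x) + O(\tau^p) = \rho_{\mathcal{X}^*}(\varkappa) + O(\tau^p)$. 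Chaining these yields $\widetilde\rho_{\mathcal{X}^*}(\varkappa, z_1) = \rho_{\mathcal{X}^*}(\varkappa) + O(\tau^p)$, as claimed.
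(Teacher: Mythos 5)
Your proposal is correct and follows essentially the same route as the paper: the monotonicity identity $\mu(\varkappa)=Q_{X^{*}}\left(F_{\mathcal{X}^{*}}(\varkappa)\right)$, a CDF expansion of $F_{X}$ around $F_{X^{*}}$ inverted into the quantile approximation $\widetilde{Q}_{X^{*}}(s)=Q_{X^{*}}(s)+O(\tau^{p})$ with feasible quantities substituted (this is exactly the paper's Lemma~\ref{lem:CDF and QF} plus the $\widetilde{v},\widetilde{v}'$ results from the proof of Theorem~\ref{thm:NPID-non-cl:IV}), and then propagation of the two $O(\tau^{p})$ errors through the composition. The only step you pass over quickly is ``applying Theorem~\ref{thm:NPID-non-cl:IV} there'' at the $\tau$-dependent point $\widetilde{Q}_{X^{*}}(s)$: since Theorem~\ref{thm:NPID-non-cl:IV} is a pointwise statement at a fixed $x$, the paper makes this rigorous by noting that all remainders in its proof are explicitly bounded, uniformly over a ball $B_{\delta}\left(Q_{X^{*}}\left(F_{\mathcal{X}^{*}}(\varkappa)\right)\right)$ on which the densities are bounded away from zero (and, by continuity, the rank condition holds with a margin), which yields $\sup_{x\in B_{\delta}}\abs{\widetilde{\rho}_{X^{*}}(x,z_{1})-\rho_{X^{*}}(x)}=O(\tau^{p})$ and only then permits evaluation at the moving point.
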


Theorem \ref{thm:NCME rho tilde} demonstrates that, if the marginal
distribution of $\mathcal{X}_{i}^{\ast }$ is given, it is possible to
identify $\rho (\varkappa )$ up to an error of order $O(\tau ^{p})$ in the
general NCME model \eqref{eq:NC ME def 1} building on the identification
results for the WCME model.

Note that obtaining (an estimate of) the marginal distribution $F_{\mathcal{X%
}^{\ast }}$ is a much simpler task than obtaining a validation sample, i.e.,
the data on $\left( X_{i},\mathcal{X}_{i}^{\ast }\right) $ jointly. For
example, suppose $\mathcal{X}_{i}^{\ast }$ are individual wages, and $X_{i}$
are self-reported wages in a survey. The marginal distribution $F_{\mathcal{X%
}^{\ast }}$ can be provided by the Social Security Administration or similar
tax authorities in other countries. Providing such marginal distribution
does not pose any privacy risks. In contrast, obtaining a validation sample
that links individual's responses $X_{i}$ to the individual's social
security records $\mathcal{X}_{i}^{\ast }$ is a difficult task that in
particular faces major challenges concerning privacy.

If the distribution of $\mathcal{X}_{i}^{\ast }$ is unknown we can still
apply Theorem \ref{thm:NCME rho tilde} at $\varkappa =Q_{\mathcal{X}^{\ast
}}(q)$ for any quantile $q\in \left( 0,1\right) $ to identify $E[Y_{i}|%
\mathcal{X}_{i}^{\ast }=Q_{\mathcal{X}^{\ast }}(q)]=\rho _{\mathcal{X}^{\ast
}}(Q_{\mathcal{X}^{\ast }}(q))$, where $Q_{\mathcal{X}^{\ast }}(\cdot )$ is
the (unknown) quantile function of $\mathcal{X}_{i}^{\ast }$:%

\begin{corollary}
\label{cor: NCME quantiles} Suppose that the hypotheses of Theorem \ref%
{thm:NCME rho tilde} are satisfied for $x = Q_{X^*} (q)$. Then, as $\tau
\rightarrow 0$, 
\begin{align*}
\widetilde \rho_{X^*} \left(\widetilde Q_{X^*} (q), z_1\right) = \widetilde
\rho_{\mathcal{X}^*} (Q_{\mathcal{X}^*} (q), z_1) = \rho_{\mathcal{X}^*} (Q_{%
\mathcal{X}^*} (q)) + O (\tau^p).
\end{align*}
\end{corollary}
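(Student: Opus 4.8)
The plan is to read the corollary as the specialization of Theorem~\ref{thm:NCME rho tilde} to the point $\varkappa = Q_{\mathcal{X}^*}(q)$, the key observation being that evaluating the estimand at a population quantile of $\mathcal{X}_i^*$ makes the (unknown) marginal $F_{\mathcal{X}^*}$ drop out of the construction~\eqref{eq:def:rho tilde NC}. Thus essentially no new approximation argument is required: the two equalities follow, respectively, from the definition of $\widetilde\rho_{\mathcal{X}^*}$ and from a direct appeal to Theorem~\ref{thm:NCME rho tilde}.

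First I would establish the left equality. By definition~\eqref{eq:def:rho tilde NC}, $\widetilde\rho_{\mathcal{X}^*}(\varkappa, z_1) = \widetilde\rho_{X^*}(\widetilde Q_{X^*}(F_{\mathcal{X}^*}(\varkappa)), z_1)$, so setting $\varkappa = Q_{\mathcal{X}^*}(q)$ it suffices to verify the exact identity $F_{\mathcal{X}^*}(Q_{\mathcal{X}^*}(q)) = q$. This is the standard fact $F(Q(q)) = q$, which holds whenever $F$ is continuous, i.e., whenever $\mathcal{X}_i^*$ has no atoms. Continuity of $F_{\mathcal{X}^*}$ follows from the maintained assumptions: since $\mu$ is strictly increasing by Assumption~\ref{ass:MONOT-MEAS} it is injective, and because $X_i^* = \mu(\mathcal{X}_i^*)$, an atom of $\mathcal{X}_i^*$ at some point $\varkappa_0$ would force an atom of $X_i^*$ at $\mu(\varkappa_0)$; but $X_i^*$ admits a density under Assumption~\ref{ass:NPID:smoothness} (the conditional densities $f_{X^*|Z}(\cdot|z)$ are bounded, hence so is the marginal density of $X_i^*$), so $X_i^*$ has no atoms and neither does $\mathcal{X}_i^*$. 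Therefore $F_{\mathcal{X}^*}(Q_{\mathcal{X}^*}(q)) = q$, and the definition collapses to $\widetilde\rho_{\mathcal{X}^*}(Q_{\mathcal{X}^*}(q), z_1) = \widetilde\rho_{X^*}(\widetilde Q_{X^*}(q), z_1)$, giving the first equality.

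For the right equality I would simply invoke Theorem~\ref{thm:NCME rho tilde} at $\varkappa = Q_{\mathcal{X}^*}(q)$. That theorem requires the hypotheses of Theorem~\ref{thm:NPID-non-cl:IV} to hold at $x = Q_{X^*}(F_{\mathcal{X}^*}(\varkappa))$; with $\varkappa = Q_{\mathcal{X}^*}(q)$ and the identity just established, this point is exactly $x = Q_{X^*}(q)$, which is precisely the hypothesis imposed in the corollary's statement. Theorem~\ref{thm:NCME rho tilde} then delivers $\widetilde\rho_{\mathcal{X}^*}(Q_{\mathcal{X}^*}(q), z_1) = \rho_{\mathcal{X}^*}(Q_{\mathcal{X}^*}(q)) + O(\tau^p)$, completing the chain.

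I do not expect a substantive obstacle, as the corollary is bookkeeping built on top of Theorem~\ref{thm:NCME rho tilde}; the only point demanding care is the exactness of $F_{\mathcal{X}^*}(Q_{\mathcal{X}^*}(q)) = q$, i.e., confirming that $\mathcal{X}_i^*$ is atomless. Once that is secured, the conceptual payoff---that anchoring the argument at a quantile of $\mathcal{X}_i^*$ eliminates the need to know the marginal $F_{\mathcal{X}^*}$, which otherwise enters only through the composition $F_{\mathcal{X}^*}(\varkappa)$---is automatic from the structure of $\widetilde\rho_{\mathcal{X}^*}$, and no additional $O(\tau^p)$ accounting beyond Theorem~\ref{thm:NCME rho tilde} is needed.
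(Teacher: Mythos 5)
Your proof is correct and matches the paper's treatment: the paper provides no separate proof, regarding the corollary as exactly Theorem~\ref{thm:NCME rho tilde} specialized to $\varkappa = Q_{\mathcal{X}^*}(q)$, where the unknown $F_{\mathcal{X}^*}$ drops out of the construction~\eqref{eq:def:rho tilde NC} because $F_{\mathcal{X}^*}(Q_{\mathcal{X}^*}(q)) = q$, which is precisely your argument. Your explicit verification that $\mathcal{X}_i^*$ is atomless (via strict monotonicity of $\mu$ and the existence of a bounded density for $X_i^* = \mu(\mathcal{X}_i^*)$) is a careful detail the paper leaves implicit.
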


Corollary \ref{cor: NCME quantiles} demonstrates that even if $F_{\mathcal{X}%
^*}$ is unknown, we can still identify the conditional expectation of $Y_i^*$
given the $q$'th quantile of $\mathcal{X}_{i}^{\ast }$. Notice that in some
applications, the unobserved variable $\mathcal{X}_i^*$, for example an
individual's ability, might not even have well-defined economic units. In
such settings, identification of $E[Y_i | \mathcal{X}_i^* = Q_{\mathcal{X}%
^*} (q)]$ is fully exhaustive.

\bigskip

\bigskip

\bibliographystyle{ecta}
\bibliography{library}

\begin{thebibliography}{22}
\newcommand{\enquote}[1]{``#1''}
\expandafter\ifx\csname natexlab\endcsname\relax\def\natexlab#1{#1}\fi

\bibitem[\protect\citeauthoryear{Bound, Brown, and Mathiowetz}{Bound
  et~al.}{2001}]{BoundBrownMathiowetz2001HBoE}
\textsc{Bound, J., C.~Brown, and N.~Mathiowetz} (2001): \enquote{Measurement
  Error in Survey Data,} in \emph{Handbook of Econometrics}, Elsevier,
  3705--3843.

\bibitem[\protect\citeauthoryear{Carroll, Ruppert, Stefanski, and
  Crainiceanu}{Carroll et~al.}{2006}]{CarrollEtAl2006Book-ME}
\textsc{Carroll, R.~J., D.~Ruppert, L.~A. Stefanski, and C.~M. Crainiceanu}
  (2006): \emph{Measurement Error in Nonlinear Models}, Chapman and Hall/{CRC}.

\bibitem[\protect\citeauthoryear{Carroll and Stefanski}{Carroll and
  Stefanski}{1990}]{CarrollStefanski1990JASA}
\textsc{Carroll, R.~J. and L.~A. Stefanski} (1990): \enquote{Approximate
  quasi-likelihood estimation in models with surrogate predictors,}
  \emph{Journal of the American Statistical Association}, 85, 652--663.

\bibitem[\protect\citeauthoryear{Chen, Hong, and Nekipelov}{Chen
  et~al.}{2011}]{ChenHongNekipelov2011JEL}
\textsc{Chen, X., H.~Hong, and D.~Nekipelov} (2011): \enquote{Nonlinear Models
  of Measurement Errors,} \emph{Journal of Economic Literature}, 49, 901--937.

\bibitem[\protect\citeauthoryear{Chesher}{Chesher}{1991}]{Chesher1991Biomet}
\textsc{Chesher, A.} (1991): \enquote{The effect of measurement error,}
  \emph{Biometrika}, 78, 451--462.

\bibitem[\protect\citeauthoryear{Chesher}{Chesher}{2000}]{Chesher2000WP}
---\hspace{-.1pt}---\hspace{-.1pt}--- (2000): \enquote{Measurement Error Bias
  Reduction,} Working paper.

\bibitem[\protect\citeauthoryear{Chesher and Schluter}{Chesher and
  Schluter}{2002}]{ChesherSchluter2002ReStud}
\textsc{Chesher, A. and C.~Schluter} (2002): \enquote{Welfare measurement and
  measurement error,} \emph{The Review of Economic Studies}, 69, 357--378.

\bibitem[\protect\citeauthoryear{Evdokimov and Zeleneev}{Evdokimov and
  Zeleneev}{2018}]{EvdokimovZeleneev2018WP-Inference}
\textsc{Evdokimov, K.~S. and A.~Zeleneev} (2018): \enquote{Issues of
  Nonstandard Inference in Measurement Error Models,} \emph{Working Paper,
  Princeton University}.

\bibitem[\protect\citeauthoryear{Evdokimov and Zeleneev}{Evdokimov and
  Zeleneev}{2022}]{EvdokimovZeleneev-Estimation-WP-2022}
---\hspace{-.1pt}---\hspace{-.1pt}--- (2022): \enquote{Simple Estimation of
  Semiparametric Models with Measurement Errors,} \emph{CeMMAP Working Paper}.

\bibitem[\protect\citeauthoryear{Hausman, Ichimura, Newey, and Powell}{Hausman
  et~al.}{1991}]{HINP1991JoE}
\textsc{Hausman, J.~A., H.~Ichimura, W.~K. Newey, and J.~L. Powell} (1991):
  \enquote{Identification and estimation of polynomial errors-in-variables
  models,} \emph{Journal of Econometrics}, 50, 273--295.

\bibitem[\protect\citeauthoryear{Hausman, Newey, and Powell}{Hausman
  et~al.}{1995}]{HausmanNeweyPowell1995JoE}
\textsc{Hausman, J.~A., W.~K. Newey, and J.~L. Powell} (1995):
  \enquote{Nonlinear errors in variables Estimation of some Engel curves,}
  \emph{Journal of Econometrics}, 65, 205--233.

\bibitem[\protect\citeauthoryear{Hu}{Hu}{2008}]{Hu2008JoE}
\textsc{Hu, Y.} (2008): \enquote{Identification and estimation of nonlinear
  models with misclassification error using instrumental variables: A general
  solution,} \emph{Journal of Econometrics}, 144, 27--61.

\bibitem[\protect\citeauthoryear{Hu and Schennach}{Hu and
  Schennach}{2008}]{HuSchennach2008Ecta}
\textsc{Hu, Y. and S.~M. Schennach} (2008): \enquote{Instrumental Variable
  Treatment of Nonclassical Measurement Error Models,} \emph{Econometrica}, 76,
  195--216.

\bibitem[\protect\citeauthoryear{Li and Vuong}{Li and
  Vuong}{1998}]{LiVuong1998JoE}
\textsc{Li, T. and Q.~Vuong} (1998): \enquote{Nonparametric Estimation of the
  Measurement Error Model Using Multiple Indicators,} \emph{Journal of
  Multivariate Analysis}, 65, 139--165.

\bibitem[\protect\citeauthoryear{Newey}{Newey}{2001}]{Newey2001REStat}
\textsc{Newey, W.~K.} (2001): \enquote{Flexible Simulated Moment Estimation of
  Nonlinear Errors-in-Variables Models,} \emph{The Review of Economics and
  Statistics}, 83, 616--627.

\bibitem[\protect\citeauthoryear{Schennach}{Schennach}{2022}]{Schennach2022JEcLit}
\textsc{Schennach, S.} (2022): \enquote{Measurement Systems,} \emph{Journal of
  Economic Literature}, 60, 1223--1263.

\bibitem[\protect\citeauthoryear{Schennach}{Schennach}{2004}]{Schennach2004Ecta}
\textsc{Schennach, S.~M.} (2004): \enquote{Estimation of Nonlinear Models with
  Measurement Error,} \emph{Econometrica}, 72, 33--75.

\bibitem[\protect\citeauthoryear{Schennach}{Schennach}{2007}]{Schennach2007Ecta}
---\hspace{-.1pt}---\hspace{-.1pt}--- (2007): \enquote{Instrumental Variable
  Estimation of Nonlinear Errors-in-Variables Models,} \emph{Econometrica}, 75,
  201--239.

\bibitem[\protect\citeauthoryear{Schennach}{Schennach}{2020}]{Schennach2020HB-ME}
---\hspace{-.1pt}---\hspace{-.1pt}--- (2020): \enquote{Mismeasured and
  unobserved variables,} in \emph{Handbook of Econometrics}, Elsevier,
  487--565.

\bibitem[\protect\citeauthoryear{Stone}{Stone}{1980}]{Stone1980AoS}
\textsc{Stone, C.~J.} (1980): \enquote{Optimal Rates of Convergence for
  Nonparametric Estimators,} \emph{Annals of Statistics}, 8, 1348--1360.

\bibitem[\protect\citeauthoryear{Wilhelm}{Wilhelm}{2019}]{Wilhelm2019WP-TestingForME}
\textsc{Wilhelm, D.} (2019): \enquote{Testing for the presence of measurement
  error,} CeMMAP working papers CWP48/19, Centre for Microdata Methods and
  Practice, Institute for Fiscal Studies.

\bibitem[\protect\citeauthoryear{Wolter and Fuller}{Wolter and
  Fuller}{1982}]{WolterFuller1982AS}
\textsc{Wolter, K.~M. and W.~A. Fuller} (1982): \enquote{Estimation of
  Nonlinear Errors-in-Variables Models,} \emph{The Annals of Statistics}, 10,
  539--548.

\end{thebibliography}

\clearpage

\begin{appendices}%

\numberwithin{equation}{section} \numberwithin{remark}{section} %
\numberwithin{theorem}{section}

\section{\label{sec:Proofs}Proofs}

\subsection{Proof of Theorem \ref{thm:NPID-non-cl:IV}}

Before proving Theorem \ref{thm:NPID-non-cl:IV}, we state and prove 3 auxiliary lemmas. Then we prove the main result.

\subsubsection{Auxiliary lemmas}
\begin{lemma}
	\label{lem:q(x,z) expansion}
	Suppose that the hypotheses of Theorem \ref{thm:NPID-non-cl:IV} are satisfied. Then, for any $x$ and $z$ such that $x \in \mathcal S_{X^*} (z)$ (so, $f_{X^*|Z}(x|z) > 0$ and $f_{X^*}(x) > 0$), we have:
	\begin{enumerate}[(i)]
		\item \label{item: q(x,z) expansion}
		\begin{align}
			q\left( x,z\right) =\rho \left( x\right) +v\left( x\right) \left(\rho ^{\prime
					}\left( x\right) s_{X^{\ast }|Z}\left( x|z\right) +\frac{1}{2} \rho ^{\prime \prime }\left( x\right)\right) +\rho ^{\prime }\left(
			x\right) v^{\prime }\left( x\right) +O\left( \tau ^{p}\right), 	\label{eq:NPID:q(x,z) K2 exp}
		\end{align}
		\item \label{item: s q prime bias}
		\begin{align}
			\label{eq:pf: s q prime bias}
			\abs{s_{X|Z}(x|z) - s_{X^*|Z}(x|z)} + \abs{q'(x) - \rho'(x)} = O (\tau^2).
		\end{align}
	\end{enumerate}
\end{lemma}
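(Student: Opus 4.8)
The plan is to reduce $q(x,z)$ to a conditional expectation of $\rho(X_i^*)$ and then expand it in powers of $\tau$ via a change of variables and term-by-term Taylor expansion. First I would use the law of iterated expectations together with Assumptions~\ref{ass:NPID:Nondiffer ME} and \ref{ass:NPID:exclusion}: since $E[Y_i\mid X_i^*,Z_i,X_i]=E[Y_i\mid X_i^*,Z_i]=\rho(X_i^*)$, conditioning on $(X_i=x,Z_i=z)$ gives $q(x,z)=E[\rho(X_i^*)\mid X_i=x,Z_i=z]$. Writing this as a ratio of integrals and using $X_i=X_i^*+\tau\xi_i$ with $\xi_i\perp Z_i\mid X_i^*$ (Assumption~\ref{ass:NCME}), so that the conditional density of $X_i$ given $(X_i^*=x^*,Z_i=z)$ is $\tau^{-1}f_{\xi|X^*}(\tfrac{x-x^*}{\tau}\mid x^*)$, yields
\[
q(x,z)=\frac{\int \rho(x^*)\,f_{\xi|X^*}\!\left(\tfrac{x-x^*}{\tau}\Big| x^*\right)f_{X^*|Z}(x^*|z)\,dx^*}{\int f_{\xi|X^*}\!\left(\tfrac{x-x^*}{\tau}\Big| x^*\right)f_{X^*|Z}(x^*|z)\,dx^*}.
\]
The substitution $x^*=x-\tau u$ turns both integrals into the form $\int H(u,x-\tau u)\,du$, with $u$ integrated against $f_{\xi|X^*}(u\mid\cdot)$.

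Next I would Taylor-expand each integrand in its second argument about $x^*=x$ to order $p-1$, interchange $\int du$ with $\tfrac{d^k}{dx^k}$ (justified by Assumption~\ref{ass:NPID:dominance}), and integrate term by term. Order $k$ contributes $\tfrac{(-\tau)^k}{k!}\tfrac{d^k}{dx^k}[\,\cdot\,\kappa_k(x)]$, where $\kappa_k(x)\equiv E[\xi_i^k\mid X_i^*=x]$ satisfies $\kappa_0\equiv1$, $\kappa_1\equiv0$ (Assumption~\ref{ass:NCME}), and $\kappa_2(x)=v(x)/\tau^2$. Thus the $k=1$ term vanishes, the $k=2$ term is the leading $O(\tau^2)$ correction, and the next correction is $O(\tau^3)=O(\tau^p)$ in case~(i), while in case~(ii) the hypothesis $\kappa_3\equiv0$ kills the $k=3$ term and pushes it to $O(\tau^4)$. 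Writing $g\equiv f_{X^*|Z}(x|z)$ and $N(x,z)$ for the numerator above, this gives
\[
f_{X|Z}(x|z)=g+\tfrac12(gv)''+O(\tau^p),\qquad N(x,z)=\rho g+\tfrac12(\rho g v)''+O(\tau^p).
\]
Since $g$ is bounded away from zero on the support, dividing and expanding $1/f_{X|Z}$ geometrically (the cross term being $O(\tau^4)\subseteq O(\tau^p)$) yields $q(x,z)=\rho+\tfrac{1}{2g}[(\rho g v)''-\rho(gv)'']+O(\tau^p)$. The identity $(\rho w)''-\rho w''=\rho''w+2\rho'w'$ with $w=gv$, together with $(gv)'/g=s_{X^*|Z}v+v'$, collapses this to exactly~\eqref{eq:NPID:q(x,z) K2 exp}, proving~(\ref{item: q(x,z) expansion}).

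For~(\ref{item: s q prime bias}) I would read off from the same expansions the cruder statements $f_{X|Z}(x|z)=g+O(\tau^2)$ and, differentiating the integral representation once before expanding so as to control the derivative of the remainder directly, $f'_{X|Z}(x|z)=g'+O(\tau^2)$; since $g$ is bounded away from zero, forming the quotient gives $s_{X|Z}(x|z)-s_{X^*|Z}(x|z)=O(\tau^2)$. Running the identical argument with the marginal $f_{X^*}$ in place of $f_{X^*|Z}$ gives $q(x)=\rho(x)+O(\tau^2)$ and $q'(x)=\rho'(x)+O(\tau^2)$, hence $|q'(x)-\rho'(x)|=O(\tau^2)$, and the two bounds add to the claimed $O(\tau^2)$.

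The main obstacle will be making the remainder control uniform and genuinely of order $O(\tau^p)$: because the high $x^*$-derivatives of the integrand are weighted by powers of $u$, both the interchange of $\int du$ with $\tfrac{d^k}{dx^k}$ and the bound on the order-$p$ Taylor remainder rest entirely on Assumption~\ref{ass:NPID:dominance}, combined with the $m\geq p$ bounded derivatives of Assumption~\ref{ass:NPID:smoothness}; one must also check that the lower moments $\int|u|^j|\nabla_x^\ell f_{\xi|X^*}|\,du$ with $j<m$ inherit finiteness from the stated $|u|^m$-weighted bound. Part~(\ref{item: s q prime bias}) is subtler than it looks precisely because it concerns a derivative --- differentiating a bare $O(\tau^2)$ bound would be illegitimate --- so $f'_{X|Z}$ must be expanded from its own integral representation rather than by differentiating the expansion of $f_{X|Z}$.
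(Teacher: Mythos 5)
Your proof is correct and takes essentially the same route as the paper's: the law of iterated expectations reduces $q(x,z)$ to a ratio of integrals, the substitution $r=x-\tau u$ and term-by-term Taylor expansion (with $\kappa_1\equiv 0$, and $\kappa_3\equiv 0$ in case (ii), killing the odd terms, and Assumptions \ref{ass:NPID:smoothness} and \ref{ass:NPID:dominance} justifying the interchange of integration and differentiation and controlling the order-$p$ remainder) give the $\tfrac12\nabla_x^2(\eta v)$ correction for numerator and denominator, and dividing the two expansions yields part (i). Your treatment of part (ii)---expanding $f_{X|Z}'$ and $q'$ from their own integral representations after differentiating under the integral sign, rather than illegitimately differentiating an $O(\tau^2)$ bound---is exactly the paper's argument as well.
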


\begin{proof}[Proof of Lemma \ref{lem:q(x,z) expansion}, Part (\ref{item: q(x,z) expansion})]
	For concreteness, we establish the result for the case $p = 4$. The proof for the case $p=3$ is analogous (but simpler).

	We have%
	\begin{eqnarray*}
	f_{X^{\ast }X|Z}\left( r,x|z\right) &=&f_{\varepsilon |X^{\ast }}\left(
	x-r|r\right) f_{X^{\ast }|Z}\left( r|z\right), \\
	f_{X^{\ast }|XZ}\left( r|x,z\right) &=&\frac{f_{X^{\ast }X|Z}\left(
	r,x|z\right) }{f_{X|Z}\left( x|z\right) }=\frac{f_{\varepsilon |X^{\ast
	}}\left( x-r|r\right) f_{X^{\ast }|Z}\left( r|z\right) }{f_{X|Z}\left(
	x|z\right) }, \\
	f_{X|Z}\left( x|z\right) &=&\int f_{\varepsilon |X^{\ast }}\left(
	x-r|r\right) f_{X^{\ast }|Z}\left( r|z\right) dr.
	\end{eqnarray*}%

	Also, notice that
	\begin{eqnarray*}
	q\left( x,z\right) &\equiv &E\left[ Y_{i}|X_{i}=x,Z_{i}=z\right] =E\left[ E%
	\left[ Y_{i}|X_{i}^{\ast },X_{i},Z_{i}\right] |X_{i}=x,Z_{i}=z\right] \\
	&=&E\left[ \rho (X_{i}^{\ast })|X_{i}=x,Z_{i}=z\right] ,
	\end{eqnarray*}%
	where the last equality follows from Assumptions~\ref{ass:NPID:exclusion}
	and~\ref{ass:NPID:Nondiffer ME}.

	We want to compute
	\begin{equation*}
	E\left[ \rho\left( X^{\ast }\right) |X=x,Z=z\right] =\frac{\int \rho\left(
	r\right) f_{X^{\ast }|XZ}\left( r|x,z\right) dr}{f_{X|Z}\left( x|z\right) }=%
	\frac{\int \rho\left( r\right) f_{\varepsilon |X^{\ast }}\left( x-r|r\right)
	f_{X^{\ast }|Z}\left( r|z\right) dr}{f_{X|Z}\left( x|z\right) }.
	\end{equation*}%
	Notice that on the RHS, the denominator is a special case of the numerator
	with $\rho\left( r\right) =1$ for all $r$. Thus, it is sufficient to focus on
	the numerator.

	Next, we consider $\int \eta \left( r\right) f_{\varepsilon |X^{\ast }}\left( x-r|r\right)dr$, where $\eta(\cdot)$ is a bounded function with $p$ bounded derivatives. Recall that we have $\varepsilon =\tau \xi$, so $f_{\varepsilon |X^{\ast }}\left(
	\varepsilon |r\right) =\frac{1}{\tau }f_{\xi |X^{\ast }}\left( \frac{%
	\varepsilon }{\tau }|r\right) $. Then,
	\begin{align*}
	\int \eta \left( r\right) f_{\varepsilon |X^{\ast }}\left( x-r|r\right)
	dr&=\int \eta \left( x-\tau u\right) \tau f_{\varepsilon |X^{\ast }}\left(
	\tau u|x-\tau u\right) du \\
	&=\int \eta \left( x-\tau u\right) f_{\xi |X^{\ast }}\left( u|x-\tau
	u\right) du,
	\end{align*}
	where we used the substitution $r=x-\tau u$. Using Assumption \ref{ass:NPID:smoothness}, we have (for all $u$)
	\begin{align*}
		\eta \left( x-\tau u\right) f_{\xi |X^{\ast }}\left( u|x-\tau u\right) = &\eta(x) f_{\xi|X^*}(u|x) + \sum_{\ell = 1}^{p-1} \frac{(-\tau)^\ell}{\ell!} u^{\ell} \nabla_x^{\ell} \{\eta(x) f_{\xi|X^*} (u|x)  \} \\ &+ \frac{(-\tau)^p}{p!} u^p \nabla_x^p \left\{\eta( x) f_{\xi|X^*}(u| x)\right\}\big\vert_{x = x - \tilde \tau (u) u}
	\end{align*}
	for some $\tilde \tau (u) \in (0, \tau)$. By boundedness of $\eta(\cdot)$ (and its derivatives) and Assumption \ref{ass:NPID:dominance}, all the terms on the RHS of the equation above are integrable, and consequently we have
	\begin{eqnarray*}
	&&\int \eta \left( x-\tau u\right) f_{\xi |X^{\ast }}\left( u|x-\tau
	u\right) du \\
	&=&\int \eta \left( x\right) f_{\xi |X^{\ast }}\left( u|x\right) du-\tau
	\int u\left\{ \eta ^{\prime }\left( x\right) f_{\xi |X^{\ast }}\left(
	u|x\right) +\eta \left( x\right) \nabla _{x}f_{\xi |X^{\ast }}\left(
	u|x\right) \right\} du \\
	&&+\frac{\tau ^{2}}{2}\int u^{2}\left\{ \eta ^{\prime \prime }\left(
	x\right) f_{\xi |X^{\ast }}\left( u|x\right) +2\eta ^{\prime }\left(
	x\right) \nabla _{x}f_{\xi |X^{\ast }}\left( u|x\right) +\eta \left(
	x\right) \nabla _{x}^2f_{\xi |X^{\ast }}\left( u|x\right) \right\}
	du + R_\eta (\tau) \\
	&=&\eta \left( x\right) -\tau \left\{ \eta ^{\prime }\left( x\right) E\left[
	\xi |X^{\ast }=x\right] +\eta \left( x\right) \nabla _{x}E\left[ \xi
	|X^{\ast }=x\right] \right\} \\
	&&+\frac{\tau ^{2}}{2}\left\{ \eta ^{\prime \prime }\left( x\right) E\left[
	\xi ^{2}|X^{\ast }=x\right] +2\eta ^{\prime }\left( x\right) \nabla _{x}E%
	\left[ \xi ^{2}|X^{\ast }=x\right] +\eta \left( x\right) \nabla _{x}^2E\left[
	\xi ^{2}|X^{\ast }=x\right] \right\} + R_\eta (\tau) \\
	&=&\eta \left( x\right) + \frac{1}{2}\{\eta ^{\prime \prime }\left( x\right) v\left( x\right) +2\eta ^{\prime }\left( x\right) \nabla
	_{x} v\left( x\right) +\eta \left( x\right) \nabla^2
	_{x} v\left( x\right)\} + R_\eta (\tau),
	\end{eqnarray*}
	where (for $p=4$)
	\begin{align*}
		R_\eta (\tau) = -\frac{\tau^3}{6} \int u^3 \nabla_x^3 \{\eta(x) f_{\xi|X^*}(u|x)\} du + \frac{\tau^4}{24} \int u^4 \nabla_x^4 \left\{\eta(\tilde x) f_{\xi|X^*}(u|\tilde x)\right\}\big\vert_{x = x - \tilde \tau (u) u} du.
	\end{align*}
	In the derivation above, the second equality uses $\int u^\ell \nabla_x^k f_{\xi|X^*}(u|x) du = \nabla_x^k \int u^\ell f_{\xi|X^*}(u|x) du = \nabla_x^k E [\xi^k|X^*=x]$ (differentiation under the integral sign is possible due to Assumptions \ref{ass:NPID:smoothness} and \ref{ass:NPID:dominance} for non-negative integers $k, \ell \leqslant p$), and the last equality is due to $E[\xi|X^*=x] = 0$ (as a function of $x$) and uses the notation $v(x) = E[\varepsilon^2|X^*=x] = \tau^2 E[\xi^2|X^*=x]$.

	Next, we establish $R_\eta (\tau) = O (\tau^p)$. Notice that
	\begin{eqnarray*}
		&&\int u^3 \nabla_x^3 \{\eta(x) f_{\xi|X^*}(u|x)\} du \\ &= &\int u^3 \{\eta'''(x) f_{\xi|X^*}(u|x) + 3 \eta''(x) \nabla_x f_{\xi|X^*}(u|x) + 3 \eta'(x) \nabla_{x}^2 f_{\xi|X^*} (u|x) + \eta(x) \nabla_{x}^3 f_{\xi|X^*} (u|x) \} du \\
		&=&\eta'''(x) E \left[\xi^3|X^*=x\right] + 3 \eta''(x) \nabla_x E \left[\xi^3|X^*=x\right] + 3 \eta'(x) \nabla_x^2 E \left[\xi^3|X^*=x\right] + \eta(x) \nabla_x^3 E \left[\xi^3|X^*=x\right] \\
		&=& 0,
	\end{eqnarray*}
	where the last equality uses $E[\xi^3|X^*=x] = 0$ (as a function of $x$). Finally,
	\begin{eqnarray*}
		&&\int u^4 \nabla_x^4 \left\{\eta(\tilde x) f_{\xi|X^*}(u|\tilde x)\right\}\big\vert_{\tilde x = x - \tilde \tau (u) u} du\\
		&=&\int u^4 \Big\{\eta''''(x - \tilde \tau (u) u) f_{\xi|X^*}(u|x - \tilde \tau (u) u) + 4 \eta'''(x - \tilde \tau (u) u) \nabla_x f_{\xi|X^*}(u|x - \tilde \tau (u) u) \\&& + 6 \eta''(x - \tilde \tau (u) u) \nabla_{x}^2 f_{\xi|X^*} (u|x - \tilde \tau (u) u) + 4 \eta'(x - \tilde \tau (u) u) \nabla_{x}^3 f_{\xi|X^*} (u|x - \tilde \tau (u) u) \\&& + \eta(x - \tilde \tau (u) u) \nabla_{x}^4 f_{\xi|X^*} (u|x - \tilde \tau (u) u) \Big\} du.
	\end{eqnarray*}	
	Since $\eta$ and its derivatives are uniformly bounded, for some $C > 0$, we have
	\begin{align*}
		\vert R_\eta (\tau) \vert \leqslant C \tau^4 \int u^4 \sup_{\tilde x \in \suppX} \left\{ f_{\xi|X^*} (u| \tilde x) + \sum_{\ell = 1}^4 \left \vert \nabla_x^\ell f_{\xi|X^*} (u| \tilde x) \right \vert  \right\} du = O(\tau^4),
	\end{align*}
	where the last equality uses Assumption \ref{ass:NPID:dominance}. Hence, we conclude
	\begin{align*}
		\int \eta \left( r\right) f_{\varepsilon |X^{\ast }}\left( x-r|r\right)
		dr = \eta \left( x\right) + \frac{1}{2}\{\eta ^{\prime \prime }\left( x\right) v\left( x\right) +2\eta ^{\prime }\left( x\right) \nabla
	_{x} v\left( x\right) +\eta \left( x\right) \nabla^2
	_{x} v\left( x\right)\} + O(\tau^p).
	\end{align*}
	Note that this result also implies
	\begin{align}
		\label{eq:pf:eta bias}
		\int \eta \left( r\right) f_{\varepsilon |X^{\ast }}\left( x-r|r\right) dr = \eta(x) + O(\tau^2).
	\end{align}

	Next, we apply the derived result to
	\begin{align*}
		q\left( x,z\right)  \notag = \frac{\int \rho \left( r\right) f_{X^{\ast }|Z}\left( r|z\right)
	f_{\varepsilon |X^{\ast }}\left( x-r|r\right) dr}{\int f_{X^{\ast }|Z}\left(
	r|z\right) f_{\varepsilon |X^{\ast }}\left( x-r|r\right) dr}.
	\end{align*}
	For the numerator, we use $\eta \left( x\right) =\rho \left( x\right)
	f_{X^{\ast }|Z}\left( x|z\right) $, $\eta ^{\prime }\left( x\right) =\rho
	\left( x\right) ^{\prime }f_{X^{\ast }|Z}\left( x|z\right) +\rho \left(
	x\right) f_{X^{\ast }|Z}^{\prime }\left( x|z\right) $, and $\eta ^{\prime
	\prime }\left( x\right) =\rho ^{\prime \prime }\left( x\right) f_{X^{\ast
	}|Z}\left( x|z\right) +2\rho ^{\prime }\left( x\right) f_{X^{\ast
	}|Z}^{\prime }\left( x|z\right) +\rho \left( x\right) f_{X^{\ast
	}|Z}^{\prime \prime }\left( x|z\right) $. Thus,%
	\begin{align*}
		&\int \rho \left( r\right) f_{X^{\ast }|Z}\left( r|z\right) f_{\varepsilon |X^{\ast }}\left( x-r|r\right) dr \\ =  &\rho(x) \left(f_{X^*|Z} (x|z) + \frac{1}{2}\{f_{X^*|Z} ^{\prime \prime }\left( x|z\right) v\left( x\right) +2f_{X^*|Z} ^{\prime }\left( x|z\right) \nabla
	_{x} v\left( x\right) +f_{X^*|Z} \left( x|z\right) \nabla^2
	_{x} v\left( x\right)\}\right) \\
	& +
	\frac{1}{2}\left( \rho ^{\prime \prime }\left( x\right) f_{X^{\ast
	}|Z}\left( x|z\right) +2\rho ^{\prime }\left( x\right) f_{X^{\ast
	}|Z}^{\prime }\left( x|z\right) \right) v(x) +\rho \left( x\right) ^{\prime
	}f_{X^{\ast }|Z}\left( x|z\right) \nabla _{x}v\left( x\right) + O (\tau^p). 
	\end{align*}
	Similarly,
	\begin{align*}
		f_{X|Z} (x|z) &=\int f_{X^{\ast }|Z}\left(r|z\right) f_{\varepsilon |X^{\ast }}\left( x-r|r\right) dr \\ &=  f_{X^*|Z} (x|z) + \frac{1}{2}\{f_{X^*|Z} ^{\prime \prime }\left( x|z\right) v\left( x\right) +2f_{X^*|Z} ^{\prime }\left( x|z\right) \nabla
	_{x} v\left( x\right) +f_{X^*|Z} \left( x|z\right) \nabla^2
	_{x} v\left( x\right)\} + O(\tau^p).
	\end{align*}
	Hence, using $\nabla_x^{\ell} v(x) = O (\tau^2)$ for $l \in \{0,1,2\}$, we conclude
	\begin{eqnarray*}
	q\left( x,z\right) &=&\rho \left( x\right) + \frac{\frac{1}{2}\left( \rho ^{\prime \prime }\left( x\right) f_{X^{\ast
	}|Z}\left( x|z\right) +2\rho ^{\prime }\left( x\right) f_{X^{\ast
	}|Z}^{\prime }\left( x|z\right) \right) v\left( x\right) + \rho \left( x\right) ^{\prime
	}f_{X^{\ast }|Z}\left( x|z\right) \nabla _{x}v\left( x\right)  }{%
	f_{X^{\ast }|Z}\left( x|z\right) + O(\tau^2) }+O\left( \tau ^{p}\right)  \notag \\
	&=&\rho \left( x\right)  + \left(\frac{1}{2}\rho ^{\prime \prime }\left( x\right) +\rho ^{\prime
	}\left( x\right) s_{X^{\ast
	}|Z}\left( x|z\right)\right) v(x) +\rho ^{\prime }\left( x\right) v'(x) +O\left( \tau ^{p}\right) .
	\end{eqnarray*}
\end{proof}

\begin{proof}[Proof of Lemma \ref{lem:q(x,z) expansion}, Part (\ref{item: s q prime bias})]
	First, we consider
	\begin{align*}
		\nabla_x \int \eta(r) f_{\varepsilon|X^*} (x - r|r) dr  &= \nabla_x \int \eta (x - \tau u) f_{\xi|X^*} (u|x - \tau u) du\\
		&= \int \nabla_x \{\eta (x - \tau u) f_{\xi|X^*} (u|x - \tau u)\} du,
	\end{align*}
	where the second equality follows from Assumptions \ref{ass:NPID:smoothness} and \ref{ass:NPID:dominance}. Analogously to the expansion of $\eta(x-\tau u) f_{\xi|X^*}(u|x - \tau u)$ considered in the proof of Part \eqref{item: q(x,z) expansion}, we have
	\begin{align*}
		 \nabla_x \{\eta (x - \tau u) f_{\xi|X^*} (u|x - \tau u)\} = &\eta'(x) f_{\xi|X^*} (u|x) + \eta(x) \nabla_x f_{\xi|X^*} (u|x) \\ &- \tau u \nabla_x^2 \{\eta (x) f_{\xi|X^*} (u|x)\} \\ &+ \frac{\tau^2}{2} u^2 \nabla_x^3 \{\eta (x - \tilde \tau (u) u) f_{\xi|X^*} (u|x - \tilde \tau(u) u)\},
	\end{align*}
	for some $\tilde \tau (u) \in (0,\tau)$. %
	Hence,
	\begin{align*}
		\nabla_x \int \eta(r) f_{\varepsilon|X^*} (x - r|r) dr  &= \int \left(\eta'(x) f_{\xi|X^*} (u|x) + \eta(x) \nabla_x f_{\xi|X^*} (u|x)\right) du + O (\tau^2) \\ &= \eta'(x) + O(\tau^2).
	\end{align*}

	Plugging $\rho (x) f_{X^*} (x)$, $f_{X^*}(x)$, and  $f_{X^*|Z} (x|z)$ as $\eta(x)$, we obtain
	\begin{alignat*}{3}
			&&&\nabla_x \int \rho(r) f_{X^*}(r) f_{\varepsilon|X^*} (x - r|r) dr &&= \nabla_x \{\rho (x) f_{X^*} (x) \} + O(\tau^2), \\
		f_X'(x) &= &&\nabla_x \int f_{X^*}(r) f_{\varepsilon|X^*} (x - r|r) dr &&= f_{X^*}' (x) + O(\tau^2), \\
		f_{X|Z}'(x,z) &= &&\nabla_x \int f_{X^*|Z}(r|z) f_{\varepsilon|X^*} (x - r|r) dr &&= f_{X^*|Z}' (x|z) + O(\tau^2).
	\end{alignat*}
	Similarly, using equation \eqref{eq:pf:eta bias} derived in the proof of Part \eqref{item: q(x,z) expansion},
	\begin{alignat*}{3}
		&&&\int \rho(r) f_{X^*}(r) f_{\varepsilon|X^*} (x - r|r) dr &&=  \rho (x) f_{X^*} (x) + O(\tau^2), \\
		f_X(x) &= && \int f_{X^*}(r) f_{\varepsilon|X^*} (x - r|r) dr &&= f_{X^*} (x) + O(\tau^2), \\
		f_{X|Z}(x,z) &= && \int f_{X^*|Z}(r|z) f_{\varepsilon|X^*} (x - r|r) dr &&= f_{X^*|Z} (x|z) + O(\tau^2).
	\end{alignat*}

	Using the results above, we establish
	\begin{align}
		s_{X|Z}(x|z) - s_{X^*|Z}(x|z) &= \frac{f'_{X|Z}(x|z) f_{X^*|Z}(x|z) - f'_{X|Z}(x|z) f_{X^*|Z}(x|z)}{f_{X|Z}(x|z) f_{X^*|Z}(x|z)} \notag \\
		&= \frac{O (\tau^2)}{f_{X^*|Z}(x|z)^2 + O (\tau^2)} = O (\tau^2), \label{eq:pf:score bias}
	\end{align}
	Similarly,
	\begin{align}
		q'(x) &= \nabla_x \left\{ \frac{\int \rho(r) f_{X^*}(r) f_{\varepsilon|X^*} (x - r|r) dr}{\int f_{X^*}(r) f_{\varepsilon|X^*} (x - r|r) dr} \right\} \notag \\
		&= \frac{\left(\rho'(x) f_{X^*}(x) + \rho(x) f_{X^*}'(x) + O (\tau^2)\right) \left(f_{X^*}(x) + O(\tau^2)\right) - \left(f'_{X^*}(x) + O(\tau^2)\right) \left(\rho(x) f_{X^*}(x) + O(\tau^2)  \right)  }{\left(f_{X^*}(x) + O(\tau^2)\right)^2} \notag \\
		&= \rho'(x) + O (\tau^2). \label{eq:pf:q prime bias}
	\end{align}
	which completes the proof.
\end{proof}

\bigskip

Before proving additional auxiliary results we introduce an additional notation.

\begin{definition*}
	Consider a function $g: \mathcal X \times \mathbb R^+ \rightarrow \mathbb R$ such that $\nabla_x g(x, \tau)$ exists (for every $\tau \in \mathbb R^+$). We say that $g (x, \tau) = O_x (\tau^p)$ if $g(x, \tau) = O (\tau^p)$ and $\nabla_x g(x, \tau) = O(\tau^p)$ as $\tau \downarrow 0$.
\end{definition*}

\begin{lemma}
	\label{lem:smooth q prime and s prime}
	Suppose that the hypotheses of Theorem \ref{thm:NPID-non-cl:IV} are satisfied. Then, we have $f_{X|Z}(x,z) = f_{X^*|Z} (x|z) + O_x (\tau^2)$, $s_{X
		|Z}(x|z) = s_{X^*|Z}(x|z) + O_x (\tau^{p-2})$, and $q'(x) = \rho'(x) + O_x (\tau^{p-2})$.
\end{lemma}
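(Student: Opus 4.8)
The plan is to extend the expansion of $I_\eta(x) \equiv \int \eta(r) f_{\varepsilon|X^*}(x-r|r)\,dr$ established in Lemma~\ref{lem:q(x,z) expansion} to its first and second $x$-derivatives, tracking how the smoothness budget $m \ge p$ caps the order of the remainder. Writing $r = x - \tau u$ and $G(y,u) \equiv \eta(y) f_{\xi|X^*}(u|y)$ gives $I_\eta(x) = \int G(x-\tau u, u)\,du$, and, as in the proof of Lemma~\ref{lem:q(x,z) expansion}(\ref{item: s q prime bias}), Assumptions~\ref{ass:NPID:smoothness} and \ref{ass:NPID:dominance} justify differentiating under the integral sign, so $\nabla_x^k I_\eta(x) = \int \partial_x^k G(x-\tau u, u)\,du$. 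I would then Taylor-expand $\partial_x^k G(x-\tau u,u)$ in its first argument, integrate over $u$, and use $E[\xi|X^*=x]=0$ (and $E[\xi^3|X^*=x]=0$ in case (ii)) to annihilate the odd-order terms. The resulting orders are
\[
\nabla_x^k I_\eta(x) = \eta^{(k)}(x) + O(\tau^2), \quad k \in \{0,1\}, \qquad \nabla_x^2 I_\eta(x) = \eta''(x) + O(\tau^{p-2}).
\]
The decisive bookkeeping is that the Lagrange remainder of the $k$-th derivative needs $\partial_x^{p} G$ (hence $m \ge p$) to be integrable against $|u|^{p-k}$; carrying the expansion to the second derivative therefore spends two of the $p$ available derivatives and can be pushed only to order $\tau^{p-2}$, whereas for $k \in \{0,1\}$ the deviation is controlled at $O(\tau^2)$ by the leading correction term, which is proportional to $\nabla_x^{k+2}\{\eta(x) v(x)\}$ with $v = O(\tau^2)$.

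I would then specialize $\eta$ to $f_{X^*|Z}(\cdot|z)$, to $f_{X^*}$, and to $\rho f_{X^*}$, reading off expansions for $f_{X|Z}$ and its derivatives, for $D \equiv f_X$ and its derivatives, and for the numerator $N(x) \equiv \int \rho(r) f_{X^*}(r) f_{\varepsilon|X^*}(x-r|r)\,dr$ and its derivatives. The first conclusion, $f_{X|Z}(x|z) = f_{X^*|Z}(x|z) + O_x(\tau^2)$, is then immediate, since both the $k=0$ and $k=1$ cases give $O(\tau^2)$.

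For the other two conclusions I would form the ratios $s_{X|Z} = f'_{X|Z}/f_{X|Z}$ and $q = N/D$ and combine the component expansions via the quotient rule. Because $x$ satisfies $f_{X^*|Z}(x|z) > 0$ and $f_{X^*}(x) > 0$, every denominator is bounded away from zero for small $\tau$, so differences propagate multiplicatively. At the level of the functions and their first derivatives only $f, f', N, N', D, D'$ enter, each matching its $X^*$-counterpart to $O(\tau^2)$; this reproduces $s_{X|Z}-s_{X^*|Z} = O(\tau^2)$ and $q'-\rho' = O(\tau^2)$ from Lemma~\ref{lem:q(x,z) expansion}(\ref{item: s q prime bias}), which is the function-value part of the two claims (note $O(\tau^2) \subseteq O(\tau^{p-2})$ since $p \le 4$). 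The genuinely new content is that $\nabla_x s_{X|Z}$ and $q'' = \nabla_x q'$ bring in the second derivatives $f''_{X|Z}, N'', D''$, whose differences from their $X^*$-counterparts are only $O(\tau^{p-2})$; propagating these through the quotient rule yields $\nabla_x(s_{X|Z}-s_{X^*|Z}) = O(\tau^{p-2})$ and $q''-\rho'' = O(\tau^{p-2})$, completing the two $O_x(\tau^{p-2})$ statements.

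The main obstacle is precisely this order bookkeeping under the constraint $m \ge p$: one must check that differentiation under the integral is legitimate at each order, verify that the odd conditional-moment terms vanish so that the leading correction is genuinely $O(\tau^2)$, and---most importantly---recognize that taking two derivatives exhausts two of the $p$ available derivatives in the remainder, which is the exact mechanism producing the $\tau^{p-2}$ rate (and the reason it degrades to $\tau$ when $p=3$) rather than the $\tau^2$ one might naively expect.
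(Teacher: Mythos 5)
Your proposal is correct and follows essentially the same route as the paper's own proof: Taylor-expand $\int \eta(x-\tau u)\, f_{\xi|X^{*}}(u|x-\tau u)\,du$ and its $x$-derivatives under the integral sign (justified by Assumptions \ref{ass:NPID:smoothness} and \ref{ass:NPID:dominance}), annihilate the first-order term via $E[\xi|X^{*}=x]=0$, bound the integral-form remainder by the dominance condition, and then specialize $\eta$ to $f_{X^{*}|Z}$, $f_{X^{*}}$, and $\rho f_{X^{*}}$ and propagate through the quotient rule with denominators bounded away from zero. The only cosmetic differences are that you handle $p=3$ and $p=4$ uniformly through the $\nabla_x^{2}$ claim at order $\tau^{p-2}$ (the paper writes out $p=4$ and notes $p=3$ is analogous but simpler), and your parenthetical appeal to $E[\xi^{3}|X^{*}]=0$ is never actually needed, since the expansions here stop before any third-moment term appears.
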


\begin{proof}[Proof of Lemma \ref{lem:smooth q prime and s prime}]
	For concreteness, we provide the proof for $p=4$. The proof for $p=3$ is analogous (but simpler).

	First, consider $\int \eta(r) f_{\varepsilon|X^*} (x - r|r) dr = \int \eta (x - \tau u) f_{\xi|X^*} (u|x - \tau u) du$, where $\eta(\cdot)$ is a bounded function with $p$ bounded derivatives. Considering an expansion analogous to the one provided in the proof of Lemma \ref{lem:q(x,z) expansion}, Part \eqref{item: q(x,z) expansion}, we obtain
	\begin{align*}
		\eta (x - \tau u) f_{\xi|X^*} (u|x - \tau u) = &\eta(x) f_{\xi|X^*} (u|x) - \tau u \nabla_x \{\eta(x) f_{\xi|X^*} (u|x)\} \\ &+ \int_{0}^\tau u^2 \nabla_x^2 \{\eta(x - tu) f_{\xi|X^*}(u|x - tu)\} (\tau - t) dt,
	\end{align*}
	where the remainder is given in the integral form. Then,
	\begin{align*}
		\int \eta(r) f_{\varepsilon|X^*} (x - r|r) dr &= \int \eta (x - \tau u) f_{\xi|X^*} (u|x - \tau u) du \\
		&= \eta(x) + R_{\eta} (x;\tau),
	\end{align*}
	where
	\begin{align*}
		R_{\eta} (x;\tau) = \int \left(\int_{0}^\tau u^2 \nabla_x^2 \{\eta(x - tu) f_{\varepsilon|X^*}(u|x - tu)\} (\tau - t) dt\right) du.%
	\end{align*}
	Since $\eta$ and its derivatives are bounded, for some $C > 0$, we have
	\begin{align*}
		\abs{\int_{0}^\tau u^2 \nabla_x^2 \{\eta(x - tu) f_{\varepsilon|X^*}(u|x - tu)\} (\tau - t) dt} \leqslant \frac{\tau^2}{2} \abs{u}^2 \sup_{\tilde x \in \suppX} \sum_{\ell = 0}^2 \abs{\nabla_x^{\ell} f_{\xi|X^*} (u|\tilde x) }.
	\end{align*}
	Combining this result with Assumption \ref{ass:NPID:dominance}, we obtain $R_{\eta} (x;\tau) = O (\tau^2)$. Similarly, we conclude
	\begin{align*}
		\nabla_x R_{\eta} (x;\tau) = \int \left(\int_{0}^\tau u^2 \nabla_x^3 \{\eta(x - tu) f_{\varepsilon|X^*}(u|x - tu)\} (\tau - t) dt\right) du = O (\tau^2).
	\end{align*}
	Hence, we conclude
	\begin{align}
		\label{eq:pf:smooth eta ord 2}
		\int \eta(r) f_{\varepsilon|X^*} (x - r|r) dr = \eta(x) + O_x (\tau^2).
	\end{align}

	Next, consider $\nabla_x \int \eta(r) f_{\varepsilon|X^*} (x - r|r) dr = \nabla_x \int \eta (x - \tau u) f_{\xi|X^*} (u|x - \tau u) du $, where $\eta(\cdot)$ is a bounded function with $p$ bounded derivatives. Considering an expansion analogous to the one derived in the proof of Lemma \ref{lem:q(x,z) expansion}, Part \eqref{item: s q prime bias}, we obtain
	\begin{align*}
		\nabla_x \int \eta(r) f_{\varepsilon|X^*} (x - r|r) dr  &= \nabla_x \int \eta (x - \tau u) f_{\xi|X^*} (u|x - \tau u) du \\ &= \eta'(x) + r_\eta (x;\tau),
	\end{align*}
	where
	\begin{align*}
		r_{\eta} (x, \tau) &= \int \left(\int_{0}^\tau u^2 \nabla_x^3 \{\eta(x - tu) f_{\xi|X^*}(u|x - tu) \} (\tau - t) dt\right) du = \nabla_x R_{\eta} (x;\tau) = O (\tau^2).
	\end{align*}
	Similarly,
	\begin{align*}
		\nabla_x r_\eta (x, \tau) = \int \left(\int_{0}^\tau u^2 \nabla_x^4 \{\eta(x - tu) f_{\xi|X^*}(u|x - tu) \} (\tau - t) dt\right) du = O (\tau^2),
	\end{align*}
	which demonstrates
	\begin{align}
		\label{eq:pf:smooth nabla eta ord 2}
		\nabla_x \int \eta(r) f_{\varepsilon|X^*} (x - r|r) dr &= \eta'(x) + O_x (\tau^2).
	\end{align}

	Applying \eqref{eq:pf:smooth eta ord 2} and \eqref{eq:pf:smooth nabla eta ord 2} with $\rho (x) f_{X^*} (x)$, $f_{X^*}(x)$, and  $f_{X^*|Z} (x|z)$ as $\eta(x)$ we obtain
	\begin{alignat*}{3}
			&&&\int \rho(r) f_{X^*}(r) f_{\varepsilon|X^*} (x - r|r) dr &&= \rho (x) f_{X^*} (x) + O_x (\tau^2), \\
		f_X(x) &= &&\int f_{X^*}(r) f_{\varepsilon|X^*} (x - r|r) dr &&= f_{X^*} (x) + O_x (\tau^2), \\
		f_{X|Z}(x,z) &= &&\int f_{X^*|Z}(r|z) f_{\varepsilon|X^*} (x - r|r) dr &&= f_{X^*|Z} (x|z) + O_x (\tau^2),
	\end{alignat*}
	and
	\begin{alignat*}{3}
			&&&\nabla_x \int \rho(r) f_{X^*}(r) f_{\varepsilon|X^*} (x - r|r) dr &&= \nabla_x \{\rho (x) f_{X^*} (x) \} + O_x (\tau^2), \\
		f_X'(x) &= &&\nabla_x \int f_{X^*}(r) f_{\varepsilon|X^*} (x - r|r) dr &&= f_{X^*}' (x) + O_x (\tau^2), \\
		f_{X|Z}'(x,z) &= &&\nabla_x \int f_{X^*|Z}(r|z) f_{\varepsilon|X^*} (x - r|r) dr &&= f_{X^*|Z}' (x|z) + O_x (\tau^2).
	\end{alignat*}

	Combining the results above with derivations as in equations \eqref{eq:pf:score bias} and \eqref{eq:pf:q prime bias} completes the proof.
\end{proof}

\begin{lemma}
\label{lem:diff q} Suppose that the hypotheses of Theorem \ref{thm:NPID-non-cl:IV} are satisfied. Then,
\begin{equation*}
q\left( x,z_{1}\right) -q\left( x,z_{2}\right) = \rho ^{\prime
}\left( x\right) v\left( x\right) \left( s_{X^{\ast }|Z}^{\prime }\left(
x|z_{1}\right) -s_{X^{\ast }|Z}^{\prime }\left( x|z_{2}\right) \right)
+O_x\left( \tau ^{p}\right) .
\end{equation*}
\end{lemma}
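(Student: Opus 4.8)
The plan is to read the difference off the pointwise expansion of $q(x,z)$ supplied by Lemma~\ref{lem:q(x,z) expansion}(i), after strengthening its remainder from $O(\tau^p)$ to $O_x(\tau^p)$. The decisive structural feature of that expansion is that, apart from the remainder, every term is either free of the instrument $z$ --- the leading $\rho(x)$, the curvature term $\tfrac12 v(x)\rho^{\prime\prime}(x)$, and the cross term $\rho^{\prime}(x)v^{\prime}(x)$ --- or is the single contribution that transmits the dependence on $z$ through the conditional law of $X^{\ast}$ given $Z$. Writing the expansion at $z_1$ and at $z_2$ and subtracting therefore cancels $\rho(x)$ and each $z$-free term identically (they are the very same quantities in both expansions, so they do not even enter the error), and what survives is exactly the stated main term $\rho^{\prime}(x)v(x)\bigl(s_{X^{\ast}|Z}^{\prime}(x|z_1)-s_{X^{\ast}|Z}^{\prime}(x|z_2)\bigr)$ together with the difference of the two remainders.

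The one nontrivial point is that the conclusion is claimed in the $O_x(\cdot)$ sense: not only the remainder but also its $x$-derivative must be $O(\tau^p)$. For this I would not quote Lemma~\ref{lem:q(x,z) expansion}(i) as a black box, since its remainder is only shown to be $O(\tau^p)$, but instead rerun that derivation while simultaneously tracking the $x$-derivative, exactly in the manner of Lemma~\ref{lem:smooth q prime and s prime}. The latter already isolates the two building blocks I need: for any bounded $\eta$ with $p$ bounded derivatives one has $\int\eta(r)f_{\varepsilon|X^{\ast}}(x-r|r)\,dr=\eta(x)+O_x(\tau^2)$ and $\nabla_x\int\eta(r)f_{\varepsilon|X^{\ast}}(x-r|r)\,dr=\eta^{\prime}(x)+O_x(\tau^2)$, with the remainders dominated via Assumption~\ref{ass:NPID:dominance}. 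Carrying these estimates to order $\tau^p$ and applying them to the numerator $\eta=\rho f_{X^{\ast}|Z}$ and to the denominator $\eta=f_{X^{\ast}|Z}$, then differentiating the quotient $N/D$ once in $x$, shows that every step of the Lemma~\ref{lem:q(x,z) expansion}(i) computation survives one differentiation with the remainder staying of order $\tau^p$; hence the expansion of $q(x,z)$ holds with an $O_x(\tau^p)$ remainder. The argument is uniform across the two cases $p=3$ and $p=4$, since the odd-order terms are either absent or killed by $E[\xi_i^3|X_i^{\ast}]=0$, exactly as in the cited lemmas.

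With the $O_x(\tau^p)$ expansion in place the subtraction is immediate, since the $z$-free terms coincide for $z_1$ and $z_2$ and the surviving remainder is a difference of two $O_x(\tau^p)$ quantities, hence itself $O_x(\tau^p)$. I expect the only real obstacle to lie in the bookkeeping of the second paragraph --- verifying that differentiating the quotient $N/D$ in $x$ does not degrade the order of the remainder. This in turn reduces to two routine checks: that the denominator $D=f_{X|Z}(x|z)$ stays bounded away from zero near the point in question, which holds because $f_{X^{\ast}|Z}(x|z)>0$ there and $D=f_{X^{\ast}|Z}(x|z)+O_x(\tau^2)$ by Lemma~\ref{lem:smooth q prime and s prime}, and that the $O_x(\tau^p)$ remainders of $N$ and $D$ together with their $x$-derivatives are uniformly controlled, which is precisely what Assumptions~\ref{ass:NPID:smoothness} and~\ref{ass:NPID:dominance} guarantee. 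Everything else is algebra already performed in Lemmas~\ref{lem:q(x,z) expansion} and~\ref{lem:smooth q prime and s prime}.
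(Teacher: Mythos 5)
There is a genuine gap, and it sits exactly where you locate the ``only real obstacle'': upgrading the expansion of $q(x,z)$ (equivalently, of the building block $\int \eta(r) f_{\varepsilon|X^{\ast}}(x-r|r)\,dr$) to a remainder of order $O_x(\tau^p)$ is not possible under the stated hypotheses. Differentiating a $p$-th order Taylor remainder in $x$ costs one extra derivative: the derivative of that remainder involves $(p+1)$-th derivatives of $\eta(\cdot)f_{\xi|X^{\ast}}(u|\cdot)$, i.e.\ $(p+1)$-th derivatives of $\rho$, $f_{X^{\ast}|Z}$ and $f_{\xi|X^{\ast}}$, together with a dominance condition at level $\ell=p+1$. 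Assumptions~\ref{ass:NPID:smoothness}--\ref{ass:NPID:dominance} only provide $m\geq p$ derivatives and dominance for $\ell\leq m$, and the case $m=p$ is allowed (it is the leading case in the paper's rate illustration, $m=p=4$). In that case the best available bound on the $x$-derivative of the remainder of Lemma~\ref{lem:q(x,z) expansion}(i) is $O(\tau^{p-1})$, not $O(\tau^p)$; one cannot even use the moment condition $E[\xi_i^3|X_i^{\ast}]=0$ to kill the offending $\tau^{p-1}$ term, because isolating its coefficient from the integral-form remainder requires continuity of the $p$-th derivative, which bounded $p$-th derivatives do not give. This is precisely why Lemma~\ref{lem:smooth q prime and s prime} — your own template — only delivers $O_x(\tau^{p-2})$: it expands to first order so that differentiation stays within the available $m$ derivatives; pushing the pointwise expansion to order $p$ exhausts all $m=p$ derivatives and leaves none for the extra differentiation. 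Subtracting the two expansions does not rescue this, since the remainders depend on $z_j$ through $f_{X^{\ast}|Z}(\cdot|z_j)$ and do not cancel.

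The idea your proposal is missing is the one the paper's proof is built on: bring $q(x,z_1)-q(x,z_2)$ to a common denominator \emph{before} expanding. The numerator then becomes a double integral whose integrand carries the factor $\rho(x-\tau u_1)-\rho(x-\tau u_2)=O(\tau)$, which supplies orders of $\tau$ at the price of derivatives of $\rho$ alone; the density factors $f_{\xi,X^{\ast}|Z}(u_j,x-\tau u_j|z_j)$ are then only expanded to low order (the paper's terms $T_{ff,1},\dots,T_{ff,4}$), and the odd terms are eliminated by $E[\xi_i|X_i^{\ast}]=0$, $E[\xi_i^3|X_i^{\ast}]=0$ and the symmetry between $u_1$ and $u_2$. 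Because the orders of $\tau$ are distributed across factors of a product, a Leibniz differentiation in $x$ only ever adds one derivative to one factor, so no function is differentiated more than $m=p$ times and Assumption~\ref{ass:NPID:dominance} applies at every step. That accounting is what makes the $O_x(\tau^p)$ conclusion attainable under $m=p$; your route would require strengthening the hypotheses to $m\geq p+1$ (and dominance up to $\ell=p+1$), which the lemma does not assume.
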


\begin{proof}[Proof of Lemma~\protect\ref{lem:diff q}]
For concreteness, we provide the proof for $p=4$. The proof for $p=3$ is analogous (but simpler).

\noindent \textbf{1. }Note that by bringing the difference of ratios to the
common denominator we can write%
\begin{equation*}
q\left( x,z_{1}\right) -q\left( x,z_{2}\right) =\frac{N\left(
x,z_{1},z_{2},\tau \right) }{D\left( x,z_{1},z_{2},\tau \right) },
\end{equation*}%
where $D\left( x,z_{1},z_{2},\tau \right) =f_{X|Z}\left( x|z_{1}\right)
f_{X|Z}\left( x|z_{2}\right) $. The numerator is%
\begin{eqnarray*}
&&N\left( x,z_{1},z_{2},\tau \right) \\
&\equiv &\int \rho \left( r\right) f_{X^{\ast }|Z}\left( r|z_{1}\right)
f_{\varepsilon |X^{\ast }}\left( x-r|r\right) dr\times \int f_{X^{\ast
}|Z}\left( r|z_{2}\right) f_{\varepsilon |X^{\ast }}\left( x-r|r\right) dr \\
&&\qquad -\int \rho \left( r\right) f_{X^{\ast }|Z}\left( r|z_{2}\right)
f_{\varepsilon |X^{\ast }}\left( x-r|r\right) dr\times \int f_{X^{\ast
}|Z}\left( r|z_{1}\right) f_{\varepsilon |X^{\ast }}\left( x-r|r\right) dr \\
&=&\iint \int \left\{ \rho \left( x-\tau u_{1}\right) -\rho \left( x-\tau
u_{2}\right) \right\} f_{X^{\ast }|Z}\left( x-\tau u_{1}|z_{1}\right)
f_{X^{\ast }|Z}\left( x-\tau u_{2}|z_{2}\right) \\
&&\qquad \times f_{\xi |X^{\ast }}\left( u_{1}|x-\tau u_{1}\right) f_{\xi
|X^{\ast }}\left( u_{2}|x-\tau u_{2}\right) du_{1}du_{2} \\
&=&\iint \left\{ \rho \left( x-\tau u_{1}\right) -\rho \left( x-\tau
u_{2}\right) \right\} f_{\xi ,X^{\ast }|Z}\left( u_{1},x-\tau
u_{1}|z_{1}\right) f_{\xi, X^{\ast }|Z}\left( u_{2},x-\tau u_{2}|z_{2}\right)
du_{1}du_{2}.
\end{eqnarray*}

Note that for any $q_{1},q_{2},Q_{1},Q_{2}$ we have%
\begin{equation*}
q_{1}q_{2}=\left( q_{1}-Q_{1}\right) Q_{2}+Q_{1}\left( q_{2}-Q_{2}\right)
+\left( q_{1}-Q_{1}\right) \left( q_{2}-Q_{2}\right) +Q_{1}Q_{2},
\end{equation*}%
and taking $q_{j}\equiv f_{\xi ,X^{\ast }|Z}\left( u_{j},x-\tau
u_{j}|z_{j}\right) $, $Q_{j}=f_{\xi ,X^{\ast }|Z}\left( u_{j},x|z_{j}\right)
-f_{\xi ,X^{\ast }|Z}^{\prime }\left( u_{j},x|z_{j}\right) \tau u_{j}$ for $%
j\in \left[ 2\right] $ we have%
\begin{eqnarray*}
&&f_{\xi ,X^{\ast }|Z}\left( u_{1},x-\tau u_{1}|z_{1}\right) f_{\xi X^{\ast
}|Z}\left( u_{2},x-\tau u_{2}|z_{2}\right) \\
&=&T_{ff,1}+T_{ff,2}+T_{ff,3}+T_{ff,4} \\
T_{ff,1} &\equiv &\left( f_{\xi ,X^{\ast }|Z}\left( u_{1},x-\tau
u_{1}|z_{1}\right) +f_{\xi ,X^{\ast }|Z}^{\prime }\left(
u_{1},x|z_{1}\right) \tau u_{1}-f_{\xi ,X^{\ast }|Z}\left(
u_{1},x|z_{1}\right) \right) \\
&&\qquad \times \left( -f_{\xi ,X^{\ast }|Z}^{\prime }\left(
u_{2},x|z_{2}\right) \tau u_{2}+f_{\xi ,X^{\ast }|Z}\left(
u_{2},x|z_{2}\right) \right) \\
T_{ff,2} &\equiv &\left( f_{\xi ,X^{\ast }|Z}\left( u_{2},x-\tau
u_{2}|z_{2}\right) +f_{\xi ,X^{\ast }|Z}^{\prime }\left(
u_{2},x|z_{2}\right) \tau u_{2}-f_{\xi ,X^{\ast }|Z}\left(
u_{2},x|z_{2}\right) \right) \\
&&\qquad \times \left( -f_{\xi ,X^{\ast }|Z}^{\prime }\left(
u_{1},x|z_{1}\right) \tau u_{1}+f_{\xi ,X^{\ast }|Z}\left(
u_{1},x|z_{1}\right) \right) \\
T_{ff,3} &\equiv &\left( f_{\xi ,X^{\ast }|Z}\left( u_{1},x-\tau
u_{1}|z_{1}\right) +f_{\xi ,X^{\ast }|Z}^{\prime }\left(
u_{1},x|z_{1}\right) \tau u_{1}-f_{\xi ,X^{\ast }|Z}\left(
u_{1},x|z_{1}\right) \right) \\
&&\qquad \times \left( f_{\xi X^{\ast }|Z}\left( u_{2},x-\tau
u_{2}|z_{2}\right) +f_{\xi ,X^{\ast }|Z}^{\prime }\left(
u_{2},x|z_{2}\right) \tau u_{2}-f_{\xi ,X^{\ast }|Z}\left(
u_{2},x|z_{2}\right) \right) \\
T_{ff,4} &\equiv &\left( -f_{\xi ,X^{\ast }|Z}^{\prime }\left(
u_{1},x|z_{1}\right) \tau u_{1}+f_{\xi ,X^{\ast }|Z}\left(
u_{1},x|z_{1}\right) \right) \left( -f_{\xi ,X^{\ast }|Z}^{\prime }\left(
u_{2},x|z_{2}\right) \tau u_{2}+f_{\xi ,X^{\ast }|Z}\left(
u_{2},x|z_{2}\right) \right)
\end{eqnarray*}%
and let%
\begin{equation*}
I_{ff,j}\left( x\right) \equiv \iint \left\{ \rho \left( x-\tau u_{1}\right)
-\rho \left( x-\tau u_{2}\right) \right\} T_{ff,j}\left( \ldots \right)
du_{1}du_{2},\quad j\in \left[ 4\right] .
\end{equation*}

\noindent \textbf{2.} Note that 
\begin{subequations}
\label{eq:pf:lem:diff q: f approx}
\begin{eqnarray}
&&f_{\xi ,X^{\ast }|Z}\left( u_{j},x-\tau u_{j}|z_{j}\right) +f_{\xi
,X^{\ast }|Z}^{\prime }\left( u_{j},x|z_{j}\right) \tau u_{j}-f_{\xi
,X^{\ast }|Z}\left( u_{j},x|z_{j}\right)  \notag \\
&=&\int_{0}^{\tau } f_{\xi ,X^{\ast }|Z}^{\prime \prime }\left(
u_{j},x-tu_{j}|z_{j}\right) u_{j}^{2}\left( \tau -t\right) dt
\label{eq:pf:lem:diff q: f approx ord 2} \\
&=&\frac{\tau ^{2}}{2}f_{\xi ,X^{\ast }|Z}^{\prime \prime }\left(
u_{j},x|z_{j}\right) u_{j}^{2} - \int_{0}^{\tau }\frac{1}{2}f_{\xi ,X^{\ast
}|Z}^{\prime \prime \prime }\left( u_{j},x-tu_{j}|z_{j}\right)
u_{j}^{3}\left( \tau -t\right) ^{2}dt
\label{eq:pf:lem:diff q: f approx ord 3}
\end{eqnarray}%
and 
\end{subequations}
\begin{subequations}
\label{eq:pf:lem:diff q: d rho approx}
\begin{eqnarray}
&&\rho \left( x-\tau u_{1}\right) -\rho \left( x-\tau u_{2}\right)  \notag \\
&=& \sum_{j=1}^2 \left(-1\right)^j \int_{0}^{\tau} \rho'(x - t u_j ) u_j dt  \label{eq:pf:lem:diff q: d rho approx ord 1} \\
&=&\tau \rho ^{\prime }\left( x\right) \left( u_{2}-u_{1}\right)
+\sum_{j=1}^{2}\left( -1\right) ^{j-1}\int_{0}^{\tau } \rho
^{\prime \prime }\left( x-tu_{j}\right) u_{j}^{2}\left( \tau -t\right)
dt  \label{eq:pf:lem:diff q: d rho approx ord 2} \\
&=&\tau \rho ^{\prime }\left( x\right) \left( u_{2}-u_{1}\right) +\frac{\tau
^{2}}{2}\rho ^{\prime \prime }\left( x\right) \left(
u_{1}^{2}-u_{2}^{2}\right) \notag \\&& +\sum_{j=1}^{2}\left( -1\right)
^{j}\int_{0}^{\tau }\frac{1}{2}\rho ^{\prime \prime \prime }\left(
x-tu_{j}\right) u_{j}^{3}\left( \tau -t\right) ^{2}dt.
\label{eq:pf:lem:diff q: d rho approx ord 3}
\end{eqnarray}

We will use the notation 
\end{subequations}
\begin{eqnarray*}
\mathbb{M}_{|\zeta _{1}\zeta _{2}z_{1}z_{2}}\left[ a\left(
x,U_{1},U_{2}\right) \right] &\equiv &\mathbb{E}_{|\zeta _{1}\zeta
_{2}z_{1}z_{2}}\left[ a\left( x,U_{1},U_{2}\right) \right] f_{X^{\ast
}|Z}\left( \zeta _{1}|z_{1}\right) f_{X^{\ast }|Z}\left( \zeta
_{2}|z_{2}\right) , \\
\mathbb{E}_{|\zeta _{1}\zeta _{2}z_{1}z_{2}}\left[ a\left(
x,U_{1},U_{2}\right) \right] &\equiv &\iint a\left( x,u_{1},u_{2}\right)
f_{\xi |X^{\ast },Z}\left( u_{1}|\zeta _{1},z_{1}\right) f_{\xi |X^{\ast
},Z}\left( u_{2}|\zeta _{2},z_{2}\right) du_{1}du_{2},
\end{eqnarray*}%
i.e., $\mathbb{E}_{|\zeta _{1}\zeta _{2}z_{1}z_{2}}\left[ \cdot \right] $ is
the expectation w.r.t. $U_{1}$ and $U_{2}$, where $U_{1}$ and $U_{2}$ are
independent and $U_{j}\sim f_{\xi |X^{\ast },Z}\left( \cdot |\zeta
_{j},z_{j}\right) = f_{\xi|X^*}(\cdot|\zeta_j) $ for $j\in \left[ 2\right] $.

Let 
\begin{align*}
\Delta \left( x,q,s\right) &\equiv \left\{ \rho \left( x-\tau u_{1}\right)
-\rho \left( x-\tau u_{2}\right) \right\} \times \nabla _{x}^{q}f_{\xi
,X^{\ast }|Z}\left( u_{1},x|z_{1}\right) u_{1}^{q}\times \nabla
_{x}^{s}f_{\xi ,X^{\ast }|Z}\left( u_{2},x|z_{2}\right) u_{2}^{s}, \\
I_{\Delta \left( x,q,s\right) } &\equiv \iint \Delta \left( x,q,s\right)
du_{1}du_{2}=\left. \nabla _{\zeta _{1}}^{q}\nabla _{\zeta _{2}}^{s}\mathbb{M%
}_{|\zeta _{1}\zeta _{2}z_{1}z_{2}}\left[ \left( \rho \left( x-\tau
U_{1}\right) -\rho \left( x-\tau U_{2}\right) \right) U_{1}^{q}U_{2}^{s}%
\right] \right\vert _{\zeta _{1}=\zeta _{2}=x}.
\end{align*}%
Also, let 
\begin{eqnarray*}
h\left( x,a,b,q,s\right) &\equiv &u_{1}^{a}u_{2}^{b}\times \nabla
_{x}^{q}f_{\xi ,X^{\ast }|Z}\left( u_{1},x|z_{1}\right) u_{1}^{q}\times
\nabla _{x}^{s}f_{\xi ,X^{\ast }|Z}\left( u_{2},x|z_{2}\right) u_{2}^{s} \\
&=&u_{1}^{a+q}u_{2}^{b+s}\times \nabla _{x}^{q}f_{\xi ,X^{\ast }|Z}\left(
u_{1},x|z_{1}\right) \nabla _{x}^{s}f_{\xi ,X^{\ast }|Z}\left(
u_{2},x|z_{2}\right)
\end{eqnarray*}%
and 
\begin{equation}
I_{h\left( x,a,b,q,s\right) }\equiv \iint h\left( x,a,b,q,s\right)
du_{1}du_{2}=\left. \nabla _{\zeta _{1}}^{q}\nabla _{\zeta _{2}}^{s}\mathbb{M%
}_{|\zeta _{1}\zeta _{2}z_{1}z_{2}}\left[ U_{1}^{a+q}U_{2}^{b+s}\right]
\right\vert _{\zeta _{1}=\zeta _{2}=x}.  \label{eq:pf:lem:diff q: int h}
\end{equation}

From equation~(\ref{eq:pf:lem:diff q: d rho approx ord 3}) we have%
\begin{eqnarray*}
&&\Delta \left( x,q,s\right) \\
&=&\left\{ \tau \rho ^{\prime }\left( x\right) \left( u_{2}-u_{1}\right) +%
\frac{\tau ^{2}}{2}\rho ^{\prime \prime }\left( x\right) \left(
u_{1}^{2}-u_{2}^{2}\right) +\sum_{j=1}^{2}\left( -1\right)
^{j}\int_{0}^{\tau }\frac{1}{2!}\rho ^{\prime \prime \prime }\left(
x-tu_{j}\right) u_{j}^{3}\left( \tau -t\right) ^{2}dt\right\} \\ &&\qquad \times h\left(
x,0,0,q,s\right) \\
&=&\tau \rho ^{\prime }\left( x\right) \left\{ h\left( x,0,1,q,s\right)
-h\left( x,1,0,q,s\right) \right\} +\frac{\tau ^{2}}{2}\rho ^{\prime \prime
}\left( x\right) \left\{ h\left( x,2,0,q,s\right) -h\left( x,0,2,q,s\right)
\right\} \\
&&\qquad +\sum_{j=1}^{2}\left( -1\right) ^{j}\int_{0}^{\tau }\frac{1}{2}%
\rho ^{\prime \prime \prime }\left( x-tu_{j}\right) u_{j}^{3}\left( \tau
-t\right) ^{2}dt\times h\left( x,0,0,q,s\right).
\end{eqnarray*}%
Thus, for $q,s \in \{0,1\}$, we have
\begin{eqnarray}
I_{\Delta \left( x,q,s\right) } &=&\tau \rho ^{\prime }\left( x\right)
\left\{ I_{h\left( x,0,1,q,s\right) }-I_{h\left( x,1,0,q,s\right) }\right\} \notag \\ && \qquad +%
\frac{\tau ^{2}}{2}\rho ^{\prime \prime }\left( x\right) \left\{ I_{h\left(
x,2,0,q,s\right) }-I_{h\left( x,0,2,q,s\right) }\right\} +O_x\left(
 \tau ^{3}\right).
\label{eq:pf:lem:diff q: int psi approx ord 3}%
\end{eqnarray}

Notice that if $q = 2$ or $s = 2$, we use equation~(\ref{eq:pf:lem:diff q: d rho approx ord 2}) instead to obtain %
\begin{eqnarray*}
&&\Delta \left( x,q,s\right) \\
&=&\left\{ \tau \rho ^{\prime }\left( x\right) \left( u_{2}-u_{1}\right)
+\sum_{j=1}^{2}\left( -1\right) ^{j-1}\int_{0}^{\tau } \rho
^{\prime \prime }\left( x-tu_{j}\right) u_{j}^{2}\left( \tau -t\right)
dt\right\} h\left(
x,0,0,q,s\right) \\
&=&\tau \rho ^{\prime }\left( x\right) \left\{ h\left( x,0,1,q,s\right)
-h\left( x,1,0,q,s\right) \right\}\\
&&\qquad + \sum_{j=1}^{2}\left( -1\right) ^{j-1}\int_{0}^{\tau } \rho
^{\prime \prime }\left( x-tu_{j}\right) u_{j}^{2}\left( \tau -t\right)
dt \times h\left(
x,0,0,q,s\right).
\end{eqnarray*}%
Thus, for both $q,s \in \{0,1,2\}$, we have
\begin{eqnarray}
I_{\Delta \left( x,q,s\right) }
=\tau \rho ^{\prime }\left( x\right) \left\{ I_{h\left( x,0,1,q,s\right)
}-I_{h\left( x,1,0,q,s\right) }\right\} +O_x\left( 
\tau ^{2}\right).  \label{eq:pf:lem:diff q: int psi approx ord 2}
\end{eqnarray}

\bigskip

\noindent \textbf{3.1.} Using equation~(\ref{eq:pf:lem:diff q: f approx ord
2}) we have 

\begin{equation*}
T_{ff,3}\left( x\right) =\int_{0}^{\tau }f_{\xi ,X^{\ast
}|Z}^{\prime \prime }\left( u_{1},x-t_1 u_{1}|z_{1}\right) u_{1}^{2}\left( \tau
-t_1\right) dt_1\times \int_{0}^{\tau }f_{\xi ,X^{\ast }|Z}^{\prime
\prime }\left( u_{2},x-t_{2}u_{2}|z_{2}\right) u_{2}^{2}\left( \tau
-t_{2}\right) dt_{2}.
\end{equation*}%
First, 
\begin{align*}
	\abs{T_{ff,3}(x)} \leqslant u_1^2 \sup_{\tilde x \in \suppX} \abs{f_{\xi,X^*|Z}''(u_1, \tilde x|z_1)} u_2^2 \sup_{\tilde x \in \suppX} \abs{f_{\xi,X^*|Z}''(u_2, \tilde x|z_2)} \frac{\tau^4}{4}.
\end{align*}
Next,
\begin{align*}
	f_{\xi,X^*|Z}''(u, x|z) &= \nabla_x^{2} \{ f_{\xi|X^*} (u|x) f_{X^*|Z} (x|z) \} \\
	&= f_{\xi|X^*}'' (u|x) f_{X^*|Z} (x|z) + 2 f_{\xi|X^*}' (u|x) f_{X^*|Z}' (x|z) + f_{\xi|X^*} (u|x) f''_{X^*|Z} (x|z).
\end{align*}
Hence, using Assumption \ref{ass:NPID:dominance} and boundedness of $\rho$ and $f_{\xi|X^*}$ and its derivatives, we conclude
\begin{align*}
	I_{ff,3} (x) = O (\tau^4).
\end{align*}
Similarly,
\begin{align*}
	\abs{T_{ff,3}'(x)} \leqslant \sum_{j=0}^1 \sup_{\tilde x \in \suppX} \abs{\nabla_x^{3-j} f_{\xi,X^*|Z}(u_1, \tilde x|z_1)}  \sup_{\tilde x \in \suppX} \abs{\nabla_x^{2+j} f_{\xi,X^*|Z}(u_2, \tilde x|z_2)} u_1^2 u_2^2 \frac{\tau^4}{4},
\end{align*}
and
\begin{align*}
	I_{ff,3}'(x) = \iint \sum_{j=0}^1 \left\{ \nabla_x^{1-j} \rho \left( x-\tau u_{1}\right) - \nabla_x^{1-j} \rho \left( x-\tau u_{2}\right) \right\} \nabla_x^{j} T_{ff,3}\left( x \right) du_{1}du_{2} = O (\tau^4),
\end{align*}
where the second inequality follows from Assumption \ref{ass:NPID:dominance}, and boundedness of $\rho$ and $f_{X^*|Z}$ (and their derivatives). Hence we conclude $I_{ff,3} (x) = O_x(\tau^4)$.

\noindent \textbf{3.2.}%
\begin{align*}
T_{ff,1}\equiv &\left( f_{\xi ,X^{\ast }|Z}\left( u_{1},x-\tau
u_{1}|z_{1}\right) +f_{\xi ,X^{\ast }|Z}^{\prime }\left(
u_{1},x|z_{1}\right) \tau u_{1}-f_{\xi ,X^{\ast }|Z}\left(
u_{1},x|z_{1}\right) \right) \\ &\times \left( -f_{\xi ,X^{\ast }|Z}^{\prime }\left(
u_{2},x|z_{2}\right) \tau u_{2}+f_{\xi ,X^{\ast }|Z}\left(
u_{2},x|z_{2}\right) \right). 
\end{align*}%
Thus,%
\begin{eqnarray*}
I_{ff,1}(x) &=&\iint \left\{ \rho \left( x-\tau u_{1}\right) -\rho \left(
x-\tau u_{2}\right) \right\} \\ &&\qquad \times \left( \frac{\tau ^{2}}{2}f_{\xi ,X^{\ast
}|Z}^{\prime \prime }\left( u_{1},x|z_{1}\right) u_{1}^{2} - \int_{0}^{\tau }%
\frac{1}{2}f_{\xi ,X^{\ast }|Z}^{\prime \prime \prime }\left(
u_{1},x-tu_{1}|z_{1}\right) u_{1}^{3}\left( \tau -t\right) ^{2}dt\right)  \\
&&\qquad \times \left( -f_{\xi ,X^{\ast }|Z}^{\prime }\left(
u_{2},x|z_{2}\right) \tau u_{2}+f_{\xi ,X^{\ast }|Z}\left(
u_{2},x|z_{2}\right) \right) du_{1}du_{2} \\
&=&\frac{\tau ^{2}}{2}\left( -\tau I_{\Delta \left( x,2,1\right) }+I_{\Delta
\left( x,2,0\right) }\right) +I_{ff,1,2}(x),
\end{eqnarray*}%
where the remainder is%
\begin{eqnarray*}
I_{ff,1,2}(x) &=& - \iint \left\{ \rho \left( x-\tau u_{1}\right) -\rho \left(
x-\tau u_{2}\right) \right\} \left( \int_{0}^{\tau }\frac{1}{2}f_{\xi
,X^{\ast }|Z}^{\prime \prime \prime }\left( u_{1},x-tu_{1}|z_{1}\right)
u_{1}^{3}\left( \tau -t\right) ^{2}dt\right)  \\
&&\qquad \times \left( -f_{\xi ,X^{\ast }|Z}^{\prime }\left(
u_{2},x|z_{2}\right) \tau u_{2}+f_{\xi ,X^{\ast }|Z}\left(
u_{2},x|z_{2}\right) \right) du_{1}du_{2}.
\end{eqnarray*}%
By an argument similar to the one provided in part 3.1 above,
\begin{eqnarray*}
I_{ff,1,2,1}(x) &=& \iint \left\{ \rho \left( x-\tau u_{1}\right) -\rho \left(
x-\tau u_{2}\right) \right\} \left( \int_{0}^{\tau }\frac{1}{2}f_{\xi
,X^{\ast }|Z}^{\prime \prime \prime }\left( u_{1},x-tu_{1}|z_{1}\right)
u_{1}^{3}\left( \tau -t\right) ^{2}dt\right)  \\
&&\qquad \times f_{\xi ,X^{\ast }|Z}^{\prime }\left(
u_{2},x|z_{2}\right) \tau u_{2} du_{1}du_{2} \\
&=&O_x\left(  \tau ^{4}\right).
\end{eqnarray*}%
Next,
\begin{eqnarray*}
I_{ff,1,2,2}(x) &=& - \iint \left\{ \rho \left( x-\tau u_{1}\right) -\rho \left(
x-\tau u_{2}\right) \right\} \left( \int_{0}^{\tau }\frac{1}{2}f_{\xi
,X^{\ast }|Z}^{\prime \prime \prime }\left( u_{1},x-tu_{1}|z_{1}\right)
u_{1}^{3}\left( \tau -t\right) ^{2}dt\right)  \\
&&\qquad \times f_{\xi ,X^{\ast }|Z}\left(
u_{2},x|z_{2}\right) du_{1}du_{2} \\
&=& - \iint \left( \sum_{j=1}^2 \left(-1\right)^j \int_{0}^{\tau} \rho'(x - t u_j ) u_j dt \right) \left( \int_{0}^{\tau }\frac{1}{2}f_{\xi
,X^{\ast }|Z}^{\prime \prime \prime }\left( u_{1},x-tu_{1}|z_{1}\right)
u_{1}^{3}\left( \tau -t\right) ^{2}dt\right)  \\
&&\qquad \times f_{\xi ,X^{\ast }|Z}\left(
u_{2},x|z_{2}\right) du_{1}du_{2} \\
&=&O_x\left(  \tau ^{4}\right),
\end{eqnarray*}%
where the second equality follows from equation~(\ref{eq:pf:lem:diff q: d rho
approx ord 1}), and the last follows from an argument similar to the one provided in part 3.1. Then, we conclude
\begin{align*}
	I_{ff,1,2}(x) = I_{ff,1,2,1}(x) + I_{ff,1,2,2}(x) = O_x (\tau^4).
\end{align*}

Next, note that $I_{h\left( x,0,1,2,0\right) }=0$ and $I_{h\left( x,1,0,2,0\right) }=0$ \ since $\mathbb{M}_{|\zeta
_{1}\zeta _{2}z_{1}z_{2}}\left[ U_{1}^{2}U_{2}^{{}}\right] =0$ and $\mathbb{M%
}_{|\zeta _{1}\zeta _{2}z_{1}z_{2}}\left[ U_{1}^{3}\right] =0$,
respectively. Thus, using equation~(\ref{eq:pf:lem:diff q: int psi approx
ord 2})%
\begin{eqnarray*}
I_{ff,1}(x) &=&\frac{\tau ^{2}}{2}\left( -\tau I_{\Delta \left( x,2,1\right)
}+I_{\Delta \left( x,2,0\right) }\right) +O_x\left( 
\tau ^{4}\right) =\tau ^{3}O_x\left(  \tau \right) +%
\frac{\tau ^{2}}{2}I_{\Delta \left( x,2,0\right) }+O_x\left(  \tau ^{4}\right)  \\
&=&\frac{\tau ^{2}}{2}\left( \tau \rho ^{\prime }\left( x\right) \left\{
I_{h\left( x,0,1,2,0\right) }-I_{h\left( x,1,0,2,0\right) }\right\} +O_x\left( \tau ^{2}\right) \right) +O_x\left(  \tau ^{4}\right)  \\
&=&O_x\left(  \tau ^{4}\right).
\end{eqnarray*}

Similarly, 
\begin{equation*}
I_{ff,2}\left( x\right) =O_x\left(  \tau ^{4}\right) .
\end{equation*}

\bigskip

\noindent \textbf{3.3} The main term is 
\begin{eqnarray*}
I_{ff,4}\left( x\right)  &\equiv &\iint \left\{ \rho \left( x-\tau
u_{1}\right) -\rho \left( x-\tau u_{2}\right) \right\} \left( -f_{\xi
,X^{\ast }|Z}^{\prime }\left( u_{1},x|z_{1}\right) \tau u_{1}+f_{\xi
,X^{\ast }|Z}\left( u_{1},x|z_{1}\right) \right)  \\
&&\qquad \times \left( -f_{\xi ,X^{\ast }|Z}^{\prime }\left(
u_{2},x|z_{2}\right) \tau u_{2}+f_{\xi ,X^{\ast }|Z}\left(
u_{2},x|z_{2}\right) \right) du_{1}du_{2} \\
&=&\tau ^{2}I_{\Delta \left( x,1,1\right) }-\tau I_{\Delta \left(
x,1,0\right) }-\tau I_{\Delta \left( x,0,1\right) }+I_{\Delta \left(
x,0,0\right) } \\
&=&-\tau \left( I_{\Delta \left( x,1,0\right) }+I_{\Delta \left(
x,0,1\right) }\right) ,
\end{eqnarray*}%
where the last equality holds because $I_{\Delta \left( x,0,0\right)
}=I_{\Delta \left( x,1,1\right) }=0$ due to the symmetry. Using equation~(%
\ref{eq:pf:lem:diff q: int psi approx ord 3}) we have%
\begin{equation*}
I_{\Delta \left( x,1,0\right) }=\tau \rho ^{\prime }\left( x\right) \left\{
I_{h\left( x,0,1,1,0\right) }-I_{h\left( x,1,0,1,0\right) }\right\} +\frac{%
\tau ^{2}}{2}\rho ^{\prime \prime }\left( x\right) \left\{ I_{h\left(
x,2,0,1,0\right) }-I_{h\left( x,0,2,1,0\right) }\right\} +O_x\left(
 \tau ^{3}\right) .
\end{equation*}%
Using equation~(\ref{eq:pf:lem:diff q: int h}),%
\begin{eqnarray*}
I_{h\left( x,0,1,1,0\right) } &=&\left. \nabla _{\zeta _{1}}\mathbb{M}%
_{|\zeta _{1}\zeta _{2}z_{1}z_{2}}\left[ U_{1}U_{2}\right] \right\vert
_{\zeta _{1}=\zeta _{2}=x}=0, \\
I_{h\left( x,1,0,1,0\right) } &=&\left. \nabla _{\zeta _{1}}\mathbb{M}%
_{|\zeta _{1}\zeta _{2}z_{1}z_{2}}\left[ U_{1}^{2}\right] \right\vert
_{\zeta _{1}=\zeta _{2}=x}=\left. \nabla _{\zeta _{1}}\left( E[\xi_i^2|X_i^* = \zeta_1] f_{X^{\ast }|Z}\left( \zeta _{1}|z_{1}\right) f_{X^{\ast
}|Z}\left( \zeta _{2}|z_{2}\right) \right) \right\vert _{\zeta _{1}=\zeta
_{2}=x} \\
&=&\left[E[\xi_i^2|X_i^* = x] f_{X^{\ast }|Z}^{\prime }\left( x|z_{1}\right)
+ \nabla_x E[\xi_i^2|X_i^* = x] f_{X^{\ast }|Z}\left( x|z_{1}\right)\right] f_{X^{\ast
}|Z}\left( x|z_{2}\right),  \\
I_{h\left( x,2,0,1,0\right) } &=&\left. \nabla _{\zeta _{1}}\mathbb{M}%
_{|\zeta _{1}\zeta _{2}z_{1}z_{2}}\left[ U_{1}^{3}\right] \right\vert
_{\zeta _{1}=\zeta _{2}=x}=\left. \nabla _{\zeta _{1}}0\right\vert _{\zeta
_{1}=\zeta _{2}=x}=0, \\
I_{h\left( x,0,2,1,0\right) } &=&\left. \nabla _{\zeta _{1}}\mathbb{M}%
_{|\zeta _{1}\zeta _{2}z_{1}z_{2}}\left[ U_{1}U_{2}^{2}\right] \right\vert
_{\zeta _{1}=\zeta _{2}=x}=0,
\end{eqnarray*}%
and hence 
\begin{equation*}
I_{\Delta \left( x,1,0\right) }=-\tau \rho ^{\prime }\left( x\right) \left[
E[\xi_i^2|X_i^* = x] f_{X^{\ast }|Z}^{\prime }\left( x|z_{1}\right) + \nabla_x E[\xi_i^2|X_i^* = x] f_{X^{\ast }|Z}\left( x|z_{1}\right) \right] f_{X^{\ast
}|Z}\left( x|z_{2}\right) +O_x\left(  \tau ^{3}\right) .
\end{equation*}%
Similarly,%
\begin{equation*}
I_{\Delta \left( x,0,1\right) }=\tau \rho ^{\prime }\left( x\right) \left[
E[\xi_i^2|X_i^* = x] f_{X^{\ast }|Z}^{\prime }\left( x|z_{2}\right) + \nabla_x E[\xi_i^2|X_i^* = x] f_{X^{\ast }|Z}\left( x|z_{2}\right) \right] f_{X^{\ast
}|Z}\left( x|z_{1}\right) +O_x\left(  \tau ^{3}\right) ,
\end{equation*}%
and hence%
\begin{eqnarray*}
I_{ff,4}\left( x\right)  &=&-\tau \left( I_{\Delta \left( x,1,0\right)
}+I_{\Delta \left( x,0,1\right) }\right)  \\
&=& \rho ^{\prime }\left( x\right) v\left( x\right) \left( f_{X^{\ast
}|Z}^{\prime }\left( x|z_{1}\right) f_{X^{\ast }|Z}\left( x|z_{2}\right)
-f_{X^{\ast }|Z}\left( x|z_{1}\right) f_{X^{\ast }|Z}^{\prime }\left(
x|z_{2}\right) \right) +O_x\left(  \tau ^{4}\right),
\end{eqnarray*}
where we used $v(x) = \tau^2 E[\xi_i^2|X_i^* = x]$.

\noindent \textbf{4}. Putting all $I_{ff,j}\left( x\right) $ together we have%
\begin{eqnarray*}
&&N\left( x,z_{1},z_{2},\tau \right) \\
&=&\rho ^{\prime }\left( x\right) v\left( x\right) \left(
f_{X^{\ast }|Z}^{\prime }\left( x|z_{1}\right) f_{X^{\ast }|Z}\left(
x|z_{2}\right) -f_{X^{\ast }|Z}\left( x|z_{1}\right) f_{X^{\ast }|Z}^{\prime
}\left( x|z_{2}\right) \right) +O_x\left(  \tau
^{4}\right) \\
&=&\rho ^{\prime }\left( x\right) v\left( x\right) \left(
s_{X^{\ast }|Z}\left( x|z_{1}\right) -s_{X^{\ast }|Z}\left( x|z_{2}\right) \right) f_{X^{\ast }|Z}\left( x|z_{1}\right)
f_{X^{\ast }|Z}\left( x|z_{2}\right) +O_x\left(  \tau
^{4}\right) \text{.}
\end{eqnarray*}

Moreover, Lemma \ref{lem:smooth q prime and s prime} establishes $f_{X|Z}(x,z_j) = f_{X^*|Z} (x|z_j) + O_x (\tau^2)$ for $j \in \{1,2\}$. Hence, the denominator $D\left( x,z_{1},z_{2},\tau \right) =f_{X|Z}\left( x|z_{1}\right) f_{X|Z}\left( x|z_{2}\right) $ satisfies
\begin{equation*}
D\left( x,z_{1},z_{2},\tau \right) =f_{X^{\ast }|Z}\left( x|z_{1}\right)
f_{X^{\ast }|Z}\left( x|z_{2}\right) + O_x (\tau^2).
\end{equation*}%
Thus,%
\begin{eqnarray*}
&&\frac{N\left( x,z_{1},z_{2},\tau \right) }{D\left( x,z_{1},z_{2},\tau
\right) } \\
&=&\frac{\rho ^{\prime }\left( x\right) v\left( x\right) f_{X^{\ast
}|Z}\left( x|z_{1}\right) f_{X^{\ast }|Z}\left( x|z_{2}\right) \left(
s_{X^{\ast }|Z}\left( x|z_{1}\right) -s_{X^{\ast }|Z}\left( x|z_{2}\right) \right) +O_x\left(  \tau
^{4}\right) }{f_{X^{\ast }|Z}\left( x|z_{1}\right) f_{X^{\ast }|Z}\left(
x|z_{2}\right) +O_x\left(  \tau ^{2}\right) } \\
&=&\rho ^{\prime }\left( x\right) v\left( x\right) \left(
s_{X^{\ast }|Z}\left( x|z_{1}\right) -s_{X^{\ast }|Z}\left( x|z_{2}\right) \right) +O_x\left(  \tau
^{4}\right) ,
\end{eqnarray*}%
which completes the proof.
\end{proof}

\subsubsection{Proof of the main result}
Equipped with Lemmas \ref{lem:q(x,z) expansion}-\ref{lem:diff q}, we are ready to prove Theorem \ref{thm:NPID-non-cl:IV}.

\begin{proof}[Proof of Theorem \ref{thm:NPID-non-cl:IV}]
	Using equation~(\ref{eq:NPID:q(x,z) K2 exp}) we have%
	\begin{equation}
	q\left( x,z_{1}\right) -q\left( x,z_{2}\right) =v\left( x\right)
	\rho ^{\prime }\left( x\right) \left[ s_{X^{\ast }|Z}\left( x|z_{1}\right)
	-s_{X^{\ast }|Z}\left( x|z_{2}\right) \right] +O\left( \tau ^{p}\right) . \label{eq:pf:delta q(x,z)}
	\end{equation}%
	Combined with \eqref{eq:pf: s q prime bias}, the above implies that%
	\begin{equation*}
	q\left( x,z_{1}\right) -q\left( x,z_{2}\right) =v\left( x\right)
	q^{\prime }\left( x\right) \left[ s_{X|Z}\left( x|z_{1}\right)
	-s_{X|Z}\left( x|z_{2}\right) \right] +O\left( \tau ^{p}\right) .
	\end{equation*}

	Hence, for any $x$ with $\rho ^{\prime }\left( x\right) \left[ s_{X^{\ast
	}|Z}\left( x|z_{1}\right) -s_{X^{\ast }|Z}\left( x|z_{2}\right) \right] \neq
	0$ (or $x$ such that $\left\vert q^{\prime }\left( x\right) \left[
	s_{X|Z}\left( x|z_{1}\right) -s_{X|Z}\left( x|z_{2}\right) \right]
	\right\vert >C>0$),%
	\begin{equation}
	\widetilde{v}\left( x\right) =v\left( x\right) +O\left( \tau
	^{p}\right) ,\text{\qquad where }\widetilde{v}\left( x\right) \equiv \frac{%
	q\left( x,z_{1}\right) -q\left( x,z_{2}\right) }{q^{\prime }\left( x\right) %
	\left[ s_{X|Z}\left( x|z_{1}\right) -s_{X|Z}\left( x|z_{2}\right) \right] }.
	\label{eq:pf:v-tilde result}
	\end{equation}

	Next by Lemma~\ref{lem:diff q}, we have
	\begin{equation*}
	q\left( x,z_{1}\right) -q\left( x,z_{2}\right) =\tau ^{2}\rho ^{\prime
	}\left( x\right) v\left( x\right) \left( s_{X^{\ast }|Z}\left(
	x|z_{1}\right) -s_{X^{\ast }|Z}\left( x|z_{2}\right) \right)
	+O_x\left( \tau ^{p}\right).
	\end{equation*}%
	By Lemma \ref{lem:smooth q prime and s prime}, we also have $q'(x) = \rho'(x) + O_x(\tau^{p-2})$ and $s_{X|Z}\left(x|z\right) = s_{X^{\ast }|Z}\left(x|z\right) + O_x(\tau^{p-2})$. Hence,
	\begin{equation*}
	q\left( x,z_{1}\right) -q\left( x,z_{2}\right) = q ^{\prime
	}\left( x\right) v\left( x\right) \left( s_{X|Z}\left(
	x|z_{1}\right) -s_{X|Z}\left( x|z_{2}\right) \right)
	+O_x\left( \tau ^{p}\right).
	\end{equation*}%
	This implies that%
	\begin{equation*}
	q^{\prime }\left( x,z_{1}\right) -q^{\prime }\left( x,z_{2}\right) = \nabla _{x}\left[ q ^{\prime }\left( x\right) v\left( x\right) \left(
	s_{X^|Z}\left( x|z_{1}\right) -s_{X^|Z}\left( x|z_{2}\right) \right) \right] +O\left( \tau ^{p}\right) .
	\end{equation*}
	Hence, $v'(x)$ satisfies
	\begin{align*}
		v'(x) &= \frac{q^{\prime }\left( x,z_{1}\right)
	-q^{\prime }\left( x,z_{2}\right) }{q^{\prime }\left( x\right) \left[
	s_{X|Z}\left( x|z_{1}\right) -s_{X|Z}\left( x|z_{2}\right) \right] }-{v%
	}\left( x\right) \frac{\nabla _{x}\left( q^{\prime }\left( x\right) \left[
	s_{X|Z}\left( x|z_{1}\right) -s_{X|Z}\left( x|z_{2}\right) \right] \right) }{%
	q^{\prime }\left( x\right) \left[ s_{X|Z}\left( x|z_{1}\right)
	-s_{X|Z}\left( x|z_{2}\right) \right] } + O (\tau^p) \\
	&=\underbrace{\frac{q^{\prime }\left( x,z_{1}\right)
	-q^{\prime }\left( x,z_{2}\right) }{q^{\prime }\left( x\right) \left[
	s_{X|Z}\left( x|z_{1}\right) -s_{X|Z}\left( x|z_{2}\right) \right] }-\widetilde {v%
	}\left( x\right) \frac{\nabla _{x}\left( q^{\prime }\left( x\right) \left[
	s_{X|Z}\left( x|z_{1}\right) -s_{X|Z}\left( x|z_{2}\right) \right] \right) }{%
	q^{\prime }\left( x\right) \left[ s_{X|Z}\left( x|z_{1}\right)
	-s_{X|Z}\left( x|z_{2}\right) \right] }}_{=\widetilde v'(x)} + O (\tau^p),
	\end{align*}
	where the second equality uses equation \eqref{eq:pf:v-tilde result}. Hence, we conclude
	\begin{align}
	\label{eq:pf:v-tilde prime result}
	\widetilde{v}^{\prime }\left( x\right) =v^{\prime }\left( x\right) +O\left( \tau
	^{p}\right).
	\end{align}

	Next, notice that since $q'(x) = \rho'(x) + O_x (\tau^{p-2})$ (Lemma \ref{lem:smooth q prime and s prime}), we also have
	\begin{align}
		\label{eq:pf:q'' bias}
		\abs{q''(x) - \rho''(x)} = O (\tau^{p-2}).
	\end{align}

	Finally, recall that \eqref{eq:NPID:q(x,z) K2 exp} implies
	\begin{align*}
		\rho(x) &= q(x,z_1) - {v}\left(
		x\right) \left[ \rho^{\prime }\left( x\right) s_{X^*|Z}\left( x|z\right) +\tfrac{1%
		}{2}\rho^{\prime \prime }\left( x\right) \right] - {v}^{\prime }\left(
		x\right) \rho^{\prime }\left( x\right) + O(\tau^p),\\
		&= q(x,z_1) - {v}\left(
		x\right) \left[ q^{\prime }\left( x\right) s_{X|Z}\left( x|z\right) +\tfrac{1%
		}{2}q^{\prime \prime }\left( x\right) \right] - {v}^{\prime }\left(
		x\right) q^{\prime }\left( x\right) + O(\tau^p) \\
		&= q(x,z_1) - \widetilde {v}\left(
		x\right) \left[ q^{\prime }\left( x\right) s_{X|Z}\left( x|z\right) +\tfrac{1%
		}{2}q^{\prime \prime }\left( x\right) \right] - \widetilde {v}^{\prime }\left(
		x\right) q^{\prime }\left( x\right) + O(\tau^p)\\
		&= \widetilde \rho (x,z_1) + O(\tau^p),	
	\end{align*}
	where the second equality uses $v(x) = O (\tau^2)$, $v'(x) = O(\tau^2)$, and equations \eqref{eq:pf: s q prime bias} and \eqref{eq:pf:q'' bias}, and the third equality uses \eqref{eq:pf:v-tilde result} and \eqref{eq:pf:v-tilde prime result}.
\end{proof}

\subsection{Proofs of Lemma \ref{lem:NCME:IV:rates}}

\begin{proof}[Proof of Lemma \ref{lem:NCME:IV:rates}]
	First, consider $\int \eta(r) f_{\varepsilon|X^*} (x - r|r) dr =\int \eta (x - \tau u) f_{\xi|X^*} (u|x - \tau u) du$ where $\eta (\cdot)$ is a bounded function with $m \geq p$ bounded derivatives with respect to $x$. Then, notice that Assumptions \ref{ass:NPID:smoothness} and \ref{ass:NPID:dominance} ensure that $\int \eta (x - \tau u) f_{\xi|X^*} (u|x - \tau u) du$ has $m$ bounded derivatives with respect to $x$. Hence, $f_{X|Z}(x|z) = \int f_{X^*|Z}(r|z) f_{\varepsilon|X^*} (x - r|r) dr$, $f_X(x) = \int f_{X^*}(r) f_{\varepsilon|X^*} (x - r|r) dr$, $q(x,z) = \int \rho(r) f_{X^*|Z}(r|z) f_{\varepsilon|X^*} (x - r|r) dr/f_{X|Z}(x|z)$, and $q(x) = \int \rho(r) f_{X^*}(r) f_{\varepsilon|X^*} (x - r|r) dr/f_X(x)$ have 4 bounded derivatives with respect to $x$ (for $q(x,z)$ and $q(x)$, at least, in a neighborhood of $x$ where $f_{X^*|Z}(x|z)$ and $f_{X^*}(z)$ are bounded away from zero).

	If the tuning parameters are chosen optimally, under the usual regularity
	conditions, the rates of convergence of these estimators are $\widehat{q}%
	\left( x,z\right) -q\left( x,z\right) =O_{p}\left( n^{-\frac{m}{2m+1}%
	}\right) $, $\widehat{q}\left( x\right) -q\left( x\right) =O_{p}\left( n^{-%
	\frac{m}{2m+1}}\right) $, $\widehat{q}'%
	\left( x,z\right) -q'\left( x,z\right) =O_{p}\left( n^{-\frac{m-1}{2m+1}%
	}\right) $, $\widehat{q}^{\prime }\left( x\right) -q^{\prime
	}\left( x\right) =O_{p}\left( n^{-\frac{m-1}{2m+1}}\right) $, $\widehat{q}%
	^{\prime \prime }\left( x\right) -q^{\prime \prime }\left( x\right)
	=O_{p}\left( n^{-\frac{m-2}{2m+1}}\right) $, $\widehat{s}_{X|Z}\left(
	x|z\right) -s_{X|Z}\left( x|z\right) =O_{p}\left( n^{-\frac{m-1}{2m+1}%
	}\right) $, and $\widehat{s}_{X|Z}'\left(
	x|z\right) -s_{X|Z}'\left( x|z\right) =O_{p}\left( n^{-\frac{m-2}{2m+1}%
	}\right) $ for $x\in S_{X^{\ast }}\left( z\right) $ where $\widehat{s}%
	_{X|Z}\left( x|z\right) \equiv \left. \widehat{f}_{X|Z}^{\prime }\left(
	x|z\right) \right/ \widehat{f}_{X|Z}\left( x|z\right) $ (see, e.g., \citealp{Stone1980AoS}).

	First, note that
	\begin{align}
		\widehat {v}\left( x\right) &= \frac{\widehat q\left( x,z_{1}\right) - \widehat q\left(
		x,z_{2}\right) }{\widehat q^{\prime }\left( x\right) \left[\widehat s_{X|Z}\left(
		x|z_{1}\right) - \widehat s_{X|Z}\left( x|z_{2}\right) \right] } \notag \\
		& = \underbrace{\frac{ q\left( x,z_{1}\right) -  q\left(
		x,z_{2}\right) }{ q^{\prime }\left( x\right) \left[ s_{X|Z}\left(
		x|z_{1}\right) -  s_{X|Z}\left( x|z_{2}\right) \right] }}_{= \widetilde{v}(x)} +O_{p}\left( n^{-\frac{m}{2m+1}}+\tau _{n}^{2}n^{-\frac{%
		m-1}{2m+1}}\right) \label{eq:pf:v hat rate}
	\end{align}
	Here the second equality follows from $\widehat{a}/\widehat{b}-a/b=\left( \widehat{a}-a\right) /\widehat{b}%
	+a(1/\widehat{b}-1/b)$, $q\left( x,z_{1}\right)
	-q\left( x,z_{2}\right) =O\left( \tau _{n}^{2}\right) $ (implied by \eqref{eq:pf:delta q(x,z)}), and the rates
	of convergence listed above.%

	Next,
	\begin{align*}
		\widehat v' (x) = \frac{\widehat q^{\prime }\left( x,z_{1}\right)
	- \widehat q^{\prime }\left( x,z_{2}\right) }{\widehat q^{\prime }\left( x\right) \left[
	\widehat s_{X|Z}\left( x|z_{1}\right) -\widehat s_{X|Z}\left( x|z_{2}\right) \right] }-{\widehat v%
	}\left( x\right) \frac{\nabla _{x}\left(\widehat q^{\prime }\left( x\right) \left[
	\widehat s_{X|Z}\left( x|z_{1}\right) -\widehat s_{X|Z}\left( x|z_{2}\right) \right] \right) }{%
	\widehat q^{\prime }\left( x\right) \left[\widehat s_{X|Z}\left( x|z_{1}\right)
	-\widehat s_{X|Z}\left( x|z_{2}\right) \right] }.
	\end{align*}
	Analogously to the derivation provided above, we inspect that
	\begin{align*}
		\frac{\widehat q^{\prime }\left( x,z_{1}\right)
	- \widehat q^{\prime }\left( x,z_{2}\right) }{\widehat q^{\prime }\left( x\right) \left[
	\widehat s_{X|Z}\left( x|z_{1}\right) -\widehat s_{X|Z}\left( x|z_{2}\right) \right] } = 
	\frac{q^{\prime }\left( x,z_{1}\right)
	-q^{\prime }\left( x,z_{2}\right) }{q^{\prime }\left( x\right) \left[
	s_{X|Z}\left( x|z_{1}\right) -s_{X|Z}\left( x|z_{2}\right) \right] } + O_{p}\left(n^{-\frac{m-1}{2m+1}}\right),
	\end{align*}
	and
	\begin{align*}
		\frac{\nabla _{x}\left(\widehat q^{\prime }\left( x\right) \left[
	\widehat s_{X|Z}\left( x|z_{1}\right) -\widehat s_{X|Z}\left( x|z_{2}\right) \right] \right) }{%
	\widehat q^{\prime }\left( x\right) \left[\widehat s_{X|Z}\left( x|z_{1}\right)
	-\widehat s_{X|Z}\left( x|z_{2}\right) \right] } = \frac{\nabla _{x}\left( q^{\prime }\left( x\right) \left[
	 s_{X|Z}\left( x|z_{1}\right) - s_{X|Z}\left( x|z_{2}\right) \right] \right) }{%
	 q^{\prime }\left( x\right) \left[ s_{X|Z}\left( x|z_{1}\right)
	- s_{X|Z}\left( x|z_{2}\right) \right] } +  O_{p}\left(n^{-\frac{m-2}{2m+1}}\right).
	\end{align*}
	Combining the latter with \eqref{eq:pf:v hat rate}, we obtain
	\begin{align*}
		{\widehat v%
	}\left( x\right) \frac{\nabla _{x}\left(\widehat q^{\prime }\left( x\right) \left[
	\widehat s_{X|Z}\left( x|z_{1}\right) -\widehat s_{X|Z}\left( x|z_{2}\right) \right] \right) }{%
	\widehat q^{\prime }\left( x\right) \left[\widehat s_{X|Z}\left( x|z_{1}\right)
	-\widehat s_{X|Z}\left( x|z_{2}\right) \right] } = &{ \widetilde v%
	}\left( x\right) \frac{\nabla _{x}\left( q^{\prime }\left( x\right) \left[
	 s_{X|Z}\left( x|z_{1}\right) - s_{X|Z}\left( x|z_{2}\right) \right] \right) }{%
	 q^{\prime }\left( x\right) \left[ s_{X|Z}\left( x|z_{1}\right)
	- s_{X|Z}\left( x|z_{2}\right) \right] } \\
	&+O_{p}\left( n^{-\frac{m}{2m+1}}+\tau _{n}^{2}n^{-\frac{m-2}{2m+1}}\right),
	\end{align*}
	where we also used $\widetilde v(x) = O(\tau_n^2)$. Hence, we conclude
	\begin{align}
		\widehat v'(x) = \widetilde v'(x) + O_{p}\left( n^{-\frac{m-1}{2m+1}}+\tau _{n}^{2}n^{-\frac{m-2}{2m+1}}\right).\label{eq:pf:v hat prime rate}
	\end{align}
	Finally,
	\begin{align*}
		\widehat \rho (x) 
		&= \widehat q(x,z_1) - \widehat {v}\left(
		x\right) \left[ \widehat q^{\prime }\left( x\right) \widehat s_{X|Z}\left( x|z_1\right) +\tfrac{1%
		}{2}\widehat q^{\prime \prime }\left( x\right) \right] - \widehat {v}^{\prime }\left(
		x\right) \widehat q^{\prime }\left( x\right)\\
		&= \widetilde \rho (x,z_1) + O_{p}\left( n^{-\frac{m-1}{2m+1}}+\tau _{n}^{2}n^{-\frac{m-2}{2m+1}}\right) \\
		&= \rho(x) +  O_{p}\left( n^{-\frac{m-1}{2m+1}}+\tau _{n}^{2}n^{-\frac{m-2}{2m+1}} + \tau_n^p\right),\\
		&= \rho(x) +  O_{p}\left( n^{-\frac{m-1}{2m+1}} + \tau_n^p\right).
	\end{align*}
	Here, the second equality uses the definition of $\widetilde \rho (x,z_1)$, and equations \eqref{eq:pf:v hat rate} and \eqref{eq:pf:v hat prime rate}. The third equality follow from Theorem \ref{thm:NPID-non-cl:IV}. For the last equality, note that $\tau _{n}^{2}n^{-\frac{m-2}{2m+1}} \lesssim n^{-\frac{m-1}{2m+1}}$ for $\tau_n \lesssim
 	n^{-\frac{1}{2(2m+1)}}$ and $\tau _{n}^{2}n^{-\frac{m-2}{2m+1}} \lesssim \tau_n^p$ for $\tau_n \gtrsim n^{-\frac{m-2}{(p-2)(2m+1)}}$, and $n^{-\frac{1}{2(2m+1)}} > n^{-\frac{m-2}{(p-2)(2m+1)}}$ since $2m - p - 2 > 0$.%

 	For the naive estimator, we have
	\begin{equation*}
	\widehat{\rho }^{\text{Naive}}\left( x\right) =\widehat{q}\left( x\right)
	=\rho \left( x\right) +O_{p}\left( n^{-\frac{m}{2m+1}} + \tau_{n}^{2}\right),
	\end{equation*}%
	where the second equality uses $\widehat q(x) = q(x) + O \left(n^{-\frac{m}{2m+1}}\right)$ and $q (x) = \rho (x) + O(\tau_n^2)$ (e.g., \eqref{eq:NPID:q(x,z) K2 exp}).%
\end{proof}	

\subsection{Proof of Theorem \ref{thm:NCME rho tilde}}

Before proving Theorem \ref{thm:NCME rho tilde}, in Section \ref{ssec:A:CDF and QF ID}, we first demonstrate that the CDF $F_{X^*}(\cdot)$ and the quantile function $Q_{X^*} (\cdot)$ of $X_{i}^*$ are identified up to an error of order $O(\tau^p)$. The proof of Theorem \ref{thm:NCME rho tilde} is then provided in Section \ref{ssec:A:NCME ID proof}.

\subsubsection{Identification of $F_{X^*} (\cdot)$ and $Q_{X^*} (\cdot)$}
\label{ssec:A:CDF and QF ID}

\begin{lemma}
\label{lem:CDF and QF} 
Suppose that Assumptions \ref{ass:NCME}-\ref{ass:NPID:dominance} and \ref%
{ass:NPID:integrable derivatives} are satisfied. Suppose either (i) $p=3$, or (ii) $E\left[ \xi _{i}^{3}|X_{i}^{\ast }%
\right] =0$ and $p=4$. Then, for any $x \in \supp{X_i^*}$ and for any $s \in
(0,1)$ such that $f_{X^*}(Q_{X^*}(s)) > 0$, as $\tau \rightarrow 0$,
\begin{eqnarray*}
F_{X^{\ast }}\left( x\right) &=&F_{X}\left( x\right) -\frac{1}{2}\nabla
_{x}\left( f_{X} (x) v\left( x\right) \right) +O\left( \tau ^{p}\right), \\
Q_{X^{\ast }}\left( s\right) &=&Q_{X}\left( s\right) +\frac{1}{2}\left\{
s_{X}\left( Q_{X}\left( s\right) \right) v\left( Q_{X}\left( s\right)
\right) +\nabla _{x} v\left( Q_{X}\left( s\right) \right) \right\} +O\left(
\tau ^{p}\right).
\end{eqnarray*}
\end{lemma}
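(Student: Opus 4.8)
The plan is to obtain the quantile expansion by first establishing the CDF expansion and then inverting it. For the CDF, I would start from the pointwise density expansion already derived inside the proof of Lemma~\ref{lem:q(x,z) expansion}: applying that expansion with $\eta = f_{X^*}$ gives
\begin{equation*}
f_{X}(x) = f_{X^*}(x) + \tfrac12\left\{ f_{X^*}''(x) v(x) + 2 f_{X^*}'(x)\nabla_x v(x) + f_{X^*}(x)\nabla_x^2 v(x)\right\} + O(\tau^p),
\end{equation*}
and I would recognize the bracketed term as $\tfrac12\nabla_x^2\{f_{X^*}(x)v(x)\}$, so that $f_X(x) = f_{X^*}(x) + \tfrac12\nabla_x^2\{f_{X^*}(x)v(x)\} + O(\tau^p)$. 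The key point here is that, when $\eta = f_{X^*}$, the remainder is a sum of terms of the form $\nabla_x^j f_{X^*}(x)\,\nabla_x^{p-j}f_{\xi|X^*}(u|x)$, so that Assumption~\ref{ass:NPID:integrable derivatives} (integrability of the derivatives of $f_{X^*}$) together with Assumption~\ref{ass:NPID:dominance} (the $|u|^p$-weighted dominance on the derivatives of $f_{\xi|X^*}$) makes the remainder \emph{integrable in $x$} with $\int|\cdot|\,dx = O(\tau^p)$, not merely pointwise $O(\tau^p)$.

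Next I would integrate this over $(-\infty,x]$. The integral of $\tfrac12\nabla_x^2\{f_{X^*}v\}$ telescopes to $\tfrac12\nabla_x\{f_{X^*}(x)v(x)\}$, the boundary term at $-\infty$ vanishing because $f_{X^*}$ and $f_{X^*}'$ are integrable with integrable derivatives (Assumption~\ref{ass:NPID:integrable derivatives}), hence both tend to zero. This yields $F_X(x) = F_{X^*}(x) + \tfrac12\nabla_x\{f_{X^*}(x)v(x)\} + O(\tau^p)$. To match the stated form I would then swap $f_{X^*}$ for $f_X$ inside the derivative: since $f_X = f_{X^*} + O_x(\tau^2)$ by Lemma~\ref{lem:smooth q prime and s prime} and $v = O(\tau^2)$, the difference $\nabla_x\{(f_{X^*}-f_X)v\}$ is $O(\tau^4)\subseteq O(\tau^p)$. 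Rearranging gives the first claim, $F_{X^*}(x) = F_X(x) - \tfrac12\nabla_x\{f_X(x)v(x)\} + O(\tau^p)$.

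For the quantile, I would invert. Writing $Q := Q_{X^*}(s)$ and using $F_{X^*}(Q)=s=F_X(Q_X(s))$, the CDF relation gives $F_X(Q_X(s)) = F_X(Q) - \tfrac12\nabla_x\{f_X(Q)v(Q)\} + O(\tau^p)$. Setting $\delta := Q - Q_X(s)$ and Taylor-expanding $F_X$ about $Q_X(s)$ produces $0 = f_X(Q_X(s))\delta + O(\delta^2) - \tfrac12\nabla_x\{f_X v\}\big|_{Q_X(s)} + O(\tau^p)$, where I also use that evaluating the correction term at $Q$ rather than $Q_X(s)$ costs only $O(\tau^2\delta)$ since $\nabla_x^2\{f_X v\}=O(\tau^2)$. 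Because the leading right-hand side is $O(\tau^2)$ and $f_X(Q_X(s))>0$ for small $\tau$ (as $f_{X^*}(Q_{X^*}(s))>0$), one reads off $\delta = O(\tau^2)$, which retroactively makes $O(\delta^2)$ and $O(\tau^2\delta)$ both $O(\tau^4)\subseteq O(\tau^p)$. Solving, $\delta = \tfrac{1}{2 f_X(Q_X(s))}\nabla_x\{f_X v\}\big|_{Q_X(s)} + O(\tau^p) = \tfrac12\{s_X(Q_X(s))v(Q_X(s)) + \nabla_x v(Q_X(s))\} + O(\tau^p)$, using $\nabla_x\{f_X v\}/f_X = s_X v + v'$.

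The main obstacle is the rigor of the CDF step rather than the algebra: one must argue that the density remainder is \emph{uniformly and integrably} $O(\tau^p)$ in $x$ so that it survives integration against $dx$, and that the boundary terms genuinely vanish — this is exactly what Assumptions~\ref{ass:NPID:dominance} and~\ref{ass:NPID:integrable derivatives} are for. A secondary point requiring care is the bookkeeping in the inversion: one must first establish $\delta = O(\tau^2)$ before one can legitimately discard the quadratic Taylor remainder and the evaluation-point shift as $O(\tau^p)$, which is valid for both $p=3$ and $p=4$.
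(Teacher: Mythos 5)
Your proposal is correct and follows essentially the same route as the paper's own proof: the same density expansion with $\eta = f_{X^*}$, the same Fubini--Tonelli-type argument ensuring the Taylor remainder is integrably (not merely pointwise) $O(\tau^p)$ in $x$ under Assumptions~\ref{ass:NPID:dominance} and~\ref{ass:NPID:integrable derivatives}, the same vanishing boundary term at $-\infty$, and the same $O(\tau^4)$ argument for swapping $f_{X^*}$ for $f_X$ inside the correction. The only (immaterial) difference is the inversion step, where you Taylor-expand $F_X$ about $Q_X(s)$ and bootstrap $\delta = O(\tau^2)$, while the paper expands the inverse map $Q_{X^*}$ about $s$; both yield the identical quantile formula.
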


\begin{remark}
Assumptions \ref{ass:NCME} and \ref{ass:NPID:smoothness} are stronger than
needed for the result of Lemma \ref{lem:CDF and QF} to hold. For Assumption %
\ref{ass:NCME}, the requirement $\xi \perp Z_i|X_i^*$ can be dropped.
Assumption \ref{ass:NPID:smoothness} can be replaced by requiring the
densities $f_{X^*} (x)$ and $f_{\xi|X^*} (\xi|x)$ to be bounded
functions that have at least $p \geqslant 3$ bounded derivatives (all with
respect to $x$).
\end{remark}

Lemma \ref{lem:CDF and QF} extends the results obtained earlier for CME (e.g., Chesher,
1991) to the WCME\ model. It also provides a way of recovering $F_{X^*}(x)$ and $%
Q_{X^*}(s)$ up to an error of order $O(\tau^p)$ using 
\begin{eqnarray*}
\tilde F_{X^{\ast }}\left( x\right) &\equiv&F_{X}\left( x\right) - \frac{1}{2%
}\nabla _{x}\left( f_{X} (x) \tilde v\left( x\right) \right), \\
\tilde Q_{X^{\ast }}\left( s\right) &\equiv&Q_{X}\left( s\right) +\frac{1}{2}%
\left\{ s_{X}\left( Q_{X}\left( s\right) \right) \tilde v\left( Q_{X}\left(
s\right) \right) +\nabla _{x} \tilde v\left( Q_{X}\left( s\right) \right)
\right\},
\end{eqnarray*}
where $\tilde v (\cdot)$ is given by \eqref{eq: v tilde}.

\begin{proof}[Proof of Lemma \ref{lem:CDF and QF}]
	In the proof of Lemma \ref{lem:q(x,z) expansion}, we demonstrated that, for a bounded function $\eta(\cdot)$ with $p$ bounded derivatives, we have%
	\begin{align*}
		\int \eta \left( r\right) f_{\varepsilon |X^{\ast }}\left( x-r|r\right)
		dr &= \int \eta \left( x-\tau u\right) f_{\xi |X^{\ast }}\left( u|x-\tau
		u\right) du \\
		&= \eta \left( x\right) + \frac{1}{2} \nabla_x^2 \left\{\eta \left( x\right) v\left( x\right) \right\} + R_\eta (x; \tau).
	\end{align*}
	Here the remainder $R_\eta (x; \tau)$ can be represented in the integral form as
	\begin{align*}
		R_{\eta} (x;\tau) = \int \left(\int_{0}^\tau \frac{(-1)^p}{(p-1)!} u^p \nabla_x^p \{\eta(x - tu) f_{\xi|X^*}(u|x - tu)\} (\tau - t)^{p-1} dt\right) du.%
	\end{align*}
	Then, for $f_{X}\left( x\right) =\int f_{X^{\ast }}\left( r\right) f_{\varepsilon |X^{\ast }}\left( x-r|r\right) dr$ with $\eta (\cdot) = f_{X^*} (\cdot) $, we have
	\begin{equation*}
		f_{X}\left( x\right) =f_{X^{\ast }}\left( x\right) +\frac{1}{2}\nabla
		_{x}^{2}\left( f_{X^{\ast }}^{{}}\left( x\right) v\left( x\right)
		\right) + R_{f_{X^*}}(x; \tau).
	\end{equation*}%
	Then,
	\begin{align*}
		F_{X}(\bar x) &= \int_{- \infty}^{\bar x} f_X (x) dx \\
		&= \int_{- \infty}^{\bar x} \left(f_{X^{\ast }}\left( x\right) +\frac{1}{2}\nabla
						_{x}^{2}\left( f_{X^{\ast }}^{{}}\left( x\right) v\left( x\right)
						\right) + R_{f_{X^*}}(x; \tau)\right) dx\\
		&= F_{X^*}(x) + \frac{1}{2} \nabla_x \left( f_{X^{\ast }}^{{}}\left( \bar x\right) v\left( \bar x\right)
		\right) + \int_{- \infty}^{\bar x} R_{f_{X^*}}(x; \tau) dx.
	\end{align*}
	Here we used
	\begin{align*}
		\int_{- \infty}^{\bar x} \nabla_{x}^{2}\left( f_{X^{\ast }}^{{}}\left( x\right) v\left( x\right) \right) = \nabla_{x} \left( f_{X^{\ast }} \left( \bar x\right) v\left( \bar x\right) \right)
	\end{align*}
	since $\lim_{x \rightarrow -\infty} \nabla_x \left(f_{X^{\ast }} \left( x\right) v\left( x\right)\right) = 0$ because $v(x) = \tau^2 E[\xi^2|X_i^*=x]$, and $E[\xi^2|X_i^*=x]$ and $\nabla_x E[\xi^2|X_i^* = x]$ are bounded under Assumption \ref{ass:NPID:dominance}. The remainder takes the form
	\begin{align*}
		\int_{- \infty}^{\bar x} R_{f_{X^*}}(x; \tau) dx = \int_{- \infty}^{\bar x} \left\{\int \left(\int_{0}^\tau \varphi (x, u, t) dt\right) du\right\} dx,
	\end{align*}
	where
	\begin{align*}
		\varphi (x,u,t) \equiv \frac{(-1)^p}{(p-1)!} u^p \nabla_x^p \{f_{X^*}(x - tu) f_{\xi|X^*}(u|x - tu)\} (\tau - t)^{p-1}.
	\end{align*}
	Next, note that, using Assumption \ref{ass:NPID:integrable derivatives}, for all $u$ and $t \in [0, \tau]$
	\begin{align*}
		\int \abs{\varphi (x,u,t)} dx \leq C \abs{u}^p (\tau - t)^{p-1} \sup_{\tilde x \in \suppX}  \sum_{\ell = 0}^p \abs{\nabla_x^p f_{\xi|X^*} (u | \tilde x) },
	\end{align*}
	where $C > 0$ is a generic constant. Next, using Assumption \ref{ass:NPID:dominance},
	\begin{align*}
		\int \left( \int_{0}^\tau \abs{u}^p (\tau - t)^{p-1} \sup_{\tilde x \in \suppX}  \sum_{\ell = 0}^p \abs{\nabla_x^p f_{\xi|X^*} (u | \tilde x) } dt\right) du \leq C \tau^p.
	\end{align*}
	Hence, using Fubini–Tonelli's theorem, we conclude that
	\begin{align*}
		\underset{(-\infty,\bar x] \times \mathbb R \times [0,\tau]}{\iiint}  \abs{\phi (x,u,t)} dx du dt \leq C \tau^p,
	\end{align*}
	and the order of integration can be interchanged. Specifically, it also implies that
	\begin{align*}
		\int_{- \infty}^{\bar x} R_{f_{X^*}}(x; \tau) dx \leq C \tau^p.
	\end{align*}
	Hence, we conclude
	\begin{align*}
		F_{X}(\bar x) = F_{X^*}(\bar x) + \frac{1}{2} \nabla_x \left( f_{X^{\ast }}^{{}}\left( \bar x\right) v\left( \bar x\right)
		\right) + O(\tau^p), 
	\end{align*}
	so
	\begin{align}
		F_{X^*}(\bar x) &= F_{X}(\bar x) - \frac{1}{2} \nabla_x \left( f_{X^{\ast }}^{{}}\left( \bar x\right) v\left( \bar x\right)
		\right) + O(\tau^p) \notag \\
		&= F_{X}(\bar x) - \frac{1}{2} \left( f_{X^{\ast }}^{\prime}\left( \bar x\right) v\left( \bar x\right) + f_{X^{\ast }}\left( \bar x\right) v'\left( \bar x\right) \right) + O(\tau^p) \notag \\
		&= F_{X}(\bar x) - \frac{1}{2} \left( f_{X^{}}^{\prime}\left( \bar x\right) v\left( \bar x\right) + f_{X^{}}\left( \bar x\right) v'\left( \bar x\right) \right) + O(\tau^p) \notag \\
		&= F_{X}(\bar x) - \frac{1}{2} \nabla_x \left( f_{X}^{{}}\left( \bar x\right) v\left( \bar x\right)
		\right) + O(\tau^p), \label{eq:pf:CDF expansion}
	\end{align}
	where we used $f_X (\bar x) = f_{X^*} (\bar x) + O(\tau^2)$ and $f_X' (\bar x) = f_{X^*}' (\bar x) + O(\tau^2)$ (established the the proof of Lemma \ref{lem:smooth q prime and s prime}). This completes the proof of the first part. %

	\medskip

	Next, note that
	\begin{align*}
		Q_{X^*} (s) - Q_{X}(s) &= Q_{X^*} (s) - Q_{X^*} \left(F_{X^*} \left(Q_{X}(s)\right) \right)\\
			&= Q_{X^*}' (s) \left(F_X\left(Q_X(s)\right) - F_{X^*} \left(Q_{X}(s)\right) \right) + \frac{1}{2} Q_{X^*}'' (\tilde s) \left(F_X\left(Q_X(s)\right) - F_{X^*} \left(Q_{X}(s)\right) \right)^2,
	\end{align*}
	where $Q_{X^*}'(s) = \frac{1}{f_{X^*}' (Q_{X^*}(s))}$ and $Q_{X^*}''(s) = -\frac{f_{X^*}'' (Q_{X^*}(s))}{\left(f_{X^*}' (Q_{X^*}(s))\right)^3}$, and $\tilde s$ lies between $s = F_X(Q_X(s))$ and $F_{X^*} \left(Q_{X}(s)\right)$. Recall that \eqref{eq:pf:CDF expansion} implies
	\begin{align*}
		F_X(Q_x(s)) - F_{X^*}(Q_X(s)) &= \frac{1}{2} \nabla_x \left\{f_X(x) v(x) \right\}\big \vert_{x = Q_x(s)} + O(\tau^p) \\
		&= O (\tau^2).
	\end{align*}
	Hence, we conclude $Q_{X^*}(s) - Q_{X}(s) = O(\tau^2)$. This implies
	\begin{align*}
		\frac{1}{f_{X^*}' (Q_{X^*}(s))} &= \frac{1}{f_{X^*}' (Q_{X}(s))} + O(\tau^2) \\
		&= \frac{1}{f_{X}' (Q_{X}(s))} + O(\tau^2).
	\end{align*}
	Thus, we obtain
	\begin{align*}
		Q_{X^*} (s) - Q_{X}(s) &= \frac{1}{f_{X}'\left(Q_X(s)\right)} \; \frac{1}{2} \nabla_x \left\{f_X(x) v(x) \right\}\big \vert_{x = Q_x(s)}  + O(\tau^p) \\
		&=\frac{1}{2}\left\{s_{X}\left( Q_{X}\left( s\right) \right)  v\left( Q_{X}\left(s\right) \right) +\nabla _{x} v\left( Q_{X}\left( s\right) \right) \right\} + O(\tau^p).
	\end{align*}

\end{proof}

\subsubsection{Proof of Theorem \ref{thm:NCME rho tilde}}
\label{ssec:A:NCME ID proof}

\begin{proof}[Proof of Theorem \ref{thm:NCME rho tilde}]
	First, in the proof of Theorem \ref{thm:NPID-non-cl:IV}, we show that
	\begin{align*}
		\tilde v\left(Q_{X^* } \left(F_{\mathcal X^*} (\varkappa)\right)\right) &= v\left(Q_{X^* } \left(F_{\mathcal X^*} (\varkappa)\right)\right) + O (\tau^p),\\
		\tilde v'\left(Q_{X^* } \left(F_{\mathcal X^*} (\varkappa)\right)\right) &= v'\left(Q_{X^* } \left(F_{\mathcal X^*} (\varkappa)\right)\right) + O (\tau^p).
	\end{align*}
	Combining the above with the result of Lemma \ref{lem:CDF and QF}, we establish
	\begin{align}
		\label{eq:pf:delta Q star}		
		\tilde Q_{X^*} (F_{\mathcal X^*} (\varkappa)) - Q_{X^*} \left(F_{\mathcal X^*} (\varkappa)\right) = O(\tau^p).
	\end{align}
	Since $\rho'(\cdot)$ is bounded, we also have
	\begin{align}
		\label{eq:pf:delta rho Q star}
		\abs{\rho (\tilde Q_{X^*} (F_{\mathcal X^*} (\varkappa))) - \rho \left(Q_{X^*} \left(F_{\mathcal X^*} (\varkappa)\right)\right) } = O(\tau^p).
	\end{align}

	Second, there exists $\delta > 0$ such that $f_{X^*|Z}(x|z_1)$ and $f_{X^*}(x|z_2)$ are bounded away from zero for $x \in B_{\delta} \left(Q_{X^*} \left(F_{\mathcal X^*} (\varkappa)\right) \right)$. Since all the remainders such as $O(\cdot)$ and $O_x(\cdot)$ are explicitly bounded in the proof of Theorem \ref{thm:NPID-non-cl:IV} and those bounds are uniform in $x \in \mathcal B_\delta (Q_{X^*}(F_{\mathcal X^*}(\varkappa)))$, we have
	\begin{align}
		\label{eq:pf:uniform rate for rho}
		\sup_{x \in B_{\delta} \left(Q_{X^*} \left(F_{\mathcal X^*} (\varkappa)\right) \right) } \abs{\tilde \rho_{X^*} (x,z_1) - \rho_{X^*} (x)} = O(\tau^p).
	\end{align}

	Finally,
	\begin{align*}
		\abs{\tilde \rho_{\mathcal X^*} (\varkappa,z_1) - \rho_{\mathcal X^*} (\varkappa)} = &\abs{\tilde \rho_{X^*} (\tilde Q_{X^*} \left(F_{\mathcal X^*} (\varkappa)\right),z_1) -  \rho_{X^*} \left(Q_{X^*} \left(F_{\mathcal X^*} (\varkappa)\right)\right)}\\ %
		\leq &\sup_{x \in B_{\delta} \left(Q_{X^*} \left(F_{\mathcal X^*} (\varkappa)\right) \right) } \abs{\tilde \rho_{X^*} (x,z_1) - \rho_{X^*} (x)} \\ &+ \abs{\rho (\tilde Q_{X^*} (F_{\mathcal X^*} (\varkappa))) - \rho \left(Q_{X^*} \left(F_{\mathcal X^*} (\varkappa)\right)\right) },
	\end{align*}
	where the inequality holds for sufficiently small values of $\tau$ due to \eqref{eq:pf:delta Q star}. %
	Combining \eqref{eq:pf:delta rho Q star} and \eqref{eq:pf:uniform rate for rho} together completes the proof. %
\end{proof}

\section{Verification of Assumption \ref{ass:NPID:dominance}}
\label{sec: verification of dominance}

\begin{lemma}
	\label{lem: dominance verification}
	Suppose $\xi = \sigma(X^*) \zeta$, where $\zeta$ is independent of $X^*$. Also, suppose that the following conditions are satisfied:
	\begin{enumerate}[(i)]
		\item \label{item:dom_ver:smoothness} $\sigma(\cdot)$ and $f_\zeta(\cdot)$ are bounded (non-negative) functions with $m$ bounded derivatives, and $\sigma(\cdot)$ is bounded away from 0;

		\item \label{item:dom_ver:monotonicity} for sufficiently large values of $t$, $t^{m+k+1} \vert f_\zeta^{(k)}(t) \vert$ and $t^{m+k+1} \vert f_\zeta^{(k)}(-t) \vert$, for $k \in \{0, \ldots, m\}$, are decreasing functions;

		\item \label{item:dom_ver:moments}$ \int \vert \zeta \vert^{m+k} \left \vert f_\zeta^{(k)} \left(\zeta\right) \right \vert d\zeta < C$ for $ k \in \{0,\ldots, m\}$.
	\end{enumerate}
	Then, Assumption \ref{ass:NPID:dominance} is satisfied.
\end{lemma}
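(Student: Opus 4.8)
The plan is to begin from the structural assumption $\xi = \sigma(X^*)\zeta$ with $\zeta \perp X^*$, which pins down the conditional density in closed form. By the scaling change of variables, $f_{\xi|X^*}(u|x) = \sigma(x)^{-1} f_\zeta\!\left(u/\sigma(x)\right)$, so the entire lemma reduces to controlling the $x$-derivatives of this single expression and integrating them against $|u|^m$.

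The second step is to establish the key structural identity: for each $\ell \in \{0,\ldots,m\}$,
\[
\nabla_x^\ell f_{\xi|X^*}(u|x) = \sum_{k=0}^\ell A_{\ell,k}(x)\, u^k\, f_\zeta^{(k)}\!\left(u/\sigma(x)\right),
\]
where each coefficient $A_{\ell,k}(x)$ is a universal expression in $\sigma(x),\sigma'(x),\ldots,\sigma^{(\ell)}(x)$ involving only nonnegative powers of $\sigma(x)^{-1}$. This follows by applying the product rule to $\sigma^{-1}\cdot f_\zeta(\phi(x))$ with $\phi(x)=u/\sigma(x)$ and Fa\`a di Bruno to $f_\zeta(\phi(x))$: each differentiation of $f_\zeta$ through the chain rule produces a factor $\phi'(x) = -u\sigma^{-2}\sigma'$ carrying exactly one power of $u$, so after grouping, the term containing $f_\zeta^{(k)}$ carries exactly $u^k$. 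Under condition (\ref{item:dom_ver:smoothness}), $\sigma$ and its derivatives are bounded and $\sigma$ is bounded away from $0$, hence $\bar A_{\ell,k} := \sup_{x\in\suppX}|A_{\ell,k}(x)| < \infty$. Since $\sup_{\tilde x}|\nabla_x^\ell f_{\xi|X^*}(u|\tilde x)| \le \sum_{k\le\ell}\bar A_{\ell,k}\,|u|^k\,\sup_{\tilde x}|f_\zeta^{(k)}(u/\sigma(\tilde x))|$, it suffices to prove, for every $k\in\{0,\ldots,m\}$, that
\[
\int |u|^{m+k}\,\sup_{\tilde x\in\suppX}\bigl|f_\zeta^{(k)}(u/\sigma(\tilde x))\bigr|\,du < \infty.
\]

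The third step handles this integral by splitting it into the bounded region $|u|\le U$ and the two tails. On $|u|\le U$ the integrand is bounded, since each $f_\zeta^{(k)}$ is bounded by (\ref{item:dom_ver:smoothness}), over a set of finite measure, so that contribution is finite. For the tail $u>U$, writing $\underline\sigma = \inf_{\suppX}\sigma > 0$ and $\bar\sigma = \sup_{\suppX}\sigma < \infty$, the argument $t = u/\sigma(\tilde x)$ ranges over $[u/\bar\sigma,\,u/\underline\sigma]$. Choosing $U$ large enough that $u/\bar\sigma$ exceeds the threshold in condition (\ref{item:dom_ver:monotonicity}), the map $t\mapsto t^{m+k+1}|f_\zeta^{(k)}(t)|$ is decreasing on the whole range; since every $t$ there satisfies $t\ge u/\bar\sigma$, I obtain $|f_\zeta^{(k)}(t)| = t^{-(m+k+1)}\bigl[t^{m+k+1}|f_\zeta^{(k)}(t)|\bigr] \le |f_\zeta^{(k)}(u/\bar\sigma)|$, and therefore $\sup_{\tilde x}|f_\zeta^{(k)}(u/\sigma(\tilde x))| \le |f_\zeta^{(k)}(u/\bar\sigma)|$. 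Substituting $v=u/\bar\sigma$ bounds the tail by $\bar\sigma^{\,m+k+1}\int v^{m+k}|f_\zeta^{(k)}(v)|\,dv$, which is finite by condition (\ref{item:dom_ver:moments}). The tail $u<-U$ is treated symmetrically using the $t^{m+k+1}|f_\zeta^{(k)}(-t)|$ half of (\ref{item:dom_ver:monotonicity}) together with the absolute value in (\ref{item:dom_ver:moments}).

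I expect the main obstacle to be the interplay of the second and third steps: arranging the derivative expansion so that $f_\zeta^{(k)}$ is paired with exactly $u^k$ (so the required moment matches the $|\zeta|^{m+k}|f_\zeta^{(k)}|$ bound of (\ref{item:dom_ver:moments})), and then using the tail monotonicity of (\ref{item:dom_ver:monotonicity}) to replace the supremum over $\tilde x$ by a single endpoint value that is integrable. The delicate bookkeeping is verifying that each differentiation order of $f_\zeta$ brings precisely one power of $u$ and that the accompanying $\sigma$-coefficients remain uniformly bounded on $\suppX$; the remaining estimates are routine.
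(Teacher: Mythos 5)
Your proposal is correct and follows essentially the same route as the paper's proof: exploit the closed form $f_{\xi|X^*}(u|x)=\sigma(x)^{-1}f_\zeta(u/\sigma(x))$, expand the $x$-derivatives so that $f_\zeta^{(k)}$ is paired with exactly $u^k$ and bounded $\sigma$-coefficients, split the integral into a bounded region plus tails, and use condition (ii) to dominate the supremum over $\tilde x$ by the endpoint value $|f_\zeta^{(k)}(u/\bar\sigma)|$, which condition (iii) makes integrable against $|u|^{m+k}$. If anything, your endpoint bound via the triangle inequality on each term is slightly more careful than the paper's assertion that the supremum of the (signed) derivative sum is attained at $s^*=\sup_{x}\sigma(x)$.
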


\begin{proof}[Proof of Lemma \ref{lem: dominance verification}]
	First, notice that $f_{\xi|X^*} (u|x) = \frac{1}{\sigma(x)} f_{\zeta} \left(\frac{u}{\sigma(x)}\right)$.

	We want to show that
	\begin{align*}
		\int \vert u \vert^m \sup_{\tilde x \in \suppX} \left \vert \nabla_x^\ell f_{\xi|X^*} (u| \tilde x) \right \vert du < C
	\end{align*}
	for $\ell \in \{0, \ldots, m\}$, where $C > 0$ is a universal constant. Since the derivatives of $f_{\xi|X^*}(u|x)$ are uniformly bounded (by condition \eqref{item:dom_ver:smoothness}), we just focus on showing
	\begin{align*}
		\int_{\underline u}^\infty \vert u \vert^m \sup_{\tilde x \in \suppX} \left \vert \nabla_x^\ell f_{\xi|X^*} (u| \tilde x) \right \vert du < C
	\end{align*}
	for some (generic) $\underline u > 0$ ($\int_{-\infty}^{-\underline u} \vert u \vert^m \sup_{\tilde x \in \suppX} \left \vert \nabla_x^\ell f_{\xi|X^*} (u| \tilde x) \right \vert du$ can be bounded using a similar argument).	

	Starting with $\ell = 0$, we verify
	\begin{align*}
		\int_{\underline u}^\infty \vert u \vert^m \sup_{\tilde x \in \suppX} f_{\xi|X^*} (u| \tilde x) du < C.
	\end{align*}
	Notice that
	\begin{align*}
		\sup_{x \in \suppX} f_{\xi|X^*} (u|x) = \sup_{s \in \Sigma} \left\{ \frac{1}{s} f_\zeta \left(\frac{u}{s}\right) \right\},
	\end{align*}
	where $\Sigma = \{\sigma(x): x \in \suppX\}$ is bounded. Let $s^* = \sup_{x \in \suppX} \sigma(x)$. Condition \eqref{item:dom_ver:monotonicity} implies that there exists $\underline u > 0$ such that for all $\vert u \vert > \underline u$ we have
	\begin{align*}
		\sup_{s \in \Sigma} \left\{ \frac{1}{s} f_\zeta \left(\frac{u}{s}\right) \right\} = \frac{1}{s^*} f_{\zeta} \left(\frac{u}{s^*}\right).
	\end{align*}
	Then,
	\begin{align*}
		\int_{\underline u}^\infty \vert u \vert^m \sup_{\tilde x \in \suppX} f_{\xi|X^*} (u| \tilde x) du = \int_{\underline u}^\infty \vert u \vert^m  \frac{1}{s^*} f_{\zeta} \left(\frac{u}{s^*}\right) du < C,
	\end{align*}
	where the last inequality is due to condition \eqref{item:dom_ver:moments}.

	Next, we verify that
	\begin{align*}
		\int_{\underline u}^\infty \vert u \vert^m \sup_{\tilde x \in \suppX} \left\vert \nabla_x^{\ell} f_{\xi|X^*} (u| \tilde x) \right\vert  du < C
	\end{align*}
	for $\ell \in \{1 , \ldots, m\}$.
	Since the derivatives of $\sigma(\cdot)$ are bounded, we have
	\begin{align*}
		\sup_{\tilde x \in \suppX} \left\vert \nabla_x^{\ell} f_{\xi|X^*} (u| \tilde x) \right\vert \leqslant C \sum_{k=1}^\ell \sup_{s \in \Sigma} \left \vert \nabla_s^{k}  \left\{ \frac{1}{s} f_\zeta \left(\frac{u}{s}\right) \right\} \right \vert,
	\end{align*}
	so it is sufficient to verify
	\begin{align*}
		\int_{\underline u}^{\infty} \vert u\vert^m \sup_{s \in \Sigma} \left \vert \nabla_s^{\ell}  \left\{ \frac{1}{s} f_\zeta \left(\frac{u}{s}\right) \right\} \right \vert du < C,
	\end{align*}
	for $\ell \in \{1, \ldots, m\}$. Next,
	\begin{align*}
		\nabla_s^{\ell}  \left\{ \frac{1}{s} f_\zeta \left(\frac{u}{s}\right) \right\} = \sum_{k=0}^{\ell} a_{\ell k} \frac{u^k}{s^{\ell + k + 1}} f^{(k)} \left(\frac{u}{s}\right)
	\end{align*}
	for some constants $a_{\ell k}$. Then, condition \eqref{item:dom_ver:monotonicity} implies that there exists $\underline u > 0$ such that for all $\vert u \vert > \underline u$ we have
	\begin{align*}
		\sup_{s \in \Sigma} \left \vert \nabla_s^{\ell}  \left\{ \frac{1}{s} f_\zeta \left(\frac{u}{s}\right) \right\} \right \vert = \left \vert \nabla_s^{\ell}  \left\{ \frac{1}{s} f_\zeta \left(\frac{u}{s}\right) \right\} \right \vert_{s = s^*} = \left \vert \sum_{k=0}^{\ell} a_{\ell k} \frac{u^k}{(s^*)^{\ell + k + 1}} f^{(k)} \left(\frac{u}{s^*}\right) \right \vert.
	\end{align*}
	Then,
	\begin{align*}
		\int_{\underline u}^\infty \vert u \vert^m \sup_{\tilde x \in \suppX} \left\vert \nabla_x^{\ell} f_{\xi|X^*} (u| \tilde x) \right\vert  du &< C \int_{\underline u}^\infty \sum_{k=0}^\ell \frac{\vert u\vert^{m+k}}{(s^*)^{\ell + k + 1}} \left \vert f_\zeta^{(k)} \left(\frac{u}{s^*}\right) \right \vert du \\ &< C \sum_{k=0}^\ell \int \vert \zeta \vert^{m+k} \left \vert f_\zeta^{(k)} \left(\zeta\right) \right \vert d\zeta \\
		&<C,
	\end{align*}
	where the last inequality is due to condition \eqref{item:dom_ver:moments}.
\end{proof}

\end{appendices}%

\end{document}